\newcites{article}{article references}
\newcites{book}{book references}
\newcites{misc}{misc references}
\newcites{repo}{repository references}
\newcites{web}{website references}
\newcites{other}{Other references}
\newcolumntype{C}{@{}>{\hfil$}p{0.7cm}<{$\hfil}@{}}
\newcommand\tikznode[3][]%
\tikzset{>=stealth}
\newcommand\cl{\cellcolor{clight1}}
\definecolor{clight1}{RGB}{212, 237, 244}%
\newcommand\cs{\cellcolor{clight2}}
\definecolor{clight2}{RGB}{255,221,225}
\newcommand\ce{\cellcolor{clight3}}
\definecolor{clight3}{HTML}{C5E1A5}
\definecolor{mygreen}{RGB}{116,183,33}
\newtheorem{theo}{Theorem}
\newtheorem{lemma}{Lemma}
\newtheorem{exam}{Example}
\newtheorem{defi}{Definition}
\newtheorem{rem}{Remark}
\newtheorem{cons}{Construction}
\Crefname{theo}{Theorem}{Theorems}
\Crefname{exam}{Example}{Examples}
\Crefname{cons}{Construction}{Construction}
\begin{document}

\title{Spatially-Coupled QLDPC Codes}
\author{Siyi Yang}
\affiliation{Duke Quantum Center, Duke University, Durham, NC 27708, USA}
\email{siyi.yang@duke.edu}
\orcid{0000-0001-7512-1913}
\author{Robert Calderbank}
\affiliation{Duke Quantum Center, Duke University, Durham, NC 27708, USA}
\email{robert.calderbank@duke.edu}
\orcid{0000-0003-2084-9717}

\maketitle

\begin{abstract}
Spatially-coupled (SC) codes is a class of convolutional LDPC codes that has been well investigated in classical coding theory thanks to their high performance and compatibility with low-latency decoders. We describe toric codes as quantum counterparts of classical two-dimensional spatially-coupled (2D-SC) codes, and introduce spatially-coupled quantum LDPC (SC-QLDPC) codes as a generalization. We use the convolutional structure to represent the parity check matrix of a 2D-SC code as a polynomial in two indeterminates, and derive an algebraic condition that is both necessary and sufficient for a 2D-SC code to be a stabilizer code. This algebraic framework facilitates the construction of new code families. While not the focus of this paper, we note that small memory facilitates physical connectivity of qubits, and it enables local encoding and low-latency windowed decoding. In this paper, we use the algebraic framework to optimize short cycles in the Tanner graph of 2D-SC hypergraph product (HGP) codes that arise from short cycles in either component code. While prior work focuses on QLDPC codes with rate less than 1/10, we construct 2D-SC HGP codes with small memories, higher rates (about 1/3), and superior thresholds.
\end{abstract}

\section{Introduction}
\label{sec: introduction}

Quantum computers promise to be more capable of solving certain problems than any classical computer, and the theory of fault-tolerant quantum computation describes how a quantum computer with an arbitrarily low error rate can be built from faulty parts. Quantum error correction is an essential building block, and the class of stabilizer codes has been studied extensively \cite{calderbank1997quantum,calderbank1998quantum,gottesman1997stabilizer}. A stabilizer code is the fixed subspace of a commutative subgroup of the Pauli group, and a CSS (Calderbank-Shor-Steane) code is a particular type of stabilizer code where the generators can be separated into strictly X-type and strictly Z-type operators \cite{calderbank1997quantum,calderbank1998quantum}. 

Geometric locality of stabilizers is important in superconducting qubit platforms, and this design principle has encouraged the development of topological codes. Kitaev \cite{kitaev1997quantum,kitaev2003fault} introduced toric codes, a class of topological codes defined on a two-dimensional spin lattice. Bravyi et al. \cite{bravyi1998quantum} modified these codes to allow physical layout of qubits in the Euclidean plane. Surface codes \cite{dennis2002topological,freedman2002z2,bravyi2010tradeoffs,fowler2012surface,horsman2012surface} are a broad class of topological codes defined by tilings over a closed surface, where holes in the surface correspond to logical operators, and the genus determines code rate. Delfosse and Breuckmann \cite{delfosse2010quantum,terhal2015quantum,delfosse2016generalized,breuckmann2017hyperbolic} introduced hyperbolic and semi-hyperbolic surface codes that use tilings of a closed hyperbolic surface to improve the overhead and logical error rate of toric codes.

Surface codes have very low rates, and quantum LDPC (QLDPC) codes enable higher rates while preserving locality of stabilizers. Tillich and Zemor \cite{tillich2014hypergraph} introduced hypergraph product codes (the hypergraph of two classical LDPC codes), after recognizing the Tanner graph of a surface code as the graph product of two repetition codes. The minimum distance of a hypergraph product code grows with the square root of the block length. Leverrier \cite{leverrier2022xyz} has developed XYZ codes as a three-dimensional extension of hypergraph product codes. Hypergraph product codes can be viewed as a special case of codes subsequently constructed from more general ``products'' of component codes.

There has been a great deal of research on constructing codes that achieve a target scaling of minimum distance $D$ and number of encoded bits $K$ with respect to the block length $N$. Constant rate QLDPC codes with $D\in O(N^{1/2}\log N)$ had been best possible for more than a decade \cite{freedman2002z2,evra2022decodable}, until CSS codes constructed from different homological products of classical codes are brought to our attention. The classical codes are either good random LDPC codes, or expander codes \cite{sipser1996expander} constructed from Cayley graphs of projective linear group or hyperbolic tessellations over closed surfaces \cite{zemor2009cayley,breuckmann2017hyperbolic}. Fibre bundle codes proposed by Hastings, Haah, and O’Donnell \cite{hastings2021fiber} are the first codes that improve the scaling exponents to into $K\in O\left(N^{3/5}\right)$ and $D\in O\left(N^{3/5}/ \textrm{polylog}(N)\right)$. Panteleev and Kalachev \cite{panteleev2022liftedproduct} demonstrated that a scaling of $K\in O(N^{\alpha}\log N)$, $D\in O(N^{1-\alpha/2}/\log N)$ is achievable for arbitrary $0\leq \alpha <1$ through the quasi-abelian lifted product codes. Breuckmann and Eberhardt proposed the balanced product codes and improved the scaling exponents to $K\in O(N^{\frac{4}{5}})$ and $D\in O(N^{\frac{3}{5}})$ \cite{breuckmann2021balanced}.

Whether constant rates and linear minimum distances could be simultaneously achieved through tensor products of two expander codes defined over non-commutative group rings have remained an open problem \cite{panteleev2022liftedproduct,breuckmann2021balanced}. Panteleev and Kalachev completely proved the existence of constant rate codes with linear distance scaling in \cite{panteleev2022asymptotically}, referred to as expander lifted product codes, obtained from tensor product of two Tanner codes with different local codes defined over the same graph. The local codes are chosen such that their classical product attains good local high-dimensional expansion property, motivated by the idea of high-dimensional expanders proposed in \cite{evra2022decodable}. Concurrently, Leverrier and Zemor \cite{zemor2022quantumtanner} introduced a simplification of the lifted product codes from \cite{panteleev2022asymptotically} called quantum Tanner codes using a left-right Cayley complex, which can be viewed as the balanced product of Cayley graphs or their double covers.

The recent blossoming of research on QLDPC codes revolves around constructions that guarantee minimum distance. However, classical LDPC codes are not designed to maximize minimum distance, and it does not matter if there is a modest number of low-weight codewords, as long as the noise is statistically unlikely to take the signal in a bad direction. Classical LDPC codes have come to dominate coding practice by focusing on typical errors in graph-based codes rather than on the worst-case errors (minimum distance) in algebraic codes. LDPC codes have succeeded by combining local encoding with iterative decoding. Gallager \cite{gallager1962low} first suggested using the adjacency matrix of a randomly chosen low-degree bipartite graph as a parity-check matrix. He also introduced belief propagation (BP) decoders, where the task of globally estimating the joint probability of errors (maximum likelihood decoding) is decomposed into parallel local approximations of marginal distributions at each bit (see \cite{richardson2001design,richardson2001capacity} for a comprehensive analysis of BP decoding). 

While BP decoders have been introduced for QLDPC codes \cite{poulin2008iterative,lidar2013quantum}, subgraphs of the Tanner graph that dominate decoding errors (objects) have not received much attention. Finite length optimization of QLDPC codes have also been overlooked, compared with the rich literature of removing detrimental objects in classical LDPC codes \cite{mitchell2015spatially,mitchell2017edge,beemer2017generalized,beemer2018design,hareedy2016general,hareedy2017high,hareedy2020channel,Yang2023breaking}. Motivated by the resemblance between toric codes and two dimensional spatially-coupled (SC) codes, we develop SC-QLDPC codes as the quantum analogue of the SC codes in classical world. The SC-QLDPC codes we consider here differ from the non-tail-biting codes introduced by Hagiwara \cite{hagiwara2011spatially}. We consider tail-biting codes to make it easier to increase minimum distance with short component codes. We consider time-invariant codes, since they admit a compact algebraic representation that facilitates systematic optimization. The structure of SC codes is consistent with the locality desired by quantum hardware and it enables systematic optimization with respect to the influence of short cycles. We now highlight our main contributions:

\textbf{Spatially-coupled QLDPC Codes}: We describe toric codes as quantum counterparts of classical two-dimensional spatially-coupled (2D-SC) codes, and introduce spatially-coupled QLDPC codes as a generalization. The parity check matrix of a SC-QLDPC code is assembled into component matrices, which are vertically concatenated into replicas, and these replicas are coupled in a convolutional framework (see \cite{lentmaier2010iterative}) The memory of the code is determined by the number of component matrices in a replica, and small memory simplifies physical connectivity of qubits, enables local encoding, and low-latency windowed decoding (see \cite{iyengar2011windowed,iyengar2012windowed,hassan2016non,wei2016design,zhu2017braided,klaiber2018avoiding,esfahanizadeh2020multi,tauz2020non,ram2022decoding}, for more information about windowed decoding in the classical world).

\textbf{Characteristic Functions}: We use the language of two-dimensional convolution to represent the parity check matrix of a 2D-SC code as polynomial $\mathbf{F}(U, V)$ in two indeterminates U, V, where the coefficients are matrices of Pauli matrices. We derive a simple algebraic condition that is necessary and sufficient for a 2D-SC code to be a stabilizer code, and our algebraic framework facilitates the construction of new code families. Characteristic functions are similar in spirit to the Laurent polynomial representation of translation-invariant codes and to the generator polynomials of quantum convolutional codes \cite{haah2013commuting,haah2016algebraic,ollivier2003description}. We observe that the quasi-abelian lifted product codes proposed in \cite{panteleev2022liftedproduct} can be viewed as high-dimensional spatially-coupled hypergraph product (HGP) codes. Our algebraic framework provides insight into short cycles in the Tanner graph that arise from short cycles in either component codes. 

\textbf{Optimization Framework}: Short cycles in Tanner graphs create problems for belief propagation (BP) decoders in both the waterfall and error floor regions (see \cite{mitchell2015spatially,mitchell2017edge,beemer2017generalized,beemer2018design,hareedy2016general,hareedy2017high,hareedy2020channel,Yang2023breaking} for cycle optimization in the classical world). The fact that quasi-abelian lifted product codes can be viewed as finite-dimensional SC-HGP codes implies that they can be optimized through methods borrowed from classical world with respect to the number of short cycles, and high performance lifted product codes can be constructed even when the lifting exponents are restricted to a small set (corresponding to low memories). As an example, we provide an optimization framework for 2D-SC-HGP codes to minimize the number of short cycles in the Tanner graph that arise from short cycles in either component code, in the same way that generator functions facilitate analysis of the weight distributions of classical convolutional codes. There are short cycles in the Tanner graph that do not involve short cycles in the component codes, and these rigid cycles lead to a fundamental limit on performance (the rigid cycle limit). Simulation results on the quantum depolarizing channel demonstrate that our optimization method provides codes with small memory that approach the rigid cycle limit.

\textbf{High Rate Quantum LDPC Codes}: Research attention has focused on constructing codes that achieve linear scaling of minimum distance and block length and on performance evaluation of low rate codes, such as surface codes. SC-QLDPC codes combine higher rates ($1/3$ rather than $<1/10$) with high performance. 

The rest of the paper is organized as follows. In \Cref{sec: preliminaries}, we first introduce classical SC codes, then, after reviewing stabilizer codes, we describe toric codes as 2D-SC codes. In \Cref{sec: quantum 2d-sc codes}, we generalize to SC-QLDPC codes, and provide a necessary and sufficient condition for stabilizers to commute. We present two families of SC-QLDPC codes constructed from hypergraph product codes and XYZ codes, and connected the first class with the quasi-cyclic LP codes in \cite{panteleev2022liftedproduct}. In \Cref{sec: decoder}, we describe how to enumerate short cycles in the 2D-SC codes constructed from hypergraph product codes, first analyzing cycles of length $4$, $6$ and $8$, then developing an algorithm to minimize the number of these cycles. In \Cref{sec: simulation results}, we compare the performance of our optimized SC-QLDPC codes with that of non-optimized SC-QLDPC codes with the BP decoder described in \cite{poulin2008iterative}. In \Cref{sec:conclusion}, we conclude and discuss future research directions.

\section{Preliminaries}
\label{sec: preliminaries}

In this section, we briefly introduce the construction of classical SC-LDPC codes and stabilizer codes. In the remainder of this paper, let $\mathbb{N}$, $\mathbb{N}^*$ represent the nonnegative integers and positive integers, respectively. Let $\mathbb{F}_2$ be the Galois field of order $2$. Denote the residue modulo $d$ by $(\cdot)_d$. Denote the all zero matrix and all one matrix of size $m\times n$ by $\mathbf{0}_{m\times n}$, $\mathbf{1}_{m\times n}$, respectively. Let $(\mathbf{A})_{i,j}$ represent the $(i,j)$-th entry of matrix $\mathbf{A}$. The Kronecker product of matrices is denoted by $\otimes$.

\subsection{Classical SC-LDPC Codes}
\label{subsec: sc-ldpc codes}

Given a binary matrix $\mathbf{H}^{\textrm{P}}$, the parity check matrix of a \textbf{quasi-cyclic} (QC-)LDPC code is constructed by replacing each element $\left(\mathbf{H}^{\textrm{P}}\right)_{i,j}$ in $\mathbf{H}^{\textrm{P}}$ with a $z\times z$ block, where each block is either the zero matrix (if and only if $\left(\mathbf{H}^{\textrm{P}}\right)_{i,j}$=0) or some power of the circulant matrix defined in (\ref{eqn: circulant shift matrix}). The Tanner graph associated with $\mathbf{H}^{\textrm{P}}$ is referred to as the \textbf{protograph} of the QC code, and the QC code is said to be \textbf{lifted from} the protograph. 
\begin{equation}\label{eqn: circulant shift matrix}
    \sigma_z=\left[\begin{array}{cccc}
       0 & \cdots & 0 & 1 \\
       1 & \cdots & 0 & 0 \\
       \vdots & \ddots & \vdots & \vdots \\
       0 & \cdots & 1 & 0 
    \end{array}\right].
\end{equation}
An QC-SC code is determined by three $\gamma\times\kappa$ matrices $\mathbf{\Pi}$, $\mathbf{P}$ and $\mathbf{L}$. The \textbf{partitioning matrix} $\mathbf{P}$ specifies how the \textbf{base matrix} $\mathbf{B}$ is split into $m+1$ component matrices $\mathbf{H}^{\textrm{P}}_i$ of the protograph. The \textbf{lifting matrix $\mathbf{L}$} specifies how $\mathbf{H}^{\textrm{P}}_i$ are lifted into component matrices $\mathbf{H}_i$ in the QC-SC code, where $i=0,\dots,m$. The entries of $\mathbf{P}$ range from $0$ through $m$ and the entries of $\mathbf{L}$ range from $0$ through $z-1$. For $1\leq s\leq \gamma$, $1\leq t\leq \kappa$, the $(s,t)$-th element of $\mathbf{H}^{\textrm{P}}_i$ and the $(s,t)$-th block of $\mathbf{H}_i$ are specified as follows:
\begin{equation}
\begin{split}
    \left(\mathbf{H}^{\textrm{P}}_i\right)_{s,t}&=\begin{cases}
    (\mathbf{H})_{s,t},&\textrm{ if }(\mathbf{P})_{s,t}=i,\\
    0,&\textrm{ otherwise.}
    \end{cases},
    \textrm{ and }\\
    \mathbf{H}_{i;s,t}&=\begin{cases}
    \sigma_z^{(\mathbf{L})_{s,t}},&\textrm{ if }\left(\mathbf{H}^{\textrm{P}}_i\right)_{s,t}=1,\\
    \mathbf{0}_{z\times z},&\textrm{ otherwise.}
    \end{cases}
\end{split}
\end{equation}
The component matrices are vertically concatenated into a \textbf{replica}, and these replicas are coupled horizontally. SC codes are categorized as \textbf{tail-biting} (TB) or \textbf{non-tail-biting} (NTB) depending on the form taken by the parity check matrix, as shown in (\ref{eqn: intro_sc}). The \textbf{memory} of the SC code is $m$ and the number of replicas is referred to as the \textbf{coupling length} and is denoted by $L$.
\begin{widetext}
\begin{equation}
\label{eqn: intro_sc}
\mathbf{H}_{\textrm{TB}}=\left[\begin{array}{ccccccc}
\mathbf{H}_0 & \mathbf{0} & \cdots & \mathbf{0} & \mathbf{H}_m & \cdots & \mathbf{H}_1 \\
\mathbf{H}_1 & \mathbf{H}_0 & \mathbf{0} & \cdots & \mathbf{0} & \ddots & \vdots \\
\vdots & \mathbf{H}_1 & \mathbf{H}_0 & \ddots & \ddots & \ddots & \mathbf{H}_m \\
\mathbf{H}_m & \vdots & \ddots & \ddots & \mathbf{0} & \ddots & \mathbf{0} \\
\mathbf{0} & \mathbf{H}_m & \cdots & \mathbf{H}_1 & \mathbf{H}_0 & \ddots & \vdots \\
\vdots & \ddots & \ddots & \vdots & \ddots & \ddots & \mathbf{0} \\
\mathbf{0} & \cdots & \mathbf{0} & \mathbf{H}_m & \cdots & \mathbf{H}_1 & \mathbf{H}_0 \\
\end{array}\right],\  
\mathbf{H}_{\textrm{NTB}}=\left[\begin{array}{cccc}
\mathbf{H}_0 & \mathbf{0} & \cdots & \mathbf{0} \\
\mathbf{H}_1 & \mathbf{H}_0 & \ddots & \vdots \\
\vdots & \mathbf{H}_1 & \ddots & \mathbf{0} \\
\mathbf{H}_m & \vdots & \ddots & \mathbf{H}_0 \\
\mathbf{0} & \mathbf{H}_m & \ddots & \mathbf{H}_1 \\
\vdots & \ddots & \ddots & \vdots  \\
\mathbf{0} & \cdots & \mathbf{0} & \mathbf{H}_m  \\
\end{array}\right].
\end{equation}
\end{widetext}
\begin{exam}\label{exam: classical SC} Let $z=3$, $\gamma=2$, $\kappa=3$, $m=1$, and $L=4$. The $2\times 3$ base matrix $\mathbf{B}$, partitioning matrix $\mathbf{P}$, and lifting matrix $\mathbf{L}$ are given by:
\begin{equation}
\begin{split}
\mathbf{B}=\left[\begin{array}{ccc}
1 & 1 & 1\\
1 & 1 & 1
\end{array}\right],\ 
\mathbf{P}=\left[\begin{array}{ccc}
0 & 1 & 0\\
1 & 0 & 1
\end{array}\right],\\
\textrm{ and }
\mathbf{L}=\left[\begin{array}{ccc}
1 & 0 & 2\\
0 & 2 & 1
\end{array}\right].
\end{split}
\end{equation}

The protograph $\mathbf{H}^{\textup{P}}$ specified by $\mathbf{B}$ and $\mathbf{P}$ is given by:
\begin{equation}
\mathbf{H}^{\textup{P}}=\left[\begin{array}{cccc}
\mathbf{H}_0^{\textup{P}} & \mathbf{0} & \mathbf{0} & \mathbf{H}_1^{\textup{P}} \\
\mathbf{H}_1^{\textup{P}} & \mathbf{H}_0^{\textup{P}} & \mathbf{0} & \mathbf{0} \\
\mathbf{0} & \mathbf{H}_1^{\textup{P}} & \mathbf{H}_0^{\textup{P}} & \mathbf{0} \\
\mathbf{0} & \mathbf{0} & \mathbf{H}_1^{\textup{P}} & \mathbf{H}_0^{\textup{P}}
\end{array}\right],
\end{equation}
where
\begin{equation}
\mathbf{H}_0^{\textup{P}}=\left[\begin{array}{ccc}
1 & 0 & 1\\
0 & 1 & 0
\end{array}\right],\ 
\mathbf{H}_1^{\textup{P}}=\left[\begin{array}{ccc}
0 & 1 & 0\\
1 & 0 & 1
\end{array}\right].
\end{equation}
The parity check matrix $\mathbf{H}$ lifted from $\mathbf{H}^{\textup{P}}$ with respect to $\mathbf{L}$ is given by:
\begin{equation}
\mathbf{H}=\left[\begin{array}{cccc}
\mathbf{H}_0 & \mathbf{0} & \mathbf{0} & \mathbf{H}_1 \\
\mathbf{H}_1 & \mathbf{H}_0 & \mathbf{0} & \mathbf{0} \\
\mathbf{0} & \mathbf{H}_1 & \mathbf{H}_0 & \mathbf{0} \\
\mathbf{0} & \mathbf{0} & \mathbf{H}_1 & \mathbf{H}_0
\end{array}\right],
\end{equation}

where
\begin{equation}
\begin{split}
\mathbf{H}_0=\left[\begin{array}{ccc|ccc|ccc}
0 & 0 & 1 & 0 & 0 & 0 & 0 & 1 & 0\\
1 & 0 & 0 & 0 & 0 & 0 & 0 & 0 & 1\\
0 & 1 & 0 & 0 & 0 & 0 & 1 & 0 & 0\\
\hline
0 & 0 & 0 & 0 & 1 & 0 & 0 & 0 & 0\\
0 & 0 & 0 & 0 & 0 & 1 & 0 & 0 & 0\\
0 & 0 & 0 & 1 & 0 & 0 & 0 & 0 & 0\\
\end{array}\right],\\
\mathbf{H}_1=\left[\begin{array}{ccc|ccc|ccc}
0 & 0 & 0 & 1 & 0 & 0 & 0 & 0 & 0\\
0 & 0 & 0 & 0 & 1 & 0 & 0 & 0 & 0\\
0 & 0 & 0 & 0 & 0 & 1 & 0 & 0 & 0\\
\hline
1 & 0 & 0 & 0 & 0 & 0 & 0 & 0 & 1\\
0 & 1 & 0 & 0 & 0 & 0 & 1 & 0 & 0\\
0 & 0 & 1 & 0 & 0 & 0 & 0 & 1 & 0\\
\end{array}\right].
\end{split}
\end{equation}
\end{exam}

\subsection{Two-Dimensional SC Codes}

A memory $m_1$ SC code is said to be a \textbf{two-dimensional (2D) SC code} if each component matrix $\mathbf{H}_i$, $i=0,\dots,m_1$, is itself a memory $m_2$ SC code with component matrices $\mathbf{H}_{i,j}$, $j=0,\dots,m_2$. 

\begin{exam} Let $m_1=1$ and $m_2=1$. Let $\mathbf{H}_{0,0}=\mathbf{A}$, $\mathbf{H}_{0,1}=\mathbf{B}$, $\mathbf{H}_{1,0}=\mathbf{C}$, and $\mathbf{H}_{1,1}=\mathbf{D}$. The 2D-SC code is given by:
\begin{widetext}
\begin{equation}
\label{eqn: 2d sc}
\mathbf{H}_{\textrm{2D-SC}}=\left[\begin{array}{c|c|c|c}
\begin{array}{cccc}
\mathbf{A} & \mathbf{0} & \cdots & \mathbf{B} \\
\mathbf{B} & \mathbf{A} & \cdots & \mathbf{0} \\
\vdots & \ddots & \ddots & \vdots \\
\mathbf{0} & \cdots & \mathbf{B} & \mathbf{A}
\end{array} & \mathbf{0} & \cdots & \begin{array}{cccc}
\mathbf{C} & \mathbf{0} & \cdots & \mathbf{D} \\
\mathbf{D} & \mathbf{C} & \cdots & \mathbf{0} \\
\vdots & \ddots & \ddots & \vdots \\
\mathbf{0} & \cdots & \mathbf{D} & \mathbf{C}
\end{array} \\
\hline
\begin{array}{cccc}
\mathbf{C} & \mathbf{0} & \cdots & \mathbf{D} \\
\mathbf{D} & \mathbf{C} & \cdots & \mathbf{0} \\
\vdots & \ddots & \ddots & \vdots \\
\mathbf{0} & \cdots & \mathbf{D} & \mathbf{C}
\end{array}  & \begin{array}{cccc}
\mathbf{A} & \mathbf{0} & \cdots & \mathbf{B} \\
\mathbf{B} & \mathbf{A} & \cdots & \mathbf{0} \\
\vdots & \ddots & \ddots & \vdots \\
\mathbf{0} & \cdots & \mathbf{B} & \mathbf{A}
\end{array} & \cdots & \mathbf{0} \\
\hline
\vdots & \ddots & \ddots & \vdots \\
\hline
\mathbf{0} & \cdots & \begin{array}{cccc}
\mathbf{C} & \mathbf{0} & \cdots & \mathbf{D} \\
\mathbf{D} & \mathbf{C} & \cdots & \mathbf{0} \\
\vdots & \ddots & \ddots & \vdots \\
\mathbf{0} & \cdots & \mathbf{D} & \mathbf{C}
\end{array}  & \begin{array}{cccc}
\mathbf{A} & \mathbf{0} & \cdots & \mathbf{B} \\
\mathbf{B} & \mathbf{A} & \cdots & \mathbf{0} \\
\vdots & \ddots & \ddots & \vdots \\
\mathbf{0} & \cdots & \mathbf{B} & \mathbf{A}
\end{array}
\end{array}\right].
\end{equation}
\end{widetext}
\end{exam}

\subsection{The Pauli Group}
\label{subsec: stabilizer codes}

An $2\times 2$ Hermitian matrix can be uniquely expressed as a real linear combination of the four single qubit \textbf{Pauli matrices}:
\begin{equation}
\label{eqn: pauli group}
\begin{split}
I=\left[\begin{array}{cc}
1 & 0 \\
0 & 1
\end{array}\right]&,\ 
X=\left[\begin{array}{cc}
0 & 1 \\
1 & 0
\end{array}\right],\\
Y=\left[\begin{array}{cc}
0 & -i \\
i & 0
\end{array}\right]&,\ 
Z=\left[\begin{array}{cc}
1 & 0 \\
0 & -1
\end{array}\right].\ 
\end{split}
\end{equation}
The matrices satisfy $X^2=Y^2=Z^2=\mathbf{I}_2$, $XY=-YX$, $ZX=-XZ$, and $YZ=-ZY$.

Let $\mathbf{A}\otimes\mathbf{B}$ denote the Kronecker (tensor) product of two  matrices $\mathbf{A}$ and $\mathbf{B}$. Let $\mathbf{a}=\left[a_1,\dots,a_N\right]$ and $\mathbf{b}=\left[b_1,\dots,b_N\right]$ be binary vectors. Define
\begin{equation}
\begin{split}
    &D(\mathbf{a},\mathbf{b})=X^{a_1}Z^{b_1}\otimes \cdots \otimes X^{a_N}Z^{b_N},\\
    & E(\mathbf{a},\mathbf{b})=i^{\mathbf{a}\mathbf{b}^{\textrm{T}}} D(\mathbf{a},\mathbf{b}).
\end{split}
\end{equation}
Then,
\begin{equation}
\begin{split}
E(\mathbf{a},\mathbf{b})^2&=i^{2\mathbf{a}\mathbf{b}^{\textrm{T}}} D(\mathbf{a},\mathbf{b})^2\\
&=i^{2\mathbf{a}\mathbf{b}^{\textrm{T}}} \left(i^{2\mathbf{a}\mathbf{b}^{\textrm{T}}}  \mathbf{I}_{2^N}\right)=\mathbf{I}_{2^N}.
\end{split}
\end{equation}

The \textbf{$N$-qubit Pauli Group} $\mathcal{P}_N$ is given by:
\begin{equation}
\begin{split}
    \mathcal{P}_N=\{i^k D(\mathbf{a},\mathbf{b})| \mathbf{a},\mathbf{b}\textrm{ are binary vectors}&, \\
    k=0,1,2,3&\}.
\end{split}
\end{equation}

We use the Dirac notation $|\cdot\rangle$ to represent the basis states of a single qubit in $\mathbb{C}^2$. The Pauli matrices act on a single qubit as $X|0\rangle=|1\rangle$, $X|1\rangle=|0\rangle$, $Z|0\rangle=|0\rangle$, and $Z|1\rangle=-|1\rangle$. For any $\mathbf{v}=\left[v_1,\dots,v_N\right]$ in $\mathbb{F}_2^N$, we define:
\begin{equation}
    |\mathbf{v}\rangle=|v_1\rangle \otimes |v_2\rangle \cdots \otimes |v_N\rangle,
\end{equation}
the standard basis vector with $1$ the position indexed by $\mathbf{v}$ and zeros elsewhere. We may write an arbitrary $N$ qubit quantum state as 
\begin{equation}
    |\phi\rangle=\sum_{\mathbf{v}\in\mathbb{F}_2^N} \alpha_{\mathbf{v}} |\mathbf{v}\rangle,
\end{equation}
where $\alpha_{\mathbf{v}}\in\mathbb{C}$ and $\sum_{\mathbf{v}\in\mathbb{F}_2^N} |\alpha_{\mathbf{v}}|^2=1$.

The \textbf{symplectic inner product} is $\langle\left(\mathbf{a},\mathbf{b}\right),\left(\mathbf{c},\mathbf{d}\right)\rangle_s=\mathbf{a}\mathbf{b}^{\textrm{T}}+\mathbf{c}\mathbf{d}^{\textrm{T}}$. Since $XZ=-ZX$ we have:
\begin{equation}
    E(\mathbf{a},\mathbf{b})E(\mathbf{c},\mathbf{d})=(-1)^{\langle\left(\mathbf{a},\mathbf{b}\right),\left(\mathbf{c},\mathbf{d}\right)\rangle_s}E(\mathbf{c},\mathbf{d})E(\mathbf{a},\mathbf{b}).
\end{equation}

\subsection{Stabilizer Codes}
We define a \textbf{stabilizer group} $S$ to be a commutative subgroup of the Pauli group $\mathcal{P}_N$, where every group element is Hermitian and no group element is $-I_{2^N}$. We say $S$ has a dimension $r$ if it can be generated by $r$ independent elements as $\mathcal{S}=\langle \nu_iE(\mathbf{c}_i,\mathbf{d}_i)|i=1,2,\dots,r\rangle$, where $\nu_i=\pm 1$ and $c_i,d_i\in\mathbb{F}_2^N$. Since $\mathcal{S}$ is commutative we must have:
\begin{equation}
\langle\left(\mathbf{c}_i,\mathbf{d}_i\right),\left(\mathbf{c}_j,\mathbf{d}_j\right)\rangle_s=\mathbf{c}_i\mathbf{d}_j^{\textrm{T}}+\mathbf{d}_i\mathbf{c}_j^{\textrm{T}}=0.
\end{equation}

Given a stabilizer group $\mathcal{S}$, the corresponding \textbf{stabilizer code} is the fixed subspace $\mathcal{V}(\mathcal{S})=\{|\phi\rangle\in\mathbb{C}^{2N}\big|g|\phi\rangle=|\phi\rangle\textrm{ for all }g\in\mathcal{S}\}$. We refer to the subspace $\mathcal{V}(\mathcal{S})$ as an $\left[\left[N,K\right]\right]$ stabilizer code, because it encodes $K=N-r$ logical qubits into $N$ physical qubits.

\section{Toric Codes as SC codes}

A \textbf{Toric code} is a CSS code defined on a $d\times d$ grid. In Fig.~\ref{fig Toric code}, edges represent qubits, edges incident to a face represent $X$-type generators, and edges incident to a vertex represent $Z$-type generators. The geometry of the grid implies that any pair of stabilizer generator commutes.

We organize the rows of the parity check matrix $\mathbf{H}$ for the $3\times 3$ toric code so that the structure of $\mathbf{H}$ resembles that of a 2D-SC code, as shown in (\ref{eqn: toric code eqn detailed}). Given the labels in Fig.~\ref{fig Toric code}, column $i$ represents the qubit associated with edge $i$. Row $2i-1$ represents the stabilizer generator associated with face $i$, and row $2i$ represents the generator associated with vertex $i$. For example, row $15$ represents $X_{10}\otimes X_{13}\otimes X_{14}\otimes X_{15}$ and row $10$ represents $Z_2\otimes Z_3\otimes Z_4\otimes Z_7$, and face $8$. 

\begin{figure}[hbtp]
\centering
\resizebox{0.3\textwidth}{!}{\begin{tikzpicture}[
  rdv/.style={circle, draw=black, fill=black!5, very thick, minimum size=7mm},
  rdvg/.style={circle, draw=black!30, fill=black!3, very thick, minimum size=7mm},
  rdvb/.style={circle, draw=blue, fill=blue!5, very thick, minimum size=7mm},
  sqv/.style={rectangle, draw=black, fill=black!5, very thick, minimum size=5mm},
  sqvr/.style={rectangle, draw=red, fill=red!5, very thick, minimum size=5mm},
  every edge quotes/.style = {auto},%, font=\footnotesize, sloped}
  ]
  %Nodes
  %\node[squarednode]      (maintopic)                              {2};
  %\node[roundnode]        (uppercircle)       [above=of maintopic] {1};
  %\node[squarednode]      (rightsquare)       [right=of maintopic] {3};
  %\node[roundnode]        (lowercircle)       [below=of maintopic] {4};

  \node[rdv] (v1) {1};
  \node[rdv] (v2) [right=of v1] {2};
  \node[rdv] (v3) [right=of v2] {3};
  \node[rdvg] (v4) [right=of v3] {1};

  \node[rdv] (v5) [below=of v1] {6};
  \node[rdv] (v6) [below=of v2] {4};
  \node[rdvb] (v7) [below=of v3] {5};
  \node[rdvg] (v8) [below=of v4] {6};

  \node[rdv] (v9) [below=of v5] {8};
  \node[rdv] (v10) [below=of v6] {9};
  \node[rdv] (v11) [below=of v7] {7};
  \node[rdvg] (v12) [below=of v8] {8};

  \node[rdvg] (v13) [below=of v9] {1};
  \node[rdvg] (v14) [below=of v10] {2};
  \node[rdvg] (v15) [below=of v11] {3};
  \node[rdvg] (v16) [below=of v12] {1};

  \node[sqv] (f1) [below left=3mm of v2] {1};
  \node[sqv] (f2) [below left=3mm of v3] {2};
  \node[sqv] (f3) [below left=3mm of v4] {3};

  \node[sqv] (f4) [below left=3mm of v6] {6};
  \node[sqv] (f5) [below left=3mm of v7] {4};
  \node[sqv] (f6) [below left=3mm of v8] {5};

  \node[sqvr] (f7) [below left=3mm of v10] {8};
  \node[sqv] (f8) [below left=3mm of v11] {9};
  \node[sqv] (f9) [below left=3mm of v12] {7};

  %Lines
  %\draw[-] (uppercircle.south) -- (maintopic.north);
  %\draw[-] (maintopic.east) -- (rightsquare.west);
  %\draw[-] (rightsquare.south) .. controls +(down:7mm) and +(right:7mm) .. (lowercircle.east);
  \draw[-,very thick] (v1.south) edge["5"] (v5.north);
  \draw[-,very thick] (v2.south) edge["1"] (v6.north);
  \draw[-,very thick,color=blue] (v3.south) edge["3"] (v7.north);
  \draw[-,very thick,color=black!30] (v4.south) edge["5"] (v8.north);

  \draw[-,very thick] (v5.south) edge["9"] (v9.north);
  \draw[-,very thick] (v6.south) edge["11"] (v10.north);
  \draw[-,very thick,color=blue] (v7.south) edge["7"] (v11.north);
  \draw[-,very thick,color=black!30] (v8.south) edge["9"] (v12.north);

  \draw[-,very thick,color=red] (v9.south) edge["13"] (v13.north);
  \draw[-,very thick,color=red] (v10.south) edge["15"] (v14.north);
  \draw[-,very thick] (v11.south) edge["17"] (v15.north);
  \draw[-,very thick,color=black!30] (v12.south) edge["13"] (v16.north);

  \draw[-,very thick,color=black!30] (v1.east) edge["14"] (v2.west);
  \draw[-,very thick,color=black!30] (v2.east) edge["16"] (v3.west);
  \draw[-,very thick,color=black!30] (v3.east) edge["18"] (v4.west);

  \draw[-,very thick] (v5.east) edge["6"] (v6.west);
  \draw[-,very thick,color=blue] (v6.east) edge["2"] (v7.west);
  \draw[-,very thick,color=blue] (v7.east) edge["4"] (v8.west);

  \draw[-,very thick,color=red] (v9.east) edge["10"] (v10.west);
  \draw[-,very thick] (v10.east) edge["12"] (v11.west);
  \draw[-,very thick] (v11.east) edge["8"] (v12.west);

  \draw[-,very thick,color=red] (v13.east) edge["14"] (v14.west);
  \draw[-,very thick] (v14.east) edge["16"] (v15.west);
  \draw[-,very thick] (v15.east) edge["18"] (v16.west);

\end{tikzpicture}}
  \caption{A toric code defines on a $3\times 3$ grid. Vertices are indicated by circles and faces are indicated by squares. Vertices/edges with the same index are glued together to create a torus with $18$ edges, $9$ vertices, and $9$ faces. Vertex $5$ specifies the generator $Z_2\otimes Z_3\otimes Z_4\otimes Z_7$ and face $8$ specifies generator $X_{10}\otimes X_{13}\otimes X_{14}\otimes X_{15}$.}   
  \label{fig Toric code}
\end{figure}

We have
\begin{equation}\label{eqn: toric eqn detailed}
    \mathbf{H}=\left[\begin{array}{ccc|ccc|ccc}
    A &   & B &   &   &   & C &   & D \\
    B & A &   &   &   &   & D & C &   \\
      & B & A &   &   &   &   & D & C \\
    \hline
    C &   & D & A &   & B &   &   &   \\
    D & C &   & B & A &   &   &   &   \\
      & D & C &   & B & A &   &   &   \\
    \hline
      &   &   & C &   & D & A &   & B \\
      &   &   & D & C &   & B & A &   \\
      &   &   &   & D & C &   & B & A \\
    \end{array}\right],
\end{equation}
where 
\begin{equation}\label{eqn: toric abcd}
\begin{split}
\mathbf{A}=\left[\begin{array}{cc}
X & I\\
I & I
\end{array}\right]&,\ 
\mathbf{B}=\left[\begin{array}{cc}
X & X\\
Z & I
\end{array}\right],\\ 
\mathbf{C}=\left[\begin{array}{cc}
I & X\\
Z & Z
\end{array}\right]&,\ 
\mathbf{D}=\left[\begin{array}{cc}
I & I\\
I & Z
\end{array}\right].
\end{split}
\end{equation}

\begin{widetext}
\begin{equation}\label{eqn: toric code eqn detailed}
\scalebox{0.72}{$\left[\begin{array}{c|c|c}
\begin{array}{c|c|c}
\begin{array}{cc}
X & I\\
I & I
\end{array}& & \begin{array}{cc}
X & X\\
Z & I
\end{array} \\
\hline
\begin{array}{cc}
X & X\\
Z & I
\end{array} & \begin{array}{cc}
X & I\\
I & I
\end{array} & \\
\hline
& \begin{array}{cc}
X & X\\
Z & I
\end{array} & \begin{array}{cc}
X & I\\
I & I
\end{array} \\
\end{array}& & \begin{array}{c|c|c}
\begin{array}{cc}
I & X\\
Z & Z
\end{array} & & \begin{array}{cc}
I & I\\
I & Z
\end{array}\\
\hline
\begin{array}{cc}
I & I\\
I & Z
\end{array} & \begin{array}{cc}
I & X\\
Z & Z
\end{array} & \\
\hline
& \begin{array}{cc}
I & I\\
I & Z
\end{array}& \begin{array}{cc}
I & X\\
Z & Z
\end{array} \\
\end{array} \\
\hline %%%%
\begin{array}{c|c|c}
\begin{array}{cc}
I & X\\
Z & Z
\end{array} & & \begin{array}{cc}
I & I\\
I & Z
\end{array}\\
\hline
\begin{array}{cc}
I & I\\
I & \textcolor{blue}{Z}
\end{array} & \begin{array}{cc}
I & X\\
\textcolor{blue}{Z} & \textcolor{blue}{Z}
\end{array} & \\
\hline
& \begin{array}{cc}
I & I\\
I & Z
\end{array}& \begin{array}{cc}
I & X\\
Z & Z
\end{array} \\
\end{array}& \begin{array}{c|c|c}
\begin{array}{cc}
X & I\\
I & I
\end{array}& & \begin{array}{cc}
X & X\\
Z & I
\end{array} \\
\hline
\begin{array}{cc}
X & X\\
\textcolor{blue}{Z} & I
\end{array} & \begin{array}{cc}
X & I\\
I & I
\end{array} & \\
\hline
& \begin{array}{cc}
X & X\\
Z & I
\end{array} & \begin{array}{cc}
X & I\\
I & I
\end{array} \\
\end{array} & \\
\hline %%%%%
& \begin{array}{c|c|c}
\begin{array}{cc}
I & X\\
Z & Z
\end{array} & & \begin{array}{cc}
I & I\\
I & Z
\end{array}\\
\hline
\begin{array}{cc}
I & I\\
I & Z
\end{array} & \begin{array}{cc}
I & \textcolor{red}{X}\\
Z & Z
\end{array} & \\
\hline
& \begin{array}{cc}
I & I\\
I & Z
\end{array}& \begin{array}{cc}
I & X\\
Z & Z
\end{array} \\
\end{array}& \begin{array}{c|c|c}
\begin{array}{cc}
X & I\\
I & I
\end{array}& & \begin{array}{cc}
X & X\\
Z & I
\end{array} \\
\hline
\begin{array}{cc}
\textcolor{red}{X} & \textcolor{red}{X}\\
Z & I
\end{array} & \begin{array}{cc}
\textcolor{red}{X} & I\\
I & I
\end{array} & \\
\hline
& \begin{array}{cc}
X & X\\
Z & I
\end{array} & \begin{array}{cc}
X & I\\
I & I
\end{array} \\
\end{array}  \\
\end{array}\right]$.}
\end{equation}
\end{widetext}

\begin{rem} \emph{(Extension to $d\times d$ grids)} The face/vertex labels at level $i$ are obtained by adding $d$ to the face/vertex labels at level $i-1$, then cycling to the right. Thus, a face and the vertex at the top left of the face receives the same label $l$ (just as in Fig.~\ref{fig Toric code}). To be precise, the $(i,j)$-th face, $i,j=1,\dots,d$, is labelled by $\big((i-1)d+(j-i)_d+1\big)$, where $(\cdot)_d$ denotes the residue modulo $d$. 

The edges at the left and at the bottom of the $(i,j)$-th face are then labelled consecutively as $2\big(i-1)d+(j-i-1)_d\big)+1$ and $2\big(i-1)d+(j-i-1)_d\big)+2$. The parity check matrix $\mathbf{H}$ of the $d\times d$ toric code takes the form in (\ref{eqn: 2d sc}), where $\mathbf{A}$, $\mathbf{B}$, $\mathbf{C}$, and $\mathbf{D}$ are given by (\ref{eqn: toric abcd}). It has the structure of a 2D-SC code with $(m_1,m_2,L_1,L_2)=(1,1,d,d)$.
\end{rem}

\section{Commutative Algebra}
\label{sec: quantum 2d-sc codes}

We begin by extending the symplectic inner product to matrices for which every entry is either the zero matrix $\mathbf{0}_2$ or a $2\times 2$ Pauli matrix. Given an $m_1\times n$ matrix $\mathbf{P}=\left[P_{i,k}\right]$ and an $m_2\times n$ matrix $\mathbf{Q}=\left[Q_{j,k}\right]$ define $\langle P,Q\rangle_s$ to be the $m_1\times m_2$ binary matrix given by:
\begin{equation}
\begin{split}
    \bigl(\langle \mathbf{P},\mathbf{Q}\rangle_s\bigr)_{i,j}&=\bigg\langle \big[\bigotimes_{k: P_{i,k}\neq\mathbf{0}_2} P_{i,k}\big], \big[\bigotimes_{k: Q_{i,k}\neq\mathbf{0}_2} Q_{j,k}\big]\bigg\rangle_s\\
    &=\sum_{k: P_{i,k}\neq \mathbf{0}_2\textrm{ and }Q_{i,k}\neq \mathbf{0}_2} \langle P_{i,k},Q_{j,k}\rangle_s.
\end{split}
\end{equation}

\begin{exam}\label{exam matrix multiplication} Let $m_1=1$, $m_2=2$, and $n=4$. Specify $\mathbf{P}\in\mathcal{P}^{1\times 4}$, $\mathbf{Q}\in\mathcal{P}^{2\times 4}$ as follows:
\begin{equation}
\begin{split}
\mathbf{P}&=\left[\begin{array}{cccc}
X & Y & Z & I
\end{array}\right],\\
\mathbf{Q}&=\left[\begin{array}{cccc}
I & X & Z & Y \\
Z & Y & X & X
\end{array}\right].   
\end{split}
\end{equation}
Then, 
\begin{widetext}
\begin{equation}
\begin{split}
\langle \mathbf{P},\mathbf{Q} \rangle_s=&\left[\begin{array}{cc}
\langle X,I \rangle_s+\langle Y,X \rangle_s+\langle Z,Z \rangle_s+\langle I,Y \rangle_s &
\langle X,Z \rangle_s+\langle Y,Y \rangle_s+\langle Z,X \rangle_s+\langle I,X \rangle_s
\end{array}\right]\\
=&\left[\begin{array}{cc}
0+1+0+0 & 1+0+1+0
\end{array}\right]=\left[\begin{array}{cc}
1 & 0
\end{array}\right].
\end{split}
\end{equation}
\end{widetext}
\end{exam}

\begin{exam}\label{exam toric code theo: 1} (\textbf{Toric codes as 2D-SC codes}) We verify that the matrix $\mathbf{H}$ given below (recall (\ref{eqn: 2d sc})) is the parity check matrix of a stabilizer code by checking five symplectic inner products. Every row of $\mathbf{H}$ determines a Pauli matrix, and the requirement that generators commute translates to the requirement that five symplectic inner product vanish. The matrices $\mathbf{A},\mathbf{B},\mathbf{C},\mathbf{D}$ are defined in (\ref{eqn: toric abcd}).

\begin{widetext}
\begin{equation*}
\mathbf{H}=\left[\begin{array}{c|c|c|c}
\begin{array}{cccc}
\cl\tikznode{11}{$\mathbf{A}$} & \mathbf{0} & \cdots & \mathbf{B} \\
\mathbf{B} & \mathbf{A} & \cdots & \mathbf{0} \\
\vdots & \ddots & \ddots & \vdots \\
\mathbf{0} & \cdots & \cl\tikznode{31}{$\mathbf{B}$} & \cl\tikznode{32}{$\mathbf{A}$}
\end{array} & \mathbf{0} & \cdots & \begin{array}{cccc}
\mathbf{C} & \mathbf{0} & \cdots & \mathbf{D} \\
\mathbf{D} & \mathbf{C} & \cdots & \mathbf{0} \\
\vdots & \ddots & \ddots & \vdots \\
\mathbf{0} & \cdots & \mathbf{D} & \cl\tikznode{21}{$\mathbf{C}$}
\end{array} \\
\hline
\begin{array}{cccc}
\mathbf{C} & \mathbf{0} & \cdots & \mathbf{D} \\
\cl\tikznode{12}{$\mathbf{D}$} & \mathbf{C} & \cdots & \mathbf{0} \\
\vdots & \ddots & \ddots & \vdots \\
\mathbf{0} & \cdots & \cl\tikznode{33}{$\mathbf{D}$} & \cl\tikznode{34}{$\mathbf{C}$}
\end{array}  & \begin{array}{cccc}
\mathbf{A} & \mathbf{0} & \cdots & \mathbf{B} \\
\mathbf{B} & \mathbf{A} & \cdots & \mathbf{0} \\
\vdots & \ddots & \ddots & \vdots \\
\mathbf{0} & \cdots & \mathbf{B} & \mathbf{A}
\end{array} & \cdots & \mathbf{0} \\
\hline
\vdots & \ddots & \ddots & \vdots \\
\hline
\mathbf{0} & \cdots & \begin{array}{cccc}
\cl\tikznode{41}{$\mathbf{C}$} & \mathbf{0} & \cdots & \mathbf{D} \\
\cl\tikznode{42}{$\mathbf{D}$} & \mathbf{C} & \cdots & \mathbf{0} \\
\vdots & \ddots & \ddots & \vdots \\
\mathbf{0} & \cdots & \cl\tikznode{51}{$\mathbf{D}$} & \cl\tikznode{52}{$\mathbf{C}$}
\end{array}  & \begin{array}{cccc}
\cl\tikznode{43}{$\mathbf{A}$} & \mathbf{0} & \cdots & \cl\tikznode{22}{$\mathbf{B}$} \\
\cl\tikznode{44}{$\mathbf{B}$} & \mathbf{A} & \cdots & \mathbf{0} \\
\vdots & \ddots & \ddots & \vdots \\
\mathbf{0} & \cdots & \cl\tikznode{53}{$\mathbf{B}$} & \cl\tikznode{54}{$\mathbf{A}$}
\end{array}
\end{array}\right].
\end{equation*}
\end{widetext}

\begin{tikzpicture}[remember picture,overlay,blue,rounded corners]
  % explicit coordinates are relative to the end of arrow,
  % they do not accumulate (note the single + preceding the coords)
  % "Pivot positions"
  \draw[<-,shorten <=1pt] (11)
    -- +(-0.4,0)% short line to the right
    |- +(-1,0.4)% short up and long right
    coordinate (p1)% remember position for other arrows
    node[left] {$(1)$};
  \draw[<-,shorten <=1pt] (12)
    -- +(-0.4,0)% short line to the right
    |- (p1);% up and right to pp

  \draw[<-,shorten <=1pt] (21)
    -- +(0.4,0)% short line to the right
    |- +(1,0.4)% short up and long right
    coordinate (p2)% remember position for other arrows
    node[right] {$(2)$};
  \draw[<-,shorten <=1pt] (22)
    -- +(0.4,0)% short line to the right
    |- (p2);% up and right to pp

  \draw[<-,shorten <=1pt] (33)
    -- +(-0.4,0)% short line to the right
    |- +(-2.5,-0.4)% short up and long right
    coordinate (p3)% remember position for other arrows
    node[left] {$(3)$};
  \draw[<-,shorten <=1pt] (32)
    -- +(-0.4,0)% short line to the right
    |- (p3);% up and right to pp
  \draw[<-,shorten <=1pt] (31)
    -- +(-0.4,0)% short line to the right
    |- (p3);% up and right to pp
  \draw[<-,shorten <=1pt] (34)
    -- +(-0.4,0)% short line to the right
    |- (p3);% up and right to pp

  \draw[<-,shorten <=1pt] (42)
    -- +(0.4,0)% short line to the right
    |- +(6.4,-0.4)% short up and long right
    coordinate (p4)% remember position for other arrows
    node[right] {$(4)$};
  \draw[<-,shorten <=1pt] (41)
    -- +(0.4,0)% short line to the right
    |- (p4);% up and right to pp
  \draw[<-,shorten <=1pt] (43)
    -- +(0.4,0)% short line to the right
    |- (p4);% up and right to pp
  \draw[<-,shorten <=1pt] (44)
    -- +(0.4,0)% short line to the right
    |- (p4);% up and right to pp

  \draw[<-,shorten <=1pt] (54)
    -- +(0.4,0)% short line to the right
    |- +(1,-0.4)% short up and long right
    coordinate (p5)% remember position for other arrows
    node[right] {$(5)$};
  \draw[<-,shorten <=1pt] (52)
    -- +(0.4,0)% short line to the right
    |- (p5);% up and right to pp
  \draw[<-,shorten <=1pt] (53)
    -- +(0.4,0)% short line to the right
    |- (p5);% up and right to pp
  \draw[<-,shorten <=1pt] (51)
    -- +(0.4,0)% short line to the right
    |- (p5);% up and right to pp
 
\end{tikzpicture}

We require 
\begin{enumerate}[label=(\arabic*)]
  \item $\langle \mathbf{A},\mathbf{D}\rangle_s=\mathbf{0}_{2\times 2}$;
  \item $\langle \mathbf{B},\mathbf{C}\rangle_s=\mathbf{0}_{2\times 2}$;
  \item $\langle \mathbf{A},\mathbf{C}\rangle_s+\langle \mathbf{B},\mathbf{D}\rangle_s=\mathbf{0}_{2\times 2}$;
  \item $\langle \mathbf{A},\mathbf{B}\rangle_s+\langle \mathbf{C},\mathbf{D}\rangle_s=\mathbf{0}_{2\times 2}$;
  \item $\langle \mathbf{A},\mathbf{A}\rangle_s+\langle \mathbf{B},\mathbf{B}\rangle_s+\langle \mathbf{C},\mathbf{C}\rangle_s+\langle \mathbf{D},\mathbf{D}\rangle_s=\mathbf{0}_{2\times 2}$.
\end{enumerate}

\end{exam}

\begin{rem} More generally, if $\mathbf{A}$, $\mathbf{B}$, $\mathbf{C}$, $\mathbf{D}$ are $r\times n$ matrices satisfying the above conditions (with $\mathbf{0}_{2\times 2}$ replaced by $\mathbf{0}_{r\times r}$), then the 2D-SC code with component matrices $(\mathbf{H}_{0,0},\mathbf{H}_{0,1},\mathbf{H}_{1,0},\mathbf{H}_{1,1})=(\mathbf{A}, \mathbf{B}, \mathbf{C}, \mathbf{D})$ is a stabilizer code.
\end{rem}

\begin{theo}\label{theo: Quantum 2D SC Codes} Let $\mathcal{P}$ be the Pauli group on a single qubit, and let $\mathbf{H}_{i,j}\in \mathcal{P}^{r\times n}$, $i=0,\dots,m_1$, $j=0,\dots,m_2$, be the component matrices of a 2D-SC code with parameters $(m_1,m_2,L_1,L_2)$. If the following conditions are satisfied:
\begin{equation}\label{eqn quantum 2d sc codes 1}
\begin{split}
\sum\nolimits_{i=0}^{m_1-d_1}\sum\nolimits_{j=0}^{m_2-d_2} \langle \mathbf{H}_{i,j},\mathbf{H}_{i+d_1,j+d_2} \rangle =\mathbf{0}_{r\times r}&,\\
\forall 0\leq d_1\leq m_1, 0\leq d_2\leq m_2&,
\end{split}
\end{equation}
\begin{equation}\label{eqn quantum 2d sc codes 2}
\begin{split}
\sum\nolimits_{i=0}^{m_1-d_1}\sum\nolimits_{j=0}^{m_2-d_2} \langle \mathbf{H}_{i+d_1,j},\mathbf{H}_{i,j+d_2} \rangle =\mathbf{0}_{r\times r}&,\\
\forall 0\leq d_1\leq m_1, 0\leq d_2\leq m_2&,
\end{split}
\end{equation}
then the 2D-SC code is a stabilizer code.
\end{theo}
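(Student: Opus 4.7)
The plan is to translate the commutation requirement for stabilizer generators into a column-by-column computation of symplectic inner products between rows of the 2D-SC parity check matrix $\mathbf{H}$, and then to group the resulting terms so that each partial sum matches the left-hand side of either (\ref{eqn quantum 2d sc codes 1}) or (\ref{eqn quantum 2d sc codes 2}). First I would index the rows of $\mathbf{H}$ by triples $(\ell_1, \ell_2, s)$, $\ell_1 \in \{0,\dots,L_1-1\}$, $\ell_2 \in \{0,\dots,L_2-1\}$, $s \in \{1,\dots,r\}$. Reading off from (\ref{eqn: 2d sc}), the $(\ell_1, \ell_2, s)$-th row carries nonzero length-$n$ blocks precisely at column positions $\bigl((\ell_1 - i)_{L_1},\, (\ell_2 - j)_{L_2}\bigr)$ for $(i, j) \in \{0,\dots,m_1\} \times \{0,\dots,m_2\}$, and the block occupying that column equals the $s$-th row of $\mathbf{H}_{i, j}$. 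By \Cref{subsec: stabilizer codes} the code is a stabilizer code if and only if the symplectic inner product of every pair of these rows vanishes.

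Next I fix two rows at positions $(\ell_1, \ell_2, s)$ and $(\ell_1', \ell_2', s')$ and set $(d_1, d_2) = (\ell_1' - \ell_1, \ell_2' - \ell_2)$. Expanding the inner product column-by-column, the overlapping columns correspond to index pairs $(i, i')$, $(j, j')$ with $i - i' \equiv d_1 \pmod{L_1}$ and $j - j' \equiv d_2 \pmod{L_2}$; at each such column the contribution is $\langle (\mathbf{H}_{i,j})_s,\, (\mathbf{H}_{i', j'})_{s'} \rangle_s$. The symmetry $\langle P, Q \rangle_s = \langle Q, P \rangle_s$ of the symplectic form on $\mathcal{P}$ lets me assume WLOG $d_1 \geq 0$. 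I would first assume the generic regime $L_1 > 2m_1$, $L_2 > 2m_2$, so that the only overlap is the ``direct'' one and no tail-biting wrap-around occurs.

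In that regime two sub-cases remain. If $d_2 \geq 0$, the overlap constraint forces $i' = i - d_1$ and $j' = j - d_2$ with $(i, j) \in \{d_1,\dots,m_1\} \times \{d_2,\dots,m_2\}$. Substituting $i' = i - d_1$, $j' = j - d_2$ and invoking symmetry identifies the total inner product with the $(s', s)$-entry of $\sum_{i'=0}^{m_1-d_1}\sum_{j'=0}^{m_2-d_2} \langle \mathbf{H}_{i', j'}, \mathbf{H}_{i'+d_1, j'+d_2}\rangle$, which is zero by (\ref{eqn quantum 2d sc codes 1}) at offset $(d_1, d_2)$. If $d_2 < 0$, the same bookkeeping with $\tilde{d}_2 := -d_2$ identifies the inner product with an entry of the LHS of (\ref{eqn quantum 2d sc codes 2}) at offset $(d_1, \tilde{d}_2)$, again zero. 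The case $(d_1, d_2) = (0, 0)$ with $s \neq s'$ is handled by (\ref{eqn quantum 2d sc codes 1}) at $(0, 0)$, noting that $\langle P, P \rangle_s = 0$ for any single-qubit Pauli so the diagonal $s = s'$ terms are automatic.

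The main obstacle is the tail-biting wrap-around when $L_1 \leq 2m_1$ or $L_2 \leq 2m_2$: a single pair of rows can then overlap at several cyclic levels simultaneously, and the inner product becomes a sum of contributions indexed by those levels. I would handle this either by stating the mild assumption $L_1 > 2m_1$, $L_2 > 2m_2$ explicitly, or by decomposing each wrap-around contribution as a partial sum of the LHS of (\ref{eqn quantum 2d sc codes 1}) or (\ref{eqn quantum 2d sc codes 2}) at an appropriate offset $(d_1, d_2) \in \{0, \dots, m_1\} \times \{0, \dots, m_2\}$, and invoking the hypotheses at every such offset to conclude each piece vanishes. Beyond this bookkeeping step the argument is a routine accounting of shifts.
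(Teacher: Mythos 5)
Your proposal is correct and takes essentially the same route as the paper: the paper's entire proof consists of the assertion that condition~(\ref{eqn quantum 2d sc codes 1}) ``translates to verifying pairwise orthogonality of blocks of rows'' plus a picture (Fig.~\ref{fig: theorem 2d sc}), and you are filling in exactly that bookkeeping --- indexing rows by $(\ell_1,\ell_2,s)$, expanding the symplectic inner product column-by-column, and matching the resulting diagonal sums against the left-hand sides of~(\ref{eqn quantum 2d sc codes 1}) and~(\ref{eqn quantum 2d sc codes 2}) according to the sign pattern of the level offsets.

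Two small remarks. First, there is a sign slip in your bookkeeping: from $(\ell_1 - i)_{L_1} = (\ell_1' - i')_{L_1}$ one gets $i' - i \equiv d_1 \pmod{L_1}$, so in the non-wrapping, $d_1 \geq 0$ regime the relation is $i' = i + d_1$ with $i \in \{0,\dots,m_1-d_1\}$, not $i' = i - d_1$ with $i \in \{d_1,\dots,m_1\}$; your final identification of the sum with the $(s,s')$- or $(s',s)$-entry of $\sum_{i,j}\langle\mathbf{H}_{i,j},\mathbf{H}_{i+d_1,j+d_2}\rangle$ is nonetheless correct, since the two parametrizations describe the same diagonal. Second, your tail-biting caveat is real, and you handle it correctly: when $L_1 \leq 2m_1$ or $L_2 \leq 2m_2$ the overlap equation $i'-i \equiv d_1 \pmod{L_1}$, $j'-j \equiv d_2 \pmod{L_2}$ admits several solutions $(\delta_1,\delta_2)=(d_1+k_1L_1,d_2+k_2L_2)$ in $[-m_1,m_1]\times[-m_2,m_2]$, and the row-pair inner product splits as a sum over these solutions; each such piece is the \emph{entire} left-hand side of~(\ref{eqn quantum 2d sc codes 1}) or~(\ref{eqn quantum 2d sc codes 2}) at offset $(|\delta_1|,|\delta_2|)$ --- not a partial sum of it --- so each piece vanishes under the hypotheses and the argument closes. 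This is a subtlety the paper's figure-based proof silently ignores; being explicit about it is an improvement.
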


\begin{rem} When $r=n=2$, $m_1=m_2=1$, and $L_1=L_2=d$, then the conditions (\ref{eqn quantum 2d sc codes 1}) and (\ref{eqn quantum 2d sc codes 2}) reduce to conditions (1)-(5) in \Cref{exam toric code theo: 1}.
\end{rem}

\begin{proof} Conditions (\ref{eqn quantum 2d sc codes 1}) translates to verifying pairwise orthogonality of blocks of rows, as shown in Fig.~\ref{fig: theorem 2d sc}. Condition (\ref{eqn quantum 2d sc codes 2}) translates to a similar visualization, and we omit the details. 
\end{proof}

\begin{figure}
\centering
\resizebox{0.48\textwidth}{!}{\begin{tikzpicture}[darkstyle/.style={circle,draw,fill=gray!10,very thick,minimum size=20}]

\foreach \x [count=\xi] in {-20,0}{
    \foreach \y [count=\yi] in {-10,10}{
        %\filldraw [blue] (0.5*\x,0.5*\y) circle (5pt);
        %\pgfmathtruncatemacro{\label}{2*\yi+\xi-2}
        \node[] (a\xi\yi) at (0.5*\x,0.5*\y) {};
        \pgfmathtruncatemacro{\label}{2*\yi+\xi-2}
        \node[] (na\xi\yi) at (0.5*\x,0.5*\y+0.5) {};
        %\pgfmathtruncatemacro{\label}{2*\yi+\xi-2}
        \node[] (wa\xi\yi) at (0.5*\x-0.5,0.5*\y) {};
        %\pgfmathtruncatemacro{\label}{2*\yi+\xi-2}
        \node[] (nwa\xi\yi) at (0.5*\x-0.5,0.5*\y+0.5) {};
    }
}

\node[] (aw) at (-20*0.5,-2*0.5) {};
\node[] (aww) at (-20*0.5-0.5,-2*0.5) {};
\node[] (ae) at (0,-2*0.5) {};
\node[] (as) at (-12*0.5,-10*0.5) {};
\node[] (an) at (-12*0.5,10*0.5) {};
\node[] at (-10*0.5,10*0.5+1) {\Large $L_1$};
\node[] at (-20*0.5-1,4*0.5) {\Large $m_1$};

\node[] (block1) at (-16*0.5,3*0.5) {};
\node[] (block1label) at (-16*0.5-0.3,3*0.5+0.3) {\Large $\mathbf{H}_{i}$};
\draw[draw=black, very thick] (block1) rectangle ++(1,-1);

\node[] (block2) at (-16*0.5,-2*0.5) {};
\node[] (block2label) at (-16*0.5-0.3,-2*0.5+0.3) {\Large $\mathbf{H}_{i+d_1}$};
\draw[draw=black, very thick] (block2) rectangle ++(1,-1);

\draw[color=black, very thick] (a11)--(a12)--(a22)--(a21)--(a11);
\draw[color=gray!50, very thick] (a12)--(a21);
\draw[color=gray!50, very thick] (aw)--(as);
\draw[color=gray!50, very thick] (an)--(ae);

\draw[stealth-stealth, color=blue, thick] (na12)--(na22);
\draw[stealth-stealth, color=blue, thick] (aww)--(wa12);

\foreach \x [count=\xi] in {10,20}{
    \foreach \y [count=\yi] in {3,13}{
        %\filldraw [blue] (0.5*\x,0.5*\y) circle (5pt);
        %\pgfmathtruncatemacro{\label}{2*\yi+\xi-2}
        \node[] (b\xi\yi) at (0.5*\x,0.5*\y) {};
        \pgfmathtruncatemacro{\label}{2*\yi+\xi-2}
        \node[] (nb\xi\yi) at (0.5*\x,0.5*\y+0.25) {};
        %\pgfmathtruncatemacro{\label}{2*\yi+\xi-2}
        \node[] (wb\xi\yi) at (0.5*\x-0.25,0.5*\y) {};
        %\pgfmathtruncatemacro{\label}{2*\yi+\xi-2}
        \node[] (nwb\xi\yi) at (0.5*\x-0.5,0.5*\y+0.5) {};
    }
}

\node[] (bw) at (10*0.5,7*0.5) {};
\node[] (bww) at (10*0.5-0.25,7*0.5) {};
\node[] (be) at (20*0.5,7*0.5) {};
\node[] (bs) at (14*0.5,3*0.5) {};
\node[] (bn) at (14*0.5,13*0.5) {};
\node[] at (15*0.5,13*0.5+0.75) {\Large $L_2$};
\node[] at (10*0.5-0.75,10*0.5) {\Large $m_2$};

\node[] (blockb1) at (12*0.5,9.5*0.5) {};
\node[] (blockb1label) at (12*0.5-0.3,9.5*0.5+0.3) {\Large $\mathbf{H}_{i,j}$};
\draw[draw=black, very thick] (blockb1) rectangle ++(0.5,-0.5);

\draw[color=black, very thick] (b11)--(b12)--(b22)--(b21)--(b11);
\draw[color=gray!50, very thick] (b12)--(b21);
\draw[color=gray!50, very thick] (bw)--(bs);
\draw[color=gray!50, very thick] (bn)--(be);

\draw[stealth-stealth, color=blue, thick] (bww)--(wb12);
\draw[stealth-stealth, color=blue, thick] (nb12)--(nb22);

\foreach \x [count=\xi] in {10,20}{
    \foreach \y [count=\yi] in {-13,-3}{
        %\filldraw [blue] (0.5*\x,0.5*\y) circle (5pt);
        %\pgfmathtruncatemacro{\label}{2*\yi+\xi-2}
        \node[] (c\xi\yi) at (0.5*\x,0.5*\y) {};
        \pgfmathtruncatemacro{\label}{2*\yi+\xi-2}
        \node[] (nc\xi\yi) at (0.5*\x,0.5*\y+0.25) {};
        %\pgfmathtruncatemacro{\label}{2*\yi+\xi-2}
        \node[] (wc\xi\yi) at (0.5*\x-0.25,0.5*\y) {};
        %\pgfmathtruncatemacro{\label}{2*\yi+\xi-2}
        \node[] (nwc\xi\yi) at (0.5*\x-0.5,0.5*\y+0.5) {};
    }
}

\node[] (cw) at (10*0.5,-9*0.5) {};
\node[] (cww) at (10*0.5-0.25,-9*0.5) {};
\node[] (ce) at (20*0.5,-9*0.5) {};
\node[] (cs) at (14*0.5,-13*0.5) {};
\node[] (cn) at (14*0.5,-3*0.5) {};
\node[] at (15*0.5,-3*0.5+0.75) {\Large $L_2$};
\node[] at (10*0.5-0.75,-6*0.5) {\Large $m_2$};

\node[] (blockc2) at (12*0.5,-9*0.5) {};
\node[] (blockc2label) at (12*0.5+0.5,-9*0.5+0.3) {\Large $\mathbf{H}_{i+d_1,j+d_2}$};
\draw[draw=black, very thick] (blockc2) rectangle ++(0.5,-0.5);

\draw[color=black, very thick] (c11)--(c12)--(c22)--(c21)--(c11);
\draw[color=gray!50, very thick] (c12)--(c21);
\draw[color=gray!50, very thick] (cw)--(cs);
\draw[color=gray!50, very thick] (cn)--(ce);

\draw[stealth-stealth, color=blue, thick] (cww)--(wc12);
\draw[stealth-stealth, color=blue, thick] (nc12)--(nc22);

\draw[-stealth, color=red, thick] (-16*0.5+1.2,3*0.5-0.5)--(9*0.5,8*0.5);
\draw[-stealth, color=red, thick] (-16*0.5+1.2,-2*0.5-0.5)--(9*0.5,-8*0.5);

\end{tikzpicture}}
\caption{Verifying pairwise orthogonality of blocks of rows in (\ref{eqn quantum 2d sc codes 1}).}
\label{fig: theorem 2d sc}
\end{figure}

\begin{rem} In the classical world, non-tail-biting (NTB) codes are associated with better thresholds than tail-biting (TB) codes, due to the existence of low degree check nodes at the two ends of the coupling chain (see the discussion of low threshold saturation in SC codes \cite{5695130,kumar2014threshold,lentmaier2010iterative}).
However, due to the commutativity conditions required by stabilizer codes, in the quantum world, NTB code suffer from low minimum distance. This is because component matrices must pairwise commute, and the block length of the stabilizer code $\mathcal{C}'$ generated by the component matrices is small ($n=N/L$). Any product of stabilizers in $\mathcal{C}'$ gives rise to a non-correctable error pattern in the NTB code. 
\end{rem}

\subsection{SC Codes on Quantum Hardware}\label{subsec: hardware}
Limitations on qubit connectivity present a significant challenge for implementing QLDPC codes across various quantum hardware platforms. However, the convolutional structure of SC codes allows them to be employed efficiently on diverse quantum architectures.

Superconducting platforms, such as those adopted by companies like IBM and Google, only permit entanglement between neighboring qubits, necessitating extensive swap gates when implementing QLDPC codes with arbitrary connections. IBM has proposed partitioning the Tanner graph into multiple planar subgraphs, where the count of these subgraphs, termed “thickness,” influences the implementation efficiency of QLDPC codes \cite{bravyi2024high}. For an SC code with memory parameters $(m_1,m_2)$, the thickness and maximum entanglement distances within each planar subgraph depend on $m_1$ and $m_2$. SC codes with short memories thus allow efficient implementations in superconducting architectures.

Neutral atom arrays, such as those developed by QuEra and PASQAL, offer improved QLDPC code implementation through reconfigurable atom arrays (RAAs) \cite{xu2024constant,wang2024atomique}. RAAs consist of a fixed atom array using spatial light modulators (SLM) and multiple movable arrays using acoustic-optic deflectors (AOD). The AOD arrays enable aligned qubit movement along rows and columns, though paths cannot cross within a single AOD array. As qubits in AOD and SLM arrays move in union, any pair within the Rydberg interaction distance can simultaneously form two-qubit gates using a Rydberg laser. By allocating qubits across AOD arrays and arranging their movements, all required two-qubit gates in QLDPC codes can be realized. The cost and fidelity of RAA implementation depend on atom movement overhead, making high parallelism essential. SC codes naturally support this parallelism by distributing ancilla qubits from different replicas across AOD arrays with identical qubit layouts. Given the arrangement of the base block code, the SC code’s configuration follows naturally.

For example, Fig.~\ref{fig: RAA1} and Fig.~\ref{fig: RAA2} show the implementation of edges in the component matrices $\mathbf{H}_{0,0}$ and $\mathbf{H}_{2,1}$, respectively, in an SC code with a base code size of $(9,4)$, memories $m_1 = m_2 = 2$, and coupling lengths $L_1 = L_2 = 5$. Blue and red dots represent data qubits and ancilla qubits, respectively. The data qubits are arranged by the SLM as a fixed $5 \times 5$ grid, with each $3 \times 3$ sub-array corresponding to the data qubits in a replica. The ancilla qubits are organized into a large AOD array, which is divided into $3 \times 3$ sub-arrays of size $2 \times 2$, and 16 smaller AOD arrays, each containing a $2 \times 2$ qubit layout, allowing for mobility to facilitate two-qubit gate formation. Qubits within Rydberg distance are enclosed in black ovals.

In Fig.\ref{fig: RAA1}, qubits within each $2 \times 2$ AOD array are positioned at the four corners of their corresponding $3 \times 3$ data qubit sub-array, enabling four two-qubit gates in $\mathbf{H}_{0,0}$. In Fig.\ref{fig: RAA2}, to implement two-qubit gates in $\mathbf{H}_{2,1}$, each AOD array is shifted two grids to the left and one grid up in a rotated layout. Within each $2 \times 2$ sub-array, the rows and columns are adjusted so that qubits in the second column align with the top and bottom qubits of the central column in the corresponding $3 \times 3$ data qubit sub-array, while qubits in the first column remain separate from all data nodes, thus activating two two-qubit gates in $\mathbf{H}_{2,1}$. 

In general, SC codes with small memories support highly parallel implementation on RAAs.

\begin{figure}
\centering
\resizebox{0.45\textwidth}{!}{\begin{tikzpicture}
    \def\bluesize{1} % Side length of each blue grid cell
    \def\redscale{0.9} % Scale factor for red squares (2/3 of blue grid size)

    % Draw blue grid lines only on edges
    \foreach \i in {0,1,...,5} {
        % Vertical lines
        \draw[blue, thick] (\i, 0) -- (\i, 5);
        % Horizontal lines
        \draw[blue, thick] (0, \i) -- (5, \i);
    }

    \draw[red, thick] (2 + 0.5*\bluesize - 0.5*\redscale*\bluesize, 0.5*\bluesize - 0.5*\redscale*\bluesize) rectangle ++(2*\bluesize+\redscale*\bluesize, 2*\bluesize+\redscale*\bluesize);

    % Draw the outlined red squares at the center of each blue grid cell
    \foreach \x in {0,1,...,4}{
        \foreach \y in {0,1,...,4}{
            \ifnum\x<2
                \draw[red, thick] 
                    (\x + 0.5*\bluesize - 0.5*\redscale*\bluesize, 
                     \y + 0.5*\bluesize - 0.5*\redscale*\bluesize)
                    rectangle ++(\redscale*\bluesize, \redscale*\bluesize);
            \else
                \ifnum\y>2
                    \draw[red, thick] 
                        (\x + 0.5*\bluesize - 0.5*\redscale*\bluesize, 
                         \y + 0.5*\bluesize - 0.5*\redscale*\bluesize)
                        rectangle ++(\redscale*\bluesize, \redscale*\bluesize);
                \fi
            \fi

            % Coordinates for the blue and red nodes
            \coordinate (blue11) at (\x + 0.2*\bluesize, \y + 0.2*\bluesize);
            \coordinate (blue21) at (\x + 0.2*\bluesize, \y + 0.5*\bluesize);
            \coordinate (blue31) at (\x + 0.2*\bluesize, \y + 0.8*\bluesize);
            \coordinate (blue12) at (\x + 0.5*\bluesize, \y + 0.2*\bluesize);
            \coordinate (blue22) at (\x + 0.5*\bluesize, \y + 0.5*\bluesize);
            \coordinate (blue32) at (\x + 0.5*\bluesize, \y + 0.8*\bluesize);            \coordinate (blue13) at (\x + 0.8*\bluesize, \y + 0.2*\bluesize);
            \coordinate (blue23) at (\x + 0.8*\bluesize, \y + 0.5*\bluesize);
            \coordinate (blue33) at (\x + 0.8*\bluesize, \y + 0.8*\bluesize);
            
            \coordinate (red11) at (\x + 0.2*\bluesize + 0.07*\bluesize, \y + 0.2*\bluesize);
            \coordinate (red12) at (\x + 0.8*\bluesize + 0.07*\bluesize, \y + 0.2*\bluesize);
            \coordinate (red21) at (\x + 0.2*\bluesize + 0.07*\bluesize, \y + 0.8*\bluesize);
            \coordinate (red22) at (\x + 0.8*\bluesize + 0.07*\bluesize, \y + 0.8*\bluesize);

            % Draw blue circle in the center of each blue grid cell
            \node[draw=blue, fill=blue, circle, inner sep=0.5pt] at (blue11) {};
            \node[draw=blue, fill=blue, circle, inner sep=0.5pt] at (blue21) {};
            \node[draw=blue, fill=blue, circle, inner sep=0.5pt] at (blue31) {};
            \node[draw=blue, fill=blue, circle, inner sep=0.5pt] at (blue12) {};
            \node[draw=blue, fill=blue, circle, inner sep=0.5pt] at (blue22) {};
            \node[draw=blue, fill=blue, circle, inner sep=0.5pt] at (blue32) {};
            \node[draw=blue, fill=blue, circle, inner sep=0.5pt] at (blue13) {};
            \node[draw=blue, fill=blue, circle, inner sep=0.5pt] at (blue23) {};
            \node[draw=blue, fill=blue, circle, inner sep=0.5pt] at (blue33) {};

            % Draw red circle adjacent to the blue circle (to the right)
            \node[draw=red, fill=red, circle, inner sep=0.5pt] at (red11) {};
            \node[draw=red, fill=red, circle, inner sep=0.5pt] at (red21) {};
            \node[draw=red, fill=red, circle, inner sep=0.5pt] at (red12) {};
            \node[draw=red, fill=red, circle, inner sep=0.5pt] at (red22) {};

            % Draw a black oval to enclose the blue and red node pair
            \draw[black, thick, rounded corners=1mm] 
                ($(blue11)!0.5!(red11)$) ellipse [x radius=0.1, y radius=0.06];
            \draw[black, thick, rounded corners=1mm] 
                ($(blue31)!0.5!(red21)$) ellipse [x radius=0.1, y radius=0.06];
            \draw[black, thick, rounded corners=1mm] 
                ($(blue13)!0.5!(red12)$) ellipse [x radius=0.1, y radius=0.06];
            \draw[black, thick, rounded corners=1mm] 
                ($(blue33)!0.5!(red22)$) ellipse [x radius=0.1, y radius=0.06];
        }
    }
\end{tikzpicture}}
\caption{Layout of an SC code with a $(9,4)$ base code, memories $m_1 = m_2 = 2$, and coupling lengths $L_1 = L_2 = 5$. Blue dots represent data qubits in a fixed spatial light modulator (SLM) array, while red dots denote ancilla qubits in movable acousto-optic deflector (AOD) arrays. Qubit pairs within each black oval are positioned within the Rydberg distance, enabling four two-qubit gates in $\mathbf{H}_{0,0}$ of the corresponding replica via a Rydberg laser.}
\label{fig: RAA1}
\end{figure}

\begin{figure}
\centering
\resizebox{0.45\textwidth}{!}{\begin{tikzpicture}
    \def\bluesize{1} % Side length of each blue grid cell
    \def\redscale{0.9} % Scale factor for red squares (2/3 of blue grid size)

    % Draw blue grid lines only on edges
    \foreach \i in {0,1,...,5} {
        % Vertical lines
        \draw[blue, thick] (\i, 0) -- (\i, 5);
        % Horizontal lines
        \draw[blue, thick] (0, \i) -- (5, \i);
    }

    \draw[red, thick] (0.5*\bluesize - 0.5*\redscale*\bluesize, 1+0.5*\bluesize - 0.5*\redscale*\bluesize) rectangle ++(2*\bluesize+\redscale*\bluesize, 2*\bluesize+\redscale*\bluesize);

    % Draw the outlined red squares at the center of each blue grid cell
    \foreach \x in {0,1,...,4}{
        \foreach \y in {0,1,...,4}{
            \ifnum\x>2
                \draw[red, thick] 
                    (\x + 0.5*\bluesize - 0.5*\redscale*\bluesize, 
                     \y + 0.5*\bluesize - 0.5*\redscale*\bluesize)
                    rectangle ++(\redscale*\bluesize, \redscale*\bluesize);
            \else
                \ifnum\y>3
                    \draw[red, thick] 
                        (\x + 0.5*\bluesize - 0.5*\redscale*\bluesize, 
                         \y + 0.5*\bluesize - 0.5*\redscale*\bluesize)
                        rectangle ++(\redscale*\bluesize, \redscale*\bluesize);
                \else
                    \ifnum\y<1
                    \draw[red, thick] 
                        (\x + 0.5*\bluesize - 0.5*\redscale*\bluesize, 
                         \y + 0.5*\bluesize - 0.5*\redscale*\bluesize)
                        rectangle ++(\redscale*\bluesize, \redscale*\bluesize);
                    \fi
                \fi
            \fi

            % Coordinates for the blue and red nodes
            \coordinate (blue11) at (\x + 0.2*\bluesize, \y + 0.2*\bluesize);
            \coordinate (blue21) at (\x + 0.2*\bluesize, \y + 0.5*\bluesize);
            \coordinate (blue31) at (\x + 0.2*\bluesize, \y + 0.8*\bluesize);
            \coordinate (blue12) at (\x + 0.5*\bluesize, \y + 0.2*\bluesize);
            \coordinate (blue22) at (\x + 0.5*\bluesize, \y + 0.5*\bluesize);
            \coordinate (blue32) at (\x + 0.5*\bluesize, \y + 0.8*\bluesize);            \coordinate (blue13) at (\x + 0.8*\bluesize, \y + 0.2*\bluesize);
            \coordinate (blue23) at (\x + 0.8*\bluesize, \y + 0.5*\bluesize);
            \coordinate (blue33) at (\x + 0.8*\bluesize, \y + 0.8*\bluesize);
            \coordinate (red12) at (\x + 0.5*\bluesize + 0.07*\bluesize, \y + 0.2*\bluesize);
            \coordinate (red22) at (\x + 0.5*\bluesize + 0.07*\bluesize, \y + 0.8*\bluesize);
            \coordinate (red11) at (\x + 0.35*\bluesize, \y + 0.2*\bluesize);
            \coordinate (red21) at (\x + 0.35*\bluesize, \y + 0.8*\bluesize);

            % Draw blue circle in the center of each blue grid cell
            \node[draw=blue, fill=blue, circle, inner sep=0.5pt] at (blue11) {};
            \node[draw=blue, fill=blue, circle, inner sep=0.5pt] at (blue21) {};
            \node[draw=blue, fill=blue, circle, inner sep=0.5pt] at (blue31) {};
            \node[draw=blue, fill=blue, circle, inner sep=0.5pt] at (blue12) {};
            \node[draw=blue, fill=blue, circle, inner sep=0.5pt] at (blue22) {};
            \node[draw=blue, fill=blue, circle, inner sep=0.5pt] at (blue32) {};
            \node[draw=blue, fill=blue, circle, inner sep=0.5pt] at (blue13) {};
            \node[draw=blue, fill=blue, circle, inner sep=0.5pt] at (blue23) {};
            \node[draw=blue, fill=blue, circle, inner sep=0.5pt] at (blue33) {};
            % Draw red circle adjacent to the blue circle (to the right)
            \node[draw=red, fill=red, circle, inner sep=0.5pt] at (red11) {};
            \node[draw=red, fill=red, circle, inner sep=0.5pt] at (red21) {};
            \node[draw=red, fill=red, circle, inner sep=0.5pt] at (red12) {};
            \node[draw=red, fill=red, circle, inner sep=0.5pt] at (red22) {};

            % Draw a black oval to enclose the blue and red node pair
            \draw[black, thick, rounded corners=1mm] 
                ($(blue12)!0.5!(red12)$) ellipse [x radius=0.1, y radius=0.06];
            \draw[black, thick, rounded corners=1mm] 
                ($(blue32)!0.5!(red22)$) ellipse [x radius=0.1, y radius=0.06];
        }
    }
\end{tikzpicture}}
\caption{Implementation of gates in $\mathbf{H}_{2,1}$ of the SC code shown in Fig.~\ref{fig: RAA1}. Each AOD array is shifted two grids left and one grid up in a rotated layout. Qubit pairs enclosed within each black oval are positioned within the Rydberg distance, allowing the formation of four two-qubit gates in $\mathbf{H}_{2,1}$ of the corresponding replica via a Rydberg laser.}
\label{fig: RAA2}
\end{figure}

\color{black}

\subsection{Characteristic Function}
\label{subsec: general characteristic }

In \Cref{theo: Quantum 2D SC Codes}, we showed that a 2D-SC code is a stabilizer code if and only if conditions (\ref{eqn quantum 2d sc codes 1}) and (\ref{eqn quantum 2d sc codes 2}) are satisfied. We now use the language of two-dimensional convolution to translate these conditions into a simple algebraic form that facilitates the construction and analysis of a large class of codes. We begin with the example of the $d\times d$ toric code.

We represent the $d\times d$ toric code as the polynomial $\mathbf{F}(U,V)$ in two indeterminates $U,V$, with coefficients in $\mathcal{P}^{2\times 2}$ given by 
\begin{equation}\label{eqn: characteristic 1}
\mathbf{F}(U,V)=\mathbf{A}+\mathbf{B}V+\mathbf{C}U+\mathbf{D}UV,
\end{equation}
where $\mathbf{A}$, $\mathbf{B}$, $\mathbf{C}$, and $\mathbf{D}$ are given by (\ref{eqn: toric abcd}). Formal multiplication gives
\begin{equation}\label{eqn: characteristic 2}
\begin{split}
&\mathbf{F}(U,V)\mathbf{F}(U^{-1},V^{-1})^{\textrm{T}}\\
=&(\mathbf{A}+\mathbf{B}V+\mathbf{C}U+\mathbf{D}UV)\\
&(\mathbf{A}+\mathbf{B}V^{-1}+\mathbf{C}U^{-1}+\mathbf{D}U^{-1}V^{-1})^{\textrm{T}}\\
=&(\mathbf{A}\mathbf{A}^{\textrm{T}}+\mathbf{B}\mathbf{B}^{\textrm{T}}+\mathbf{C}\mathbf{C}^{\textrm{T}}+\mathbf{D}\mathbf{D}^{\textrm{T}})\\
&+(\mathbf{A}\mathbf{B}^{\textrm{T}}+\mathbf{C}\mathbf{D}^{\textrm{T}})V^{-1}+(\mathbf{B}\mathbf{A}^{\textrm{T}}+\mathbf{D}\mathbf{C}^{\textrm{T}})V\\
&+(\mathbf{A}\mathbf{C}^{\textrm{T}}+\mathbf{B}\mathbf{D}^{\textrm{T}})U^{-1}+(\mathbf{C}\mathbf{A}^{\textrm{T}}+\mathbf{D}\mathbf{B}^{\textrm{T}})U\\
&+\mathbf{A}\mathbf{D}^{\textrm{T}}U^{-1}V^{-1}+\mathbf{D}\mathbf{A}^{\textrm{T}}UV\\
&+\mathbf{B}\mathbf{C}^{\textrm{T}}VU^{-1}+\mathbf{C}\mathbf{B}^{\textrm{T}}UV^{-1}.\\
\end{split}
\end{equation}

There is a $1$--$1$ correspondence between terms on the right side of (\ref{eqn: characteristic 2}) and the conditions satisfied by $\mathbf{A}$, $\mathbf{B}$, $\mathbf{C}$, and $\mathbf{D}$ in \Cref{exam toric code theo: 1}, which implies that the $d\times d$ toric code is a stabilizer code.

We now extend the symplectic inner product to polynomials $\mathbf{F}(U,V)$ in two indeterminates $U,V$ with Pauli matrix coefficients. Let $\mathbf{P}(U,V)=\sum\nolimits_{i,j\in\mathbb{Z}} \mathbf{P}_{i,j} U^i V^j$ with $\mathbf{P}_{i,j} \in \mathcal{P}^{r_1\times n}$ and let $\mathbf{Q}(U,V)=\sum\nolimits_{i,j\in\mathbb{Z}} \mathbf{Q}_{i,j} U^i V^j$ with $\mathbf{Q}_{i,j} \in \mathcal{P}^{r_2\times n}$. 

\begin{defi} The symplectic inner product $\langle \mathbf{P},\mathbf{Q}\rangle_s$ is the polynomial in two indeterminates $\mathbf{U},\mathbf{V}$ with coefficients in $\mathbb{F}_2^{r_1\times r_2}$ given by:
\begin{equation}
\langle \mathbf{P},\mathbf{Q}\rangle_s=\sum\nolimits_{k,l\in \mathbb{Z}} \Biggl(\sum\nolimits_{i,j\in\mathbb{Z}} \langle \mathbf{P}_{i,j}, \mathbf{Q}_{k-i,l-j}\rangle_s \Biggr) U^k V^l.
\end{equation}
\end{defi}

\begin{rem} The five conditions satisfied by $\mathbf{A}$, $\mathbf{B}$, $\mathbf{C}$, and $\mathbf{D}$, that imply the toric code is a stabilizer code, reduce to the single condition:
\begin{equation}
\Bigl \langle \mathbf{F}(U,V), \mathbf{F}(U^{-1},V^{-1}) \Bigr \rangle_s=\mathbf{0}_{2 \times 2}.
\end{equation}
\end{rem}

\begin{defi} The \textbf{characteristic function} of a 2D-SC code is the polynomial
\begin{equation}
\mathbf{F}(U,V)=\sum\nolimits_{i=0}^{m_1}\sum\nolimits_{j=0}^{m_2} \mathbf{H}_{i,j} U^i V^j.
\end{equation}
\end{defi}
\begin{theo}\label{theo: condition characteristic function} A 2D-SC code is a stabilizer code if and only if the characteristic function $\mathbf{F}(U,V)$ satisfies:
\begin{equation}\label{eqn condition characteristic function}
\Bigl \langle \mathbf{F}(U,V), \mathbf{F}(U^{-1},V^{-1}) \Bigr \rangle_s=\mathbf{0}_{r \times r}.
\end{equation}
\end{theo}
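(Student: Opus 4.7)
The plan is to translate the single polynomial condition (\ref{eqn condition characteristic function}) into the family of matrix conditions (\ref{eqn quantum 2d sc codes 1})--(\ref{eqn quantum 2d sc codes 2}) of Theorem \ref{theo: Quantum 2D SC Codes} by matching coefficients of each monomial $U^k V^l$, then invoking that theorem for sufficiency and a direct analysis of the tail-biting layout for necessity. The polynomial inner product has been set up so that this reduction is essentially a bookkeeping exercise, with the toric-code expansion in (\ref{eqn: characteristic 2}) serving as an explicit warm-up.

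First, I would substitute $\mathbf{F}(U,V) = \sum_{i=0}^{m_1} \sum_{j=0}^{m_2} \mathbf{H}_{i,j} U^i V^j$ into the polynomial symplectic inner product and collect terms to obtain
\[
\langle \mathbf{F}(U,V), \mathbf{F}(U^{-1}, V^{-1})\rangle_s = \sum_{k, l \in \mathbb{Z}} C_{k, l}\, U^k V^l, \qquad C_{k, l} = \sum_{(i, j)} \langle \mathbf{H}_{i, j}, \mathbf{H}_{i - k,\, j - l}\rangle_s,
\]
with the inner sum over all $(i, j)$ such that both $(i, j)$ and $(i - k, j - l)$ lie in $[0, m_1] \times [0, m_2]$. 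Since $C_{k, l} = \mathbf{0}_{r \times r}$ automatically whenever $|k| > m_1$ or $|l| > m_2$, the polynomial identity (\ref{eqn condition characteristic function}) is equivalent to requiring $C_{k, l} = \mathbf{0}_{r \times r}$ for every $(k, l) \in [-m_1, m_1] \times [-m_2, m_2]$.

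Second, I would split cases on the signs of $k$ and $l$. For $k, l \ge 0$, reindexing $(i', j') = (i - k, j - l)$ yields $C_{k, l} = \sum_{i'=0}^{m_1 - k} \sum_{j'=0}^{m_2 - l} \langle \mathbf{H}_{i' + k,\, j' + l}, \mathbf{H}_{i', j'}\rangle_s$, which matches (\ref{eqn quantum 2d sc codes 1}) with $(d_1, d_2) = (k, l)$. The case $k, l \le 0$ produces the transpose of the same matrix and hence the same vanishing condition. For $k \ge 0 > l$, putting $d_1 = k$, $d_2 = -l$ and shifting the second row index gives precisely (\ref{eqn quantum 2d sc codes 2}); the mirror case $k < 0 \le l$ folds in by transposition. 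Thus $\langle \mathbf{F}, \mathbf{F}(U^{-1}, V^{-1})\rangle_s = \mathbf{0}$ is equivalent to the full list of matrix conditions in Theorem \ref{theo: Quantum 2D SC Codes}.

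Finally, sufficiency then follows directly from Theorem \ref{theo: Quantum 2D SC Codes}. For necessity, I would argue that in the tail-biting parity check matrix of a 2D-SC code (in the regime $L_1 > 2 m_1$, $L_2 > 2 m_2$ where no wrap-around collisions occur), every ordered pair of generator rows realizes a unique relative shift $(d_1, d_2)$ with $|d_1| \le m_1$ and $|d_2| \le m_2$, and the symplectic inner product of the two corresponding Pauli operators is precisely the entry of $C_{d_1, d_2}$ indexed by the two rows; hence pairwise commutativity of all stabilizer generators forces every $C_{k, l}$ to vanish. The main obstacle is the careful index bookkeeping across the four sign cases and verifying that the support restrictions on $(i, j)$ align exactly with the summation limits in Theorem \ref{theo: Quantum 2D SC Codes}; the expansion in (\ref{eqn: characteristic 2}) is a concrete sanity check, since its nine cross-terms pair up, under the symmetry of the symplectic form, into exactly the five matrix conditions listed in Example \ref{exam toric code theo: 1}.
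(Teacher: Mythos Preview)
Your proposal is correct and follows essentially the same approach as the paper: expand the polynomial symplectic inner product of $\mathbf{F}(U,V)$ against $\mathbf{F}(U^{-1},V^{-1})$, identify the coefficient of each monomial $U^kV^l$ with one of the matrix sums in Theorem~\ref{theo: Quantum 2D SC Codes}, and then invoke that theorem. The paper collapses your four sign cases into a single unified formula using $\max/\min$ in the summation limits, but the content is identical. Your treatment of the ``only if'' direction (noting that for $L_1>2m_1$, $L_2>2m_2$ each relative shift $(d_1,d_2)$ is realized by some pair of rows whose symplectic inner product is an entry of $C_{d_1,d_2}$) is in fact more explicit than what the paper writes, since Theorem~\ref{theo: Quantum 2D SC Codes} is stated only as a sufficiency result and the paper's proof simply appeals to it without separately arguing necessity.
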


\begin{proof} We rewrite conditions (\ref{eqn quantum 2d sc codes 1}) and (\ref{eqn quantum 2d sc codes 2}) of \Cref{theo: Quantum 2D SC Codes} as
\begin{widetext}
\begin{equation}
\sum\nolimits_{i=\max\{0,-d_1\}}^{\min\{m_1-d_1,m_1\}}\sum\nolimits_{j=\max\{0,-d_2\}}^{\min\{m_2-d_2,m_2\}} \langle \mathbf{H}_{i,j}, \mathbf{H}_{i+d_1,j+d_2} \rangle=\mathbf{0}_{r \times r}.
\end{equation}
\end{widetext}
Now,
\begin{widetext}
\begin{equation} 
\begin{split}
& \Bigl \langle \mathbf{F}(U,V), \mathbf{F}(U^{-1},V^{-1}) \Bigr \rangle_s \\
=& \Bigl \langle  \sum\nolimits_{i=0}^{m_1}\sum\nolimits_{j=0}^{m_2} \mathbf{H}_{i,j} U^i V^j, \sum\nolimits_{i=0}^{m_1}\sum\nolimits_{j=0}^{m_2} \mathbf{H}_{i,j} U^{-i} V^{-j} \Bigr \rangle_s \\
=& \sum\nolimits_{d_1=-m_1}^{m_1}\sum\nolimits_{d_2=-m_2}^{m_2} \Bigl(\sum\nolimits_{i=\max\{0,-d_1\}}^{\min\{m_1-d_1,m_1\}}\sum\nolimits_{j=\max\{0,-d_2\}}^{\min\{m_2-d_2,m_2\}} \langle \mathbf{H}_{i,j}, \mathbf{H}_{i+d_1,j+d_2} \rangle_s \Bigr) U^{-d_1}V^{-d_2}\\
=& \mathbf{0}_{r \times r} \\
\end{split}
\end{equation}
\end{widetext}
The result follows from \Cref{theo: Quantum 2D SC Codes}.
\end{proof}

Condition (\ref{eqn condition characteristic function}) in \Cref{theo: condition characteristic function} facilitates the construction of 2D-SC codes that are stabilizer codes, and we provide two examples below.
%\begin{exam}\label{exam: other example} 
\begin{widetext}
\begin{equation}\label{eqn: other example base}
\begin{split}
\mathbf{F}(U,V)=\left[\begin{array}{ccc}
X(1+U) & Y(1+V) & Z\\
Y(UV) & Z(V) & X(1+UV)\\
Z(U+UV+U^2V^2) & X(V+UV^2+U^2V^2) & Y(U^2V+U)\\
\end{array}\right].
\end{split}
\end{equation}
\begin{equation}
\begin{split}
\mathbf{F}(U,V)=\left[\begin{array}{cccc}
Y(1+UV) &0 & Z(U+V) & X(1+U)\\
0 & Z(1+UV) & X(V+UV) & Y(U+V) \\
X(V+UV) & Y(U+V) & 0 & Z(1+UV)\\
Z(U+V) & X(1+U) & Y(1+UV) & 0 \\
\end{array}\right].
\end{split}
\end{equation}
\end{widetext}
%\end{exam}

We now explain how characteristic functions facilitate the analysis of minimum distance of 2D-SC codes with coupling lengths $L_1$ and $L_2$. We define the characteristic function of a codeword $\mathbf{c}\in\mathcal{P}^{nL_1L_2}$ by 
\begin{equation}
\mathbf{c}\left(U,V\right)=\sum_{i=0}^{L_1-1}\sum_{j=0}^{L_2-1} \mathbf{c}_{i,j}U^iV^j,
\end{equation}
where each $\mathbf{c}_{i,j}$ is a sub-sequence of $\mathbf{c}$ with length $n$, and $\mathbf{c}=(\mathbf{c}_0,\mathbf{c}_1,\dots,\mathbf{c}_{L_1-1})$ where $\mathbf{c}_i=(\mathbf{c}_{i,0},\mathbf{c}_{i,1},\dots,\mathbf{c}_{i,L_2-1})$, $i=0,1,\dots,L_1-1$. The weight of the codeword (characteristic function) is the number of non-zero terms in $\mathbf{c}(U,V)$.

We also need to define characteristic vectors of vectors in the stabilizer group. Let $\mathbf{s}\in \mathbb{F}_2^{rL_1L_2}$ be an indicator function of the stabilizer $\mathbf{c}=\prod\nolimits_{i:s_i=1}\mathbf{g}_i$, where $\mathbf{g}_1,\mathbf{g}_2,\dots,\mathbf{g}_{rL_1L_2}$ are the generators in the parity check matrix of a SC-QLDPC code. Suppose $\mathbf{s}$ is divided into $L_1L_2$ sub-sequences of length $r$ by $\mathbf{s}=(\mathbf{s}_0,\mathbf{s}_1,\dots,\mathbf{s}_{L_1-1})$ where $\mathbf{s}_i=(\mathbf{s}_{i,0},\mathbf{s}_{i,1},\dots,\mathbf{s}_{i,L_2-1})$, $i=0,1,\dots,L_1-1$. Then the characteristic function of the stabilizer $\mathbf{c}$ can be represented by 
\begin{equation}
    \mathbf{c}\left(U,V\right)=\mathbf{s}(U,V) \mathbf{F}(U^{-1},V^{-1}),
\end{equation}
where $U^{L_1}=V^{L_2}=1$ and
\begin{equation}
    \mathbf{s}(U,V)=\sum_{i=0}^{L_1-1}\sum_{j=0}^{L_2-1} \mathbf{s}_{i,j}U^iV^j.
\end{equation}

\begin{theo}\label{theo: distances} The minimum distance of a 2D-SC code with characteristic function $\mathbf{F}(U,V)$ is the minimum weight of a characteristic function $\mathbf{c}(U,V)$ that satisfies:
\begin{widetext}
\begin{equation}
\bigg\langle\mathbf{F}\left(U,V\right),\mathbf{c}\left(U,V\right)\bigg\rangle=\mathbf{0}_{r\times 1},\ \mathbf{c}\left(U,V\right)\notin\Bigl\{\mathbf{s}(U,V)\mathbf{F}(U^{-1},V^{-1})\Big| \mathbf{s}(U,V)\in \mathbb{F}_2\left[U,V\right]/\bigl\langle U^{L_1},V^{L_2}\bigr\rangle\Bigr\}.
\end{equation}
\end{widetext}
\end{theo}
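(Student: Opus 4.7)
The plan is to verify that the minimum weight of a $\mathbf{c}(U,V)$ satisfying the two conditions equals the minimum distance of the 2D-SC stabilizer code. Recall that the minimum distance of a stabilizer code is the minimum Pauli weight of a non-identity operator in the normalizer of the stabilizer group that does not itself lie in the stabilizer group. So I need to establish two facts: (a) the condition $\langle \mathbf{F}(U,V),\mathbf{c}(U,V)\rangle=\mathbf{0}_{r\times 1}$ characterizes the normalizer, and (b) the coset-exclusion condition characterizes being outside the stabilizer group.

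For part (a), I would unpack the polynomial symplectic inner product modulo $U^{L_1}-1$ and $V^{L_2}-1$. Working in this tail-biting quotient, the coefficient of the monomial $U^{i'}V^{j'}$ in $\langle \mathbf{F}(U,V),\mathbf{c}(U,V)\rangle$ equals
\begin{equation}
\sum\nolimits_{i=0}^{m_1}\sum\nolimits_{j=0}^{m_2} \bigl\langle \mathbf{H}_{i,j},\, \mathbf{c}_{(i'-i)_{L_1},(j'-j)_{L_2}}\bigr\rangle_s.
\end{equation}
By inspection of the block structure of the 2D-SC parity-check matrix in (\ref{eqn: 2d sc}), this is exactly the length-$r$ vector of symplectic inner products between the block of $r$ stabilizer-generator rows indexed by position $(i',j')$ and the Pauli operator $\mathbf{c}$. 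Hence the polynomial inner product vanishes identically in the quotient ring if and only if $\mathbf{c}$ commutes with every one of the $rL_1L_2$ stabilizer generators, i.e., lies in the normalizer.

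For part (b), I would set up an explicit bijection between polynomials $\mathbf{s}(U,V)\in\mathbb{F}_2[U,V]/\langle U^{L_1},V^{L_2}\rangle$ with coefficients $\mathbf{s}_{i',j'}\in\mathbb{F}_2^{r}$ and products of stabilizer generators, where $\mathbf{s}_{i',j'}$ indicates which of the $r$ generators at position $(i',j')$ appear in the product. A direct expansion shows that the symplectic representation of the resulting product at column position $(k,l)$ is $\sum_{i',j'} \mathbf{s}_{i',j'}\mathbf{H}_{(i'-k)_{L_1},(j'-l)_{L_2}}$, which is exactly the coefficient of $U^kV^l$ in $\mathbf{s}(U,V)\mathbf{F}(U^{-1},V^{-1})$ computed in the quotient ring. (Phases are resolved by \Cref{theo: condition characteristic function}, which guarantees pairwise commutativity of the generators, so the product is well-defined up to a global sign that does not affect the Pauli weight.) Combining (a) and (b), the minimum weight of a $\mathbf{c}(U,V)$ in the normalizer but not in the stabilizer group is precisely the minimum distance.

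The main obstacle is the bookkeeping for the tail-biting indexing: verifying that the shift-and-mod arithmetic arising from multiplication in $\mathbb{F}_2[U,V]/\langle U^{L_1}-1,V^{L_2}-1\rangle$ aligns cell-for-cell with the convolutional block layout in (\ref{eqn: 2d sc}), including the wraparound blocks in the corners. A smaller but nontrivial point is justifying that the weight of the characteristic function $\mathbf{c}(U,V)$, understood as the number of non-identity single-qubit factors across all coefficients $\mathbf{c}_{i,j}$, really coincides with the Pauli weight of the induced operator on $nL_1L_2$ qubits, so that minimization in the polynomial world yields the operational minimum distance.
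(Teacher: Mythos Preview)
Your proposal is correct and follows essentially the same approach as the paper. The paper's proof is terser: it verifies only your part (a) by expanding the polynomial symplectic inner product and matching coefficients to the block commutativity conditions, while your part (b) is handled in the paragraph immediately preceding the theorem (where the identity $\mathbf{c}(U,V)=\mathbf{s}(U,V)\mathbf{F}(U^{-1},V^{-1})$ for stabilizers is stated as setup rather than re-derived in the proof). Your explicit treatment of the tail-biting indexing and the weight correspondence is more careful than what the paper spells out, but the underlying argument is the same.
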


\begin{proof} The requirement that a codeword commutes with every stabilizer generator reduces to
\begin{widetext}
\begin{equation}
\sum\nolimits_{k=0}^{m_1}\sum\nolimits_{l=0}^{m_2}\mathbf{H}_{k,l}\mathbf{c}_{k+i,l+j}=\mathbf{0}_{r\times 1},\ 
\forall 0\leq i\leq L_1-1, 0\leq j\leq L_2-1.
\end{equation}  
\end{widetext}
Now,
\begin{widetext}
\begin{equation} \label{eqn: minimum distance}
\begin{split}
\bigg\langle\mathbf{F}\left(U,V\right),\mathbf{c}\left(U,V\right)\bigg\rangle_s=&\bigg\langle\sum\nolimits_{i=0}^{m_1}\sum\nolimits_{j=0}^{m_2} \mathbf{H}_{i,j} U^iV^j,\sum\nolimits_{i=0}^{L_1-1}\sum\nolimits_{j=0}^{L_2-1} \mathbf{c}_{i,j}U^{i}V^{j}\bigg\rangle_s\\
=& \sum\nolimits_{i=0}^{L_1-1}\sum\nolimits_{j=0}^{L_2-1} \Bigg( \sum\nolimits_{k=0}^{m_1}\sum\nolimits_{l=0}^{m_2} \mathbf{H}_{k,l}\mathbf{c}_{i-k,j-l}\Bigg) U^{i}V^{j}\\
=& \mathbf{0}_{r\times 1},
\end{split}
\end{equation}
\end{widetext}
and the theorem follows from (\ref{eqn: minimum distance}).
\end{proof}

\begin{exam} The characteristic function of the $d\times d $ toric code is given by
\begin{equation}
\mathbf{F}\left(U,V\right)=\left[\begin{array}{cc}
X(1+V) & X(V+U)\\
Z(V+U) & Z(1+V)U
\end{array}\right].
\end{equation}
The characteristic function $\mathbf{c}\left(U,V\right)$ of a codeword satisfies
\begin{widetext}
\begin{equation}\label{eqn: exam dis condi}
\begin{split}
\bigg\langle\mathbf{F}\left(U,V\right),\mathbf{c}\left(U,V\right)\bigg\rangle_s=&\bigg\langle\mathbf{F}\left(U,V\right),\left[\begin{array}{cc}
Xf_X+Yf_Y+Zf_Z & Xg_X+Yg_Y+Zg_Z
\end{array}\right]\bigg\rangle_s\\
=&\left[\begin{array}{c}
(1+V)(f_Y+f_Z)+(V+U)(g_Y+g_Z)\\
(V+U)(f_X+f_Y)+(1+V)U(g_X+g_Y)
\end{array}\right]=\mathbf{0}_{r\times1},
\end{split}
\end{equation}
\end{widetext}
where $f_X,f_Y,f_Z$ and $g_X,g_Y,g_Z$ are polynomials in indeterminates $U,V$ with binary coefficients.

We notice that $(f_X,f_Y,f_Z)=(UV,U,V)$, $(g_X,g_Y,g_Z)=(U,V,1)$ satisfy (\ref{eqn: exam dis condi}) and is thus a codeword with characteristic function $\mathbf{c}(U,V)=\left[XUV+YU+ZV,XU+YV+Z\right]$.

Set $\mathbf{s}(U,V)=\left[UV, UV\right]$, and observe that $\mathbf{c}(U,V)=\mathbf{s}(U,V)\mathbf{F}\left(U^{-1},V^{-1}\right)$. It follows that the associated codeword is a stabilizer.
\end{exam}

Given a polynomial $\mathbf{F}(U,V)$ with coefficients in $\mathcal{P}^{r\times n}$, let $m_1,m_2$ respectively be the maximal degrees of the indeterminates $U,V$. We replace $U^sV^t$ in $\mathbf{F}(U,V)$ by $U^{m_1-s}V^{m_2-t}$ to obtain the \textbf{complementary polynomial} $\bar{\mathbf{F}}(U,V)$. We observe 
\begin{equation}\label{eqn: complementary}
    \bar{\mathbf{F}}(U,V)=U^{m_1}V^{m_2} \mathbf{F}(U^{-1},V^{-1}).
\end{equation}

\begin{exam}\label{cons: HGP} \emph{(2D-SC HGP Codes)} Given $\mathbf{A}(U,V)\in \mathbb{F}_2^{r_1\times n_1}\left[U,V\right]$ and $\mathbf{B}(U,V)\in \mathbb{F}_2^{r_2\times n_2}\left[U,V\right]$, define:
\begin{widetext}
\begin{equation}
\mathbf{F}(U,V)=\left[\begin{array}{cc}
X(\mathbf{I}_{n_2\times n_2}\otimes \mathbf{A}(U,V)) & X(\bar{\mathbf{B}}(U,V)^T \otimes \mathbf{I}_{r_1 \times r_1})\\
\vspace{2pt}
Z(\mathbf{B}(U,V) \otimes \mathbf{I}_{n_1 \times n_1}) & Z(\mathbf{I}_{r_2\times r_2}\otimes \bar{\mathbf{A}}(U,V)^T)
\end{array}\right].
\end{equation}
\end{widetext}
\end{exam}

We now show that the 2D-SC hypergraph product code is a stabilizer code by verifying that 
\begin{widetext}
\begin{equation}
\begin{split}
    \Bigl \langle \mathbf{F}(U,V), \mathbf{F}(U^{-1},V^{-1}) \Bigr \rangle_s=\mathbf{0}_{(r_1n_2+r_2n_1)\times(r_1n_2+r_2n_1)}.
    \end{split}
\end{equation}
\end{widetext}
We have
\begin{widetext}
\begin{equation*}
\begin{split}
&\Bigl \langle \mathbf{F}(U,V), \mathbf{F}(U^{-1},V^{-1}) \Bigr \rangle_s= \left[\begin{array}{c|c}
\mathbf{0}_{r_1n_2\times r_1n_2} & \begin{array}{r}\mathbf{B}(U^{-1},V^{-1})^T \otimes \mathbf{A}(U,V)\\+\bar{\mathbf{B}}(U,V)^T \otimes \bar{\mathbf{A}}(U^{-1},V^{-1})\end{array}\\
\hline
\begin{array}{r}\mathbf{B}(U,V) \otimes \mathbf{A}(U^{-1},V^{-1})^T\\+\bar{\mathbf{B}}(U^{-1},V^{-1}) \otimes \bar{\mathbf{A}}(U,V)^T \end{array}& \mathbf{0}_{r_2n_1\times r_2n_1}
\end{array}\right].\\
\end{split}
\end{equation*}
\end{widetext}

Given a polynomial $\mathbf{F}(U,V)$ with coefficients in $\mathcal{P}^{r\times n}$, let $m_1,m_2$ respectively be the maximal degrees of the indeterminates $U,V$. We replace $U^sV^t$ in $\mathbf{F}(U,V)$ by $U^{m_1-s}V^{m_2-t}$ to obtain the \textbf{complementary polynomial} $\bar{\mathbf{F}}(U,V)$. We observe 
\begin{equation}\label{eqn: complementary}
    \bar{\mathbf{F}}(U,V)=U^{m_1}V^{m_2} \mathbf{F}(U^{-1},V^{-1}).
\end{equation}

\begin{exam}\label{cons: HGP} \emph{(2D-SC HGP Codes)} Given $\mathbf{A}(U,V)\in \mathbb{F}_2^{r_1\times n_1}\left[U,V\right]$ and $\mathbf{B}(U,V)\in \mathbb{F}_2^{r_2\times n_2}\left[U,V\right]$, define:
\begin{widetext}
\begin{equation}
\mathbf{F}(U,V)=\left[\begin{array}{cc}
X(\mathbf{I}_{n_2\times n_2}\otimes \mathbf{A}(U,V)) & X(\bar{\mathbf{B}}(U,V)^T \otimes \mathbf{I}_{r_1 \times r_1})\\
\vspace{2pt}
Z(\mathbf{B}(U,V) \otimes \mathbf{I}_{n_1 \times n_1}) & Z(\mathbf{I}_{r_2\times r_2}\otimes \bar{\mathbf{A}}(U,V)^T)
\end{array}\right].
\end{equation}
\end{widetext}

\end{exam}

We now show that the 2D-SC hypergraph product code is a stabilizer code by verifying that $\Bigl \langle \mathbf{F}(U,V), \mathbf{F}(U^{-1},V^{-1}) \Bigr \rangle_s=\mathbf{0}_{(r_1n_2+r_2n_1)\times(r_1n_2+r_2n_1)}$. We have
\begin{widetext}
\begin{equation*}
\begin{split}
&\Bigl \langle \mathbf{F}(U,V), \mathbf{F}(U^{-1},V^{-1}) \Bigr \rangle_s= \left[\begin{array}{c|c}
\mathbf{0}_{r_1n_2\times r_1n_2} & \begin{array}{r}\mathbf{B}(U^{-1},V^{-1})^T \otimes \mathbf{A}(U,V)\\+\bar{\mathbf{B}}(U,V)^T \otimes \bar{\mathbf{A}}(U^{-1},V^{-1})\end{array}\\
\hline
\begin{array}{r}\mathbf{B}(U,V) \otimes \mathbf{A}(U^{-1},V^{-1})^T\\+\bar{\mathbf{B}}(U^{-1},V^{-1}) \otimes \bar{\mathbf{A}}(U,V)^T \end{array}& \mathbf{0}_{r_2n_1\times r_2n_1}
\end{array}\right].\\
\end{split}
\end{equation*}
\end{widetext}

It follows from (\ref{eqn: complementary}) that the off diagonal matrices vanish. At the bottom left we obtain
\begin{widetext}
\begin{equation}
    \mathbf{B}(U,V)\otimes \mathbf{A}(U^{-1},V^{-1})^{\textrm{T}}+U^{-m_1}V^{-m_2}\mathbf{B}(U,V)\otimes \Bigl(U^{m_1}V^{m_2}\mathbf{A}(U^{-1},V^{-1})^{\textrm{T}}\Bigr)=\mathbf{0}_{r_2n_1\times n_1r_2}.
\end{equation}
\end{widetext}
At the top right we obtain
\begin{widetext}
\begin{equation}
    \mathbf{B}(U^{-1},V^{-1})^{\textrm{T}}\otimes \mathbf{A}(U,V)+U^{m_1}V^{m_2}\mathbf{B}(U^{-1},V^{-1})^{\textrm{T}}\otimes \Bigl(U^{-m_1}V^{-m_2}\mathbf{A}(U,V)\Bigr)=\mathbf{0}_{r_1n_2\times r_2n_1}.
\end{equation}
\end{widetext}

\begin{rem}\label{rem: QA-LP Codes}(\textbf{Quasi-abelian lifted product codes are finite-dimensional SC-HGP codes})
    Observe that the aforementioned construction resembles the representation of lifted product codes. Using the fact that every finitely generated abelian group is a direct product of cyclic groups, quasi-abelian lifted product (LP) codes proposed in \cite{panteleev2022liftedproduct} can be viewed as finite-dimensional SC-HGP codes. This means that it is possible to efficiently construct high-performance quasi-abelian LP codes through cycle optimization methods developed for classical SC-LDPC codes. \Cref{sec: decoder} provides a framework for the 2D case.
\end{rem}

\begin{cons}\label{cons: xyz} (\textbf{2D-SC XYZ Codes}) Given $\mathbf{A}(U,V)\in \mathbb{F}_2^{r_1\times n_1}\left[U,V\right]$, $\mathbf{B}(U,V)\in \mathbb{F}_2^{r_2\times n_2}\left[U,V\right]$, and $\mathbf{C}(U,V)\in \mathbb{F}_2^{r_3\times n_3}\left[U,V\right]$, define:
\begin{widetext}
\begin{equation}
\mathbf{F}(U,V)=\left[\begin{array}{llll}
X(\mathbf{A}\otimes\mathbf{I}_{n_2}\otimes\mathbf{I}_{n_3}) & I_{r_1n_2n_3\times n_1r_2r_3} & Z(\mathbf{I}_{r_1}\otimes\mathbf{I}_{n_2}\otimes\bar{\mathbf{C}}^T) & Y(\mathbf{I}_{r_1}\otimes\bar{\mathbf{B}}^T\otimes\mathbf{I}_{n_3}) \\
Y(\mathbf{I}_{n_1}\otimes\mathbf{B}\otimes\mathbf{I}_{n_3}) & Z(\mathbf{I}_{n_1}\otimes\mathbf{I}_{r_2}\otimes\bar{\mathbf{C}}^T) & I_{n_1r_2n_3\times r_1n_2r_3} & X(\bar{\mathbf{A}}^T\otimes\mathbf{I}_{r_2}\otimes\mathbf{I}_{n_3}) \\
Z(\mathbf{I}_{n_1}\otimes\mathbf{I}_{n_2}\otimes\mathbf{C}) & Y(\mathbf{I}_{n_1}\otimes\bar{\mathbf{B}}^T\otimes\mathbf{I}_{r_3}) & X(\bar{\mathbf{A}}^T\otimes\mathbf{I}_{n_2}\otimes\mathbf{I}_{r_3}) & I_{n_1n_2r_3\times r_1r_2n_3} \\
I_{r_1r_2r_3\times n_1n_2n_3} & X(\mathbf{A}\otimes\mathbf{I}_{r_2}\otimes\mathbf{I}_{r_3}) & Y(\mathbf{I}_{r_1}\otimes\mathbf{B}\otimes\mathbf{I}_{r_3}) & Z(\mathbf{I}_{r_1}\otimes\mathbf{I}_{r_2}\otimes\mathbf{C}) 
\end{array}\right].
\end{equation}
\end{widetext}
The 2D-SC XYZ code obtained from $\mathbf{F}(U,V)$ is a stabilizer code.
\end{cons}

\begin{exam}\label{exam: other example lifting} Let $e,g,y\in\mathbb{N}^*$, with $\sigma=\sigma_z$ as defined in (\ref{eqn: circulant shift matrix}). $\mathbf{F}(U,V)$ below is a lifting of the characteristic function in (\ref{eqn: other example base}).
\begin{widetext}
\begin{equation}
\begin{split}
&\mathbf{F}(U,V)\\
=&\left[\begin{array}{ccc}
X(\sigma^{e+y}+\sigma^{2e-g}U) & Y(\sigma^e+\sigma^eV) & Z\sigma^e\\
Y(\sigma^e UV) & Z(\sigma^gV) & X(\sigma^g+\sigma^{e-y}UV)\\
Z(\sigma^{g+y}U+\sigma^{g+y}UV+\sigma^eU^2V^2) & X(\sigma^{2g-e+y}V+\sigma^{g}UV^2+\sigma^{e-y}U^2V^2) & Y(\sigma^{e-y}U^2V+\sigma^gU)\\
\end{array}\right].
\end{split}
\end{equation}
\end{widetext}
\end{exam}

\begin{rem} (\textbf{SC-Generalized Bicycle Codes as 1D-SC-QLDPC Codes}) Note that we can also specify characteristic functions of SC-QLDPC codes with arbitrary dimensions by changing the number of indeterminates (which should be equal to the dimension). Several codes in the literature could be categorized as 1D-SC-QLDPC codes, including Generalized Bicycle (GB) codes. The characteristic function of a GB code has the general form:
\begin{equation}
    \mathbf{F}(U)=\left[\begin{array}{cc}
      Xa(U)   & Xb(U) \\
      Z U^Lb(U^{-1})  & ZU^La(U^{-1})
    \end{array}\right].
\end{equation}

In particular, the characteristic function of the $\left[\left[126,28\right]\right]$ GB code in \cite{panteleev2021degenerate} is specified as follows with $m=62$ and $L=63$:
\begin{widetext}
\begin{equation}
\begin{split}
   \mathbf{F}(U)&=\left[\begin{array}{cc}
      X(1+U+U^{14}+U^{16}+U^{22})   & X(1+U^3+U^{13}+U^{20}+U^{42}) \\
      Z(1+U^{63-3}+U^{63-13}+U^{63-20}+U^{63-42} )   & Z(1+U^{63-1}+U^{63-14}+U^{63-16}+U^{63-22})
    \end{array}\right]\\
    &=\left[\begin{array}{cc}
      X(1+U+U^{14}+U^{16}+U^{22})   & X(1+U^3+U^{13}+U^{20}+U^{42}) \\
      Z(1+U^{60}+U^{50}+U^{43}+U^{21} )   & Z(1+U^{62}+U^{49}+U^{47}+U^{41})
    \end{array}\right].
\end{split}
\end{equation}
\end{widetext}
\end{rem}

\section{Code Optimization}
\label{sec: decoder}

In this section, we develop a framework for optimizing the SC-HGP codes introduced in \Cref{subsec: general characteristic }. We start from the alternating sum condition that specifies when a closed path in the base matrix/protograph gives rise to a closed path/cycle in the protograph/Tanner graph. We then develop an optimization algorithm that efficiently reduces the number of short cycles.

\subsection{Cycle Conditions}
\label{subsec: cycle condition}

In classical coding theory, short cycles in Tanner graphs create problems for belief propagation (BP) decoders in both the waterfall and error floor regions. Reducing the number of cycles in the Tanner graph is a major goal in optimizing LDPC codes, but efficiently reducing the number of cycles in general LDPC codes is not an easy task. Progressive edge growth (PEG) is an algorithm that efficiently improves the \textbf{girth} (the minimum length of a cycle) of the Tanner graph. However, girth is not the only metric that determines performance since multiplicities of cycles are also important. For example, a code with girth $6$ than contains $1000$ cycles-$6$ can perform worse than a code with $1$ cycle-$4$ and $2$ cycles-$6$ if the channel is good enough. The convolutional structure of SC codes implies that the number of cycles can be represented by the base matrix, partitioning matrix, and the lifting matrix in a compact way, which facilitates enumeration and optimization of short cycles. 

In particular, each cycle in the Tanner graph is lifted from a closed path in the protograph, and the latter arises from another closed path in the base matrix. Closed paths are cycles that allow repetitive nodes, they are also referred to as \textbf{cycle candidates}. Whether a cycle candidate in the base matrix/protograph gives rise to a cycle in the protograph/Tanner graph is determined by the alternating sum of entries along the cycle in the partitioning/lifting matrix.

\begin{exam} The matrix $\mathbf{H}$ given in (\ref{eqn: cyc_condition_exam}) specifies a 1D-TB code with $m=2$ and $L=4$. The base matrix $\mathbf{H}_0+\mathbf{H}_1+\mathbf{H}_2$ is the $3\times 4$ matrix with each entry equal to $1$. 
Entries $0,1,2$ in the partitioning matrix $\mathbf{P}$ give rise to a $1$ at the same location in the component matrices $\mathbf{H}_0,\mathbf{H}_1$, and $\mathbf{H}_2$, respectively. We highlight cycle candidates of lengths $4$, $6$, and $8$ (also referred to as cycles-$4$, cycles-$6$, and cycles-$8$ candidates in the rest of the paper) in blue, red, and green respectively.
\begin{widetext}
    \begin{equation}\label{eqn: cyc_condition_exam}
\mathbf{H} = \left[\begin{array}{c|c|c|c}
\begin{array}{cccc}
\cs\tikznode{64}{$1$} & \cs\tikznode{63}{$1$} & 0 & 0 \\
0 & 0 & 1 & 0 \\
0 & 0 & 0 & \cl\tikznode{11}{$1$} \\
\end{array} & \begin{array}{cccc}
0 & 0 & 0 & 0 \\
0 & 0 & 0 & 0 \\
0 & 0 & 0 & 0 \\
\end{array} & \begin{array}{cccc}
0 & 0 & \cl\tikznode{43}{$1$} & 0 \\
0 & \ce\tikznode{85}{$1$} & 0 & 0 \\
0 & \ce\tikznode{86}{$1$} & 0 & 0 \\
\end{array} & \begin{array}{cccc}
0 & 0 & 0 & \cl\tikznode{42}{$1$} \\
\ce\tikznode{84}{$1$} & 0 & 0 & 1 \\
1 & 0 & \ce\tikznode{14}{$1$} & 0 \\
\end{array} \\
\hline
\begin{array}{cccc}
0 & 0 & 0 & \cl\tikznode{12}{$1$} \\
1 & 0 & 0 & 1 \\
\cs\tikznode{65}{$1$} & 0 & \cl\tikznode{24}{$1$} & 0 \\
\end{array} & \begin{array}{cccc}
1 & 1 & 0 & 0 \\
0 & 0 & 1 & 0 \\
0 & 0 & 0 & \cs\tikznode{21}{$1$} \\
\end{array} & \begin{array}{cccc}
0 & 0 & 0 & 0 \\
0 & 0 & 0 & 0 \\
0 & 0 & 0 & 0 \\
\end{array} & \begin{array}{cccc}
0 & 0 & \cl\tikznode{13}{$1$} & 0 \\
0 & 1 & 0 & 0 \\
0 & 1 & 0 & 0 \\
\end{array} \\
\hline
\begin{array}{cccc}
0 & 0 & \cl\tikznode{23}{$1$} & 0 \\
0 & \cs\tikznode{62}{$1$} & 0 & 0 \\
0 & 1 & 0 & 0 \\
\end{array} & \begin{array}{cccc}
0 & 0 & 0 & \cl\tikznode{22}{$1$} \\
1 & 0 & 0 & \cs\tikznode{61}{$1$} \\
1 & 0 & \cl\tikznode{34}{$1$} & 0 \\
\end{array} & \begin{array}{cccc}
1 & 1 & 0 & 0 \\
0 & 0 & 1 & 0 \\
0 & 0 & 0 & \cl\tikznode{31}{$1$} \\
\end{array} & \begin{array}{cccc}
0 & 0 & 0 & 0 \\
0 & 0 & 0 & 0 \\
0 & 0 & 0 & 0 \\
\end{array} \\
\hline
\begin{array}{cccc}
0 & 0 & 0 & 0 \\
0 & 0 & 0 & 0 \\
0 & 0 & 0 & 0 \\
\end{array} & \begin{array}{cccc}
0 & 0 & \cl\tikznode{33}{$1$} & 0 \\
0 & 1 & 0 & 0 \\
0 & 1 & 0 & 0 \\
\end{array} & \begin{array}{cccc}
0 & 0 & 0 & \ce\tikznode{32}{$1$} \\
1 & 0 & 0 & \ce\tikznode{82}{$1$} \\
1 & 0 & \cl\tikznode{44}{$1$} & 0 \\
\end{array} & \begin{array}{cccc}
\ce\tikznode{83}{$1$} & 1 & 0 & 0 \\
0 & 0 & \ce\tikznode{81}{$1$} & 0 \\
0 & 0 & 0 & \cl\tikznode{41}{$1$} \\
\end{array}
\end{array}\right],\hspace{10pt}
\begin{array}{rl}
     &\mathbf{P}=\left[\begin{array}{cccc}
     0 & 0 & \cl\tikznode{f1}{$2$} & \cl\tikznode{f2}{$1$} \\
     1 & 2 & 0 & 1 \\
     1 & 2 & \cl\tikznode{f4}{$1$} & \cl\tikznode{f3}{$0$} \\
     \end{array}\right]\\
     &\\
     &\mathbf{P}=\left[\begin{array}{cccc}
     \cs\tikznode{s1}{$0$} & \cs\tikznode{s2}{$0$} & 2 & 1 \\
     1 & \cs\tikznode{s3}{$2$} & 0 & \cs\tikznode{s4}{$1$} \\
     \cs\tikznode{s6}{$1$} & 2 & 1 & \cs\tikznode{s5}{$0$} \\
     \end{array}\right].\\
     &\\
     &\mathbf{P}=\left[\begin{array}{cccc}
     \ce\tikznode{ei1}{$0$} & 0 & 2 & \ce\tikznode{ei2}{$1$} \\
     \ce\tikznode{ei8}{$1$} & \ce\tikznode{ei7}{$2$} & \ce\tikznode{ei4}{$0$} & \ce\tikznode{ei3}{$1$} \\
     1 & \ce\tikznode{ei6}{$2$} & \ce\tikznode{ei5}{$1$} & 0 \\
     \end{array}\right]
\end{array}
\end{equation}

\begin{tikzpicture}[remember picture,overlay,blue,rounded corners]
\draw[draw=blue,thick] (11)--(12)--(13)--(14)--(11);
\draw[draw=blue,thick] (21)--(22)--(23)--(24)--(21);
\draw[draw=blue,thick] (31)--(32)--(33)--(34)--(31);
\draw[draw=blue,thick] (41)--(42)--(43)--(44)--(41);
\draw[draw=mygreen,thick] (81)--(82)--(32)--(83)--(84)--(85)--(86)--(14)--(81);
\draw[draw=red,thick] (61)--(62)--(63)--(64)--(65)--(21)--(61);
\draw[draw=blue,thick] (f1)--(f2)--(f3)--(f4)--(f1);
\draw[draw=mygreen,thick] (ei1)--(ei2)--(ei3)--(ei4)--(ei5)--(ei6)--(ei7)--(ei8)--(ei1);
\draw[draw=red,thick] (s1)--(s2)--(s3)--(s4)--(s5)--(s6)--(s1);
\end{tikzpicture}

\end{widetext}
The alternating sums of the blue, red, and green cycle candidates are
\begin{equation*}
\begin{split}
&2-1+0-1=0,\\
&0-0+2-1+0-1=0,\\
&0-1+1-0+1-2+2-1=0.
\end{split}
\end{equation*}
All give rise to cycles in the Tanner graph, and each cycle candidate gives rise to four cycles.
\end{exam}

\Cref{lemma: cycle condition} provides a necessary and sufficient condition for a cycle candidate in the base matrix to give rise to a cycle in the protograph (see \cite{battaglioni2017design,fossorier2004quasicyclic} for more details). Since we have not constrained the SC-QLDPC codes to be quasi-cyclic, the protograph is exactly the Tanner graph.

\begin{lemma}\label{lemma: cycle condition} (\textbf{alternating sum condition}) Let $g\geq 2$, and let $(j_1,i_1,j_2,i_2,\dots,j_g,i_g)$ be a cycle-$2g$ candidate $C$ in the base matrix, where $j_{g+1}=j_1$ and $(i_{k},j_{k})$, $(i_{k},j_{k+1})$, $1\leq k\leq g$, are nodes of $C$ in the partitioning matrix $\mathbf{P}$. Then $C$ becomes a cycle-$2g$ in the protograph if and only if
%\cite{battaglioni2017design,fossorier2004quasicyclic}:
\begin{equation}
\begin{split}
\sum\nolimits_{k=1}^{g}\Bigl((\mathbf{P})_{i_{k},j_{k}}-(\mathbf{P})_{i_{k},j_{k+1}}\Bigr)&=0
\end{split}
\end{equation}
for NTB codes, or
\begin{equation}\label{eqn: partition cycle}
\begin{split}
\sum\nolimits_{k=1}^{g}\Bigl((\mathbf{P})_{i_{k},j_{k}}-(\mathbf{P})_{i_{k},j_{k+1}}\Bigr)&=0\ (\hspace*{-1em}\mod L)
\end{split}
\end{equation}
for TB codes.
\end{lemma}

The $k$-th term in the alternating sum is the horizontal displacement of node $2k$ with respect to node $(2k-1)$ (interpreted $\mod L$ for TB codes) on $C$. It therefore follows that when $C$ is a closed path, the alternating sum is zero. For a TB 2D-SC code, the alternating sum of the terms $(\mathbf{P})_{i_k,j_k}-(\mathbf{P})_{i_k,j_{k+1}}$ is required to be $(0,0)\mod (L_1,L_2)$.

\subsection{Optimization of SC-QLDPC Codes}
\label{subsec: optimization of SC-QLDPC}

In this section, we focus on optimization of cycles of length $4,6,8$ in 2D-SC-HGP codes (essentially quasi-abelian lifted product codes on a group that is a product of two cyclic groups of sizes $L_1$ and $L_2$) introduced in \Cref{sec: quantum 2d-sc codes}. Recall that the characteristic function $\mathbf{F}(U,V)$ is given by 
\begin{widetext}
\begin{equation}\label{eqn: cyc opt sc hgp}
\mathbf{F}(U,V)=\left[\begin{array}{cc}
X(\mathbf{I}_{n_2\times n_2}\otimes \mathbf{A}(U,V)) & X(\bar{\mathbf{B}}(U,V)^T \otimes \mathbf{I}_{r_1 \times r_1})\\
\vspace{2pt}
Z(\mathbf{B}(U,V) \otimes \mathbf{I}_{n_1 \times n_1}) & Z(\mathbf{I}_{r_2\times r_2}\otimes \bar{\mathbf{A}}(U,V)^T)
\end{array}\right],
\end{equation}
\end{widetext}
where $\mathbf{A}(U,V)\in \mathbb{F}_2^{r_1\times n_1}\left[U,V\right]$ and $\mathbf{B}(U,V)\in \mathbb{F}_2^{r_2\times n_2}\left[U,V\right]$. 

We now illustrate how to derive the partitioning matrix $\mathbf{P}$ from a characteristic function $\mathbf{F}(U,V)$ of the form (\ref{eqn: cyc opt sc hgp}). Note that $\mathbf{F}(U,V)$ is completely determined by $\mathbf{A}(U,V)$ and $\mathbf{B}(U,V)$ since $\bar{\mathbf{A}}(U,V)$ and $\bar{\mathbf{B}}(U,V)$ are simply $\mathbf{A}(U,V)$ and $\mathbf{B}(U,V)$, with $U^iV^j$ replaced by $U^{m_1-i}V^{m_2-j}$. For simplicity, we consider an example where the $(s,t)$-th entry of $\mathbf{F}(U,V)$ is a monomial $F_{s,t} U^iV^j$. The $(s,t)$-th entry of the partitioning matrix $\mathbf{P}$ is then $(i,j)$, and $\mathbf{P}$ is completely determined by two partitioning matrices $\mathbf{P}_a$ and $\mathbf{P}_b$ corresponding to $\mathbf{A}(U,V)$ and $\mathbf{B}(U,V)$ respectively ($\mathbf{P}$ is essentially the hypergraph product of $\mathbf{P}_a$ and $\mathbf{P}_b$).

\begin{exam} Here $r_1=r_2=2$, $n_1=n_2=3$, with base matrices
\begin{equation*}
    \mathbf{A}=\begin{bmatrix}
1&0&1\\
1&1&0
\end{bmatrix}\textrm{ and }\mathbf{B}=\begin{bmatrix}
1&1&0\\
0&1&1
\end{bmatrix}.
\end{equation*}
The characteristic functions are given by
\begin{equation*}
\mathbf{A}(U,V)=\begin{bmatrix}
U^{a_1}V^{a_1'} & 0  & U^{a_2}V^{a_2'} \\
U^{a_3}V^{a_3'} & U^{a_4}V^{a_4'} &  0
\end{bmatrix},
\end{equation*}
\begin{equation*}
\mathbf{B}(U,V)=\begin{bmatrix}
U^{b_1}V^{b_1'} & U^{b_2}V^{b_2'} &  0 \\
0 & U^{b_3}V^{b_3'} & U^{b_4}V^{b_4'}
\end{bmatrix}.
\end{equation*}
The two partitioning matrices $\mathbf{P}_a$ and $\mathbf{P}_b$ are given by
\begin{equation*}
    \mathbf{P}_a=\begin{bmatrix}
(a_1,a_1')& &(a_2,a_2')\\
(a_3,a_3')&(a_4,a_4')&
\end{bmatrix},
\end{equation*}
\begin{equation*}
\mathbf{P}_b=\begin{bmatrix}
(b_1,b_1')&(b_2,b_2')& \\
 & (b_3,b_3') & (b_4,b_4') 
\end{bmatrix}.
\end{equation*}

The projection of the partitioning matrix $\mathbf{P}$ onto the first dimension is then given by (\ref{eqn: example rigid cycle}). Variable nodes (VNs) are labelled with $(j_1,j_2)$ or $(i_1,i_2)$, check nodes (CNs) are labelled with $(i_1,j_2)$ or $(j_1,i_2)$. The Kronecker product structure is evident -- three copies of $\mathbf{P}_a$ along the diagonal of the top left block, and each entry of $\mathbf{P}_b$ is replaced by a $3\times 3$ diagonal matrix. We distinguish four types of node/edge corresponding to the four blocks in (\ref{eqn: cyc opt sc hgp}):

\begin{enumerate}
    \item $(j_1,j_2)$--$(i_1,j_2)$, where $i_1$--$j_1$ is an edge in $\mathbf{A}$, corresponding to entries in the top left block $\mathbf{I}_{n_2\times n_2}\otimes \mathbf{A}(U,V)$;
    \item $(j_1,j_2)$--$(j_1,i_2)$, where $i_2$--$j_2$ is an edge in $\mathbf{B}$, corresponding to entries in the bottom left block $\mathbf{B}(U,V) \otimes \mathbf{I}_{n_1 \times n_1}$;
    \item $(i_1,i_2)$--$(i_1,j_2)$, where $i_1$--$j_1$ is an edge in $\mathbf{B}$, corresponding to entries in the top right block $\bar{\mathbf{B}}(U,V)^T \otimes \mathbf{I}_{r_1 \times r_1}$;
    \item $(i_1,i_2)$--$(j_1,i_2)$, where $i_1$--$j_1$ is an edge in $\mathbf{A}$, corresponding to entries in the bottom right block $\mathbf{I}_{r_2\times r_2}\otimes \bar{\mathbf{A}}(U,V)^T$.
\end{enumerate}

The Kronecker product structure makes it possible to label the entries of the partitioning matrix $\mathbf{P}$. For example, the type (2) blue shaded entry $b_3$ is in the second row and second column of $\mathbf{P}_b$, so $i_2=j_2=2$. It is the third diagonal entry so $j_1=3$. Similarly, the type (1)
 blue shaded entry $a_2$ is in the second copy of $\mathbf{P}_a$ so $j_2=2$. It is in the entry in row $1$ and column $3$ of $\mathbf{P}_a$ so $i_1=1$ and $j_1=3$.
\end{exam}

We now enumerate cycles-$4$, cycles-$6$, and cycles-$8$ in SC-HGP codes. According to \Cref{lemma: cycle condition}, a cycle-$2g$ candidate $x_1,x_2,\dots,x_{2g}$ becomes a cycle-$2g$ in the SC-HGP code if $\sum\nolimits_{i=1}^{2g} (-1)^ix_i=0\mod L$. 

We highlighted three cycle candidates of lengths $4$ (blue), $6$ (red), and $8$ (green) in (\ref{eqn: example rigid cycle}), respectively. Recall the alternating sum conditions for each of them giving rise to a cycle in the SC-HGP codes: the blue cycle needs $a_2-(m-b_3)+(m-a_2)-b_3=0\mod L$, the red cycle needs $a_3-a_4+b_4-(m-a_4)+(m-a_3)-b_4=0\mod L$, and the green cycle needs $a_3-a_4+b_1-b_2+a_4-a_3+b_2-b_1=0\mod L$. Notice that parameters in the alternating sums for highlighted cycle candidates cancel so the cycle conditions are always satisfied regardless of the exact assignment of $\mathbf{P}_a$ and $\mathbf{P}_b$ once $\mathbf{A}$ and $\mathbf{B}$ are fixed. As an result, the multiplicities of these cycles are only dependent on the base matrices $\mathbf{A}$ and $\mathbf{B}$: we call cycles with this property \textbf{rigid cycles}. For example, each pair of nonzero entries in $\mathbf{A}$ and $\mathbf{B}$ results in $L$ rigid cycles-$4$ in the SC-HGP code, thus the total number of rigid cycles-$4$ is $4\times 4\times L=16L$. 

Similarly, we refer to the cycles whose multiplicities are dependent on the choice of $\mathbf{P}_a$ and $\mathbf{P}_b$ as \textbf{flexible cycles}. \Cref{exam: cycle SC-HGP} shows some examples of flexible cycle-$6$ and cycles-$8$ candidates in an SC-HGP code. Then, our objective is to find $\mathbf{P}_a$ and $\mathbf{P}_b$ that minimize the total number of flexible cycles in the resultant SC-HGP code, given the structure of an underlying HGP base matrix and the SC parameters, i.e., given $\mathbf{A}$, $\mathbf{B}$, $m_1$, $m_2$, $L_1$, and $L_2$. In \Cref{exam: cycle SC-HGP}, we show how flexible cycles can either be a single cycle candidate from $\mathbf{A}$ or $\mathbf{B}$, or be decomposed into a pair of node/edge/cycle candidate from $\mathbf{A}$ and $\mathbf{B}$.  

\begin{widetext}
\vspace{2em}
\begin{equation}\label{eqn: example rigid cycle}
\hspace*{5em}
\mathbf{P}=\left[
\begin{array}{c|c}
\begin{array}{CCC|CCC|CCC}
a_1 &  & a_2 &  &  &  &  &  &  \\
\ce\tikznode{81}{$a_3$} & \ce\tikznode{82}{$a_4$} &  &  &  &  &  &  &  \\
\hline
 &  &  & a_1 &  & \cl\tikznode{11}{$a_2$} &  &  &  \\
 &  &  & \ce\tikznode{86}{$a_3$} & \ce\tikznode{85}{$a_4$} &  &  &  &  \\
\hline
 &  &  &  &  &  & a_1 &  & a_2 \\
 &  &  &  &  &  & \cs\tikznode{61}{$a_3$} & \cs\tikznode{62}{$a_4$} &  \\
\end{array} & 
\begin{array}{cc|cc}
m-b_1 &  &  &  \\
 & m-b_1 &  &  \\
\hline
m-b_2 &  & \cl\tikznode{12}{$m-b_3$} &  \\
 & m-b_2 &  & m-b_3 \\
\hline
 &  & m-b_4 &  \\
 &  &  & m-b_4 \\
\end{array}\\
\hline
\begin{array}{CCC|CCC|CCC}
\ce\tikznode{88}{$b_1$} &  &  & \ce\tikznode{87}{$b_2$} &  &  &  &  &  \\
 & \ce\tikznode{83}{$b_1$} &  &  & \ce\tikznode{84}{$b_2$} &  &  &  &  \\
 &  & b_1 &  &  & b_2 &  &  &  \\
\hline
 &  &  & b_3 &  &  & \cs\tikznode{66}{$b_4$} &  &  \\
 &  &  &  & b_3 &  &  & \cs\tikznode{63}{$b_4$} &  \\
 &  &  &  &  & \cl\tikznode{21}{$b_3$} &  &  & b_4 \\
\end{array} & 
\begin{array}{cc|cc}
m-a_1 & m-a_3 &  &  \\
 & m-a_4 &  &  \\
m-a_2 &  &  &  \\
\hline
 &  & m-a_1 & \cs\tikznode{65}{$m-a_3$} \\
 &  &  & \cs\tikznode{64}{$m-a_4$} \\
 &  & \cl\tikznode{22}{$m-a_2$} &  \\
\end{array}\\
\end{array}
\right].
\end{equation}

\begin{tikzpicture}[remember picture,overlay,blue,rounded corners]
\draw[draw=mygreen,thick] (81)--(82)--(83)--(84)--(85)--(86)--(87)--(88)--(81);
\draw[draw=red,thick] (61)--(62)--(63)--(64)--(65)--(66)--(61);
  \draw[->,shorten <=1pt] (11)
    |- +(-0.6,-0.2)% short up and long right
    -- +(-4.4,-0.2)% short line to the right
    coordinate (p1)% remember position for other arrows
    node[left] {$(i_1,j_2)=(1,2)$};
  \draw[->,shorten <=1pt] (12)
    |- +(-0.6,-0.2)% short line to the right
    -- (p1);% up and right to pp

  \draw[->,shorten <=1pt] (21)
    |- +(-0.6,0.2)% short up and long right
    -- +(-4.4,0.2)% short line to the right
    coordinate (p2)% remember position for other arrows
    node[left] {$(j_1,i_2)=(3,2)$};
  \draw[->,shorten <=1pt] (22)
    |- +(-0.6,0.2)% short line to the right
    -- (p2);% up and right to pp

  \draw[->,shorten <=1pt] (11)
    -| +(-0.3,0.6)% short up and long right
    -- +(-0.3,1.2)% short line to the right
    coordinate (p3)% remember position for other arrows
    node[above] {$(j_1,j_2)=(3,2)$};
  \draw[->,shorten <=1pt] (21)
    -| +(-0.3,0.6)% short up and long right
    -- (p3);% up and right to pp

  \draw[->,shorten <=1pt] (12)
    -| +(0.8,0.6)% short up and long right
    -- +(0.8,1.2)% short line to the right
    coordinate (p4)% remember position for other arrows
    node[above] {$(i_1,i_2)=(1,2)$};
  \draw[->,shorten <=1pt] (22)
    -| +(0.8,0.6)% short up and long right
    -- (p4);% up and right to pp
\end{tikzpicture}
\end{widetext}

\begin{figure*}[!t]
    \centering
    \subfigure[Cycles-$4$ induced flexible cycle candidates.]{
        \resizebox{0.45\textwidth}{!}{\begin{tikzpicture}[darkstyle/.style={circle,draw,fill=gray!10,very thick,minimum size=20}]
\draw[color=black, very thick] (-25*0.5,0)--(30*0.5,0);
\draw[color=black, very thick] (-25*0.5,25*0.5)--(30*0.5,25*0.5);
\draw[color=black, very thick] (-25*0.5,-30*0.5)--(30*0.5,-30*0.5);
\draw[color=black, very thick] (0,-30*0.5)--(0,25*0.5);
\draw[color=black, very thick] (-25*0.5,-30*0.5)--(-25*0.5,25*0.5);
\draw[color=black, very thick] (30*0.5,-30*0.5)--(30*0.5,25*0.5);

\foreach \x in {-20,-15,-8,-3,8,13,20,25}{
    \draw[color=gray!50, very thick] (-25*0.5,-0.5*\x)--(30*0.5,-0.5*\x);
    \draw[color=gray!50, very thick] (0.5*\x,-30*0.5)--(0.5*\x,25*0.5);
}

\draw[draw=black,very thick] (-20*0.5,-50*0.5) rectangle ++(10,7);
\draw[draw=black,very thick] (5*0.5,-50*0.5) rectangle ++(10,7);

\draw[color=gray!50, very thick] (-20*0.5,-50*0.5+1)--(-20*0.5+10,-50*0.5+1);
\draw[color=gray!50, very thick] (-20*0.5,-50*0.5+5)--(-20*0.5+10,-50*0.5+5);
\draw[color=gray!50, very thick] (-20*0.5+3,-50*0.5)--(-20*0.5+3,-50*0.5+7);
\draw[color=gray!50, very thick] (-20*0.5+7,-50*0.5)--(-20*0.5+7,-50*0.5+7);

\node[] at (-20*0.5-1,-50*0.5+5) {\Huge $i_1$};
\node[] at (-20*0.5-1,-50*0.5+1) {\Huge $i'_1$};
\node[] at (-20*0.5+3,-50*0.5+8) {\Huge $j_1$};
\node[] at (-20*0.5+7,-50*0.5+8) {\Huge $j'_1$};

\foreach \x [count=\xi] in {3,7}{
    \foreach \y [count=\yi] in {5,1}{
        \filldraw [black] (-20*0.5+\x,-50*0.5+\y) circle (5pt);
        \pgfmathtruncatemacro{\label}{2*\xi+\yi-2}
        \node[] at (-20*0.5+\x-0.5,-50*0.5+\y+0.5) {\Huge $a_{\label}$};
    }
}

\draw[color=gray!50, very thick] (5*0.5,-50*0.5+2)--(5*0.5+10,-50*0.5+2);
\draw[color=gray!50, very thick] (5*0.5,-50*0.5+6)--(5*0.5+10,-50*0.5+6);
\draw[color=gray!50, very thick] (5*0.5+2,-50*0.5)--(5*0.5+2,-50*0.5+7);
\draw[color=gray!50, very thick] (5*0.5+7,-50*0.5)--(5*0.5+7,-50*0.5+7);

\node[] at (5*0.5-1,-50*0.5+6) {\Huge $i_2$};
\node[] at (5*0.5-1,-50*0.5+1) {\Huge $i'_2$};
\node[] at (5*0.5+2,-50*0.5+8) {\Huge $j_2$};
\node[] at (5*0.5+7,-50*0.5+8) {\Huge $j'_2$};

\foreach \x [count=\xi] in {2,7}{
    \foreach \y [count=\yi] in {6,2}{
        \filldraw [black] (5*0.5+\x,-50*0.5+\y) circle (5pt);
        \pgfmathtruncatemacro{\label}{2*\xi+\yi-2}
        \node[] at (5*0.5+\x+0.5,-50*0.5+\y-0.5) {\Huge $b_{\label}$};
    }
}

    \node[] at (-28*0.5,20*0.5) {\Huge $(i_1,j_2)$};
    \node[] at (-28*0.5,15*0.5) {\Huge $(i'_1,j_2)$};
    \node[] at (-28*0.5,8*0.5) {\Huge $(i_1,j'_2)$};
    \node[] at (-28*0.5,3*0.5) {\Huge $(i'_1,j_2)$};
    \node[] at (-28*0.5,-8*0.5) {\Huge $(j_1,i_2)$};
    \node[] at (-28*0.5,-13*0.5) {\Huge $(j'_1,i_2)$};
    \node[] at (-28*0.5,-20*0.5) {\Huge $(j_1,i'_2)$};
    \node[] at (-28*0.5,-25*0.5) {\Huge $(j_1',i'_2)$};

    \node[] at (-20*0.5,27*0.5) {\Huge $(j_1,j_2)$};
    \node[] at (-15*0.5,27*0.5) {\Huge $(j'_1,j_2)$};
    \node[] at (-8*0.5,27*0.5) {\Huge $(j_1,j'_2)$};
    \node[] at (-3*0.5,27*0.5) {\Huge $(j'_1,j'_2)$};
    \node[] at (8*0.5,27*0.5) {\Huge $(i_1,i_2)$};
    \node[] at (13*0.5,27*0.5) {\Huge $(i'_1,i_2)$};
    \node[] at (20*0.5,27*0.5) {\Huge $(i_1,i'_2)$};
    \node[] at (25*0.5,27*0.5) {\Huge $(i'_1,i'_2)$};
\foreach \x [count=\xi] in {8,20}{
    \foreach \y [count=\yi] in {20,8}{
        \filldraw [blue] (0.5*\x,0.5*\y) circle (5pt);
        \pgfmathtruncatemacro{\label}{2*\yi+\xi-2}
        \node[] (ulb1\label) at (0.5*\x-0.5,0.5*\y+0.5) {\Huge $m-b_{\label}$};
        \pgfmathtruncatemacro{\label}{2*\yi+\xi-2}
        \node[] (urb1\label) at (0.5*\x+0.5,0.5*\y+0.5) {};
        \pgfmathtruncatemacro{\label}{2*\yi+\xi-2}
        \node[] (blb1\label) at (0.5*\x-0.5,0.5*\y-0.5) {};
        \pgfmathtruncatemacro{\label}{2*\yi+\xi-2}
        \node[] (brb1\label) at (0.5*\x+0.5,0.5*\y-0.5) {};
    }
}

\foreach \x [count=\xi] in {13,25}{
    \foreach \y [count=\yi] in {15,3}{
        \filldraw [blue] (0.5*\x,0.5*\y) circle (5pt);
        \pgfmathtruncatemacro{\label}{2*\yi+\xi-2}
        \node[] (ulb2\label) at (0.5*\x-0.5,0.5*\y+0.5) {\Huge $m-b_{\label}$};
        \pgfmathtruncatemacro{\label}{2*\yi+\xi-2}
        \node[] (urb2\label) at (0.5*\x+0.5,0.5*\y+0.5) {};
        \pgfmathtruncatemacro{\label}{2*\yi+\xi-2}
        \node[] (blb2\label) at (0.5*\x-0.5,0.5*\y-0.5) {};
        \pgfmathtruncatemacro{\label}{2*\yi+\xi-2}
        \node[] (brb2\label) at (0.5*\x+0.5,0.5*\y-0.5) {};
    }
}

\foreach \x [count=\xi] in {-20,-8}{
    \foreach \y [count=\yi] in {-8,-20}{
        \filldraw [blue] (0.5*\x,0.5*\y) circle (5pt);
        \pgfmathtruncatemacro{\label}{2*\xi+\yi-2}
        \node[] (ulb3\label) at (0.5*\x-0.5,0.5*\y+0.5) {\Huge $b_{\label}$};
        \pgfmathtruncatemacro{\label}{2*\xi+\yi-2}
        \node[] (urb3\label) at (0.5*\x+0.5,0.5*\y+0.5) {};
        \pgfmathtruncatemacro{\label}{2*\xi+\yi-2}
        \node[] (blb3\label) at (0.5*\x-0.5,0.5*\y-0.5) {};
        \pgfmathtruncatemacro{\label}{2*\xi+\yi-2}
        \node[] (brb3\label) at (0.5*\x+0.5,0.5*\y-0.5) {};
    }
}

\foreach \x [count=\xi] in {-15,-3}{
    \foreach \y [count=\yi] in {-13,-25}{
        \filldraw [blue] (0.5*\x,0.5*\y) circle (5pt);
        \pgfmathtruncatemacro{\label}{2*\xi+\yi-2}
        \node[] (ulb4\label) at (0.5*\x-0.5,0.5*\y+0.5) {\Huge $b_{\label}$};
        \pgfmathtruncatemacro{\label}{2*\xi+\yi-2}
        \node[] (urb4\label) at (0.5*\x+0.5,0.5*\y+0.5) {};
        \pgfmathtruncatemacro{\label}{2*\xi+\yi-2}
        \node[] (blb4\label) at (0.5*\x-0.5,0.5*\y-0.5) {};
        \pgfmathtruncatemacro{\label}{2*\xi+\yi-2}
        \node[] (brb4\label) at (0.5*\x+0.5,0.5*\y-0.5) {};
    }
}

\foreach \x [count=\xi] in {-20,-15}{
    \foreach \y [count=\yi] in {20,15}{
        \filldraw [red] (0.5*\x,0.5*\y) circle (5pt);
        \pgfmathtruncatemacro{\label}{2*\xi+\yi-2}
        \node[] (ula1\label) at (0.5*\x-0.5,0.5*\y+0.5) {\Huge $a_{\label}$};
        \pgfmathtruncatemacro{\label}{2*\xi+\yi-2}
        \node[] (ura1\label) at (0.5*\x+0.5,0.5*\y+0.5) {};
        \pgfmathtruncatemacro{\label}{2*\xi+\yi-2}
        \node[] (bla1\label) at (0.5*\x-0.5,0.5*\y-0.5) {};
        \pgfmathtruncatemacro{\label}{2*\xi+\yi-2}
        \node[] (bra1\label) at (0.5*\x+0.5,0.5*\y-0.5) {};
    }
}

\foreach \x [count=\xi] in {-8,-3}{
    \foreach \y [count=\yi] in {8,3}{
        \filldraw [red] (0.5*\x,0.5*\y) circle (5pt);
        \pgfmathtruncatemacro{\label}{2*\xi+\yi-2}
        \node[] (ula2\label) at (0.5*\x-0.5,0.5*\y+0.5) {\Huge $a_{\label}$};
        \pgfmathtruncatemacro{\label}{2*\xi+\yi-2}
        \node[] (ura2\label) at (0.5*\x+0.5,0.5*\y+0.5) {};
        \pgfmathtruncatemacro{\label}{2*\xi+\yi-2}
        \node[] (bla2\label) at (0.5*\x-0.5,0.5*\y-0.5) {};
        \pgfmathtruncatemacro{\label}{2*\xi+\yi-2}
        \node[] (bra2\label) at (0.5*\x+0.5,0.5*\y-0.5) {};
    }
}

\foreach \x [count=\xi] in {8,13}{
    \foreach \y [count=\yi] in {-8,-13}{
        \filldraw [red] (0.5*\x,0.5*\y) circle (5pt);
        \pgfmathtruncatemacro{\label}{2*\yi+\xi-2}
        \node[] at (0.5*\x-0.5+\xi-1,0.5*\y+0.5) {\Huge $m-a_{\label}$};
        \pgfmathtruncatemacro{\label}{2*\yi+\xi-2}
        \node[] (ula3\label) at (0.5*\x-0.5,0.5*\y+0.5) {};
        \pgfmathtruncatemacro{\label}{2*\yi+\xi-2}
        \node[] (ura3\label) at (0.5*\x+0.5,0.5*\y+0.5) {};
        \pgfmathtruncatemacro{\label}{2*\yi+\xi-2}
        \node[] (bla3\label) at (0.5*\x-0.5,0.5*\y-0.5) {};
        \pgfmathtruncatemacro{\label}{2*\yi+\xi-2}
        \node[] (bra3\label) at (0.5*\x+0.5,0.5*\y-0.5) {};
    }
}

\foreach \x [count=\xi] in {20,25}{
    \foreach \y [count=\yi] in {-20,-25}{
        \filldraw [red] (0.5*\x,0.5*\y) circle (5pt);
        \pgfmathtruncatemacro{\label}{2*\yi+\xi-2}
        \node[] at (0.5*\x-0.5+\xi-1,0.5*\y+0.5) {\Huge $m-a_{\label}$};
        \pgfmathtruncatemacro{\label}{2*\yi+\xi-2}
        \node[] (ula4\label) at (0.5*\x-0.5,0.5*\y+0.5) {};
        \pgfmathtruncatemacro{\label}{2*\yi+\xi-2}
        \node[] (ura4\label) at (0.5*\x+0.5,0.5*\y+0.5) {};
        \pgfmathtruncatemacro{\label}{2*\yi+\xi-2}
        \node[] (bla4\label) at (0.5*\x-0.5,0.5*\y-0.5) {};
        \pgfmathtruncatemacro{\label}{2*\yi+\xi-2}
        \node[] (bra4\label) at (0.5*\x+0.5,0.5*\y-0.5) {};
    }
}

\draw[color=red,very thick] (ula11)--(ulb11)--(ulb13)--(ulb14)--(ula43)--(ula44)--(ulb22)--(ula12)--(ula11);
\draw[color=blue,very thick] (bra11)--(brb12)--(bra43)--(bra44)--(brb24)--(brb23)--(brb21)--(bra12)--(bra11);

\draw[color=orange,very thick] (ula21)--(ula23)--(ulb43)--(ula34)--(ula32)--(ulb33)--(ula21);
\draw[color=mygreen,very thick] (bra22)--(bra24)--(bra23)--(brb13)--(bra31)--(brb33)--(bra22);

\end{tikzpicture}}}
    \hspace{10pt} % Adjust space between the subfigures
    \subfigure[Cycles-$6$ induced flexible cycle candidates.]{
        \resizebox{0.45\textwidth}{!}{\begin{tikzpicture}[darkstyle/.style={circle,draw,fill=gray!10,very thick,minimum size=20}]
\draw[color=black, very thick] (-25*0.5,0)--(30*0.5,0);
\draw[color=black, very thick] (-25*0.5,25*0.5)--(30*0.5,25*0.5);
\draw[color=black, very thick] (-25*0.5,-30*0.5)--(30*0.5,-30*0.5);
\draw[color=black, very thick] (0,-30*0.5)--(0,25*0.5);
\draw[color=black, very thick] (-25*0.5,-30*0.5)--(-25*0.5,25*0.5);
\draw[color=black, very thick] (30*0.5,-30*0.5)--(30*0.5,25*0.5);

\foreach \x in {-20,-15,-8,8,13,20}{
    \draw[color=gray!50, very thick] (-25*0.5,-0.5*\x)--(30*0.5,-0.5*\x);
    \draw[color=gray!50, very thick] (0.5*\x,-30*0.5)--(0.5*\x,25*0.5);
}

\draw[draw=black,very thick] (-20*0.5,-50*0.5) rectangle ++(10,7);
\draw[draw=black,very thick] (5*0.5,-50*0.5) rectangle ++(10,7);

\draw[color=gray!50, very thick] (-20*0.5,-50*0.5+1)--(-20*0.5+10,-50*0.5+1);
\draw[color=gray!50, very thick] (-20*0.5,-50*0.5+4)--(-20*0.5+10,-50*0.5+4);
\draw[color=gray!50, very thick] (-20*0.5,-50*0.5+6)--(-20*0.5+10,-50*0.5+6);
\draw[color=gray!50, very thick] (-20*0.5+2,-50*0.5)--(-20*0.5+2,-50*0.5+7);
\draw[color=gray!50, very thick] (-20*0.5+6,-50*0.5)--(-20*0.5+6,-50*0.5+7);
\draw[color=gray!50, very thick] (-20*0.5+9,-50*0.5)--(-20*0.5+9,-50*0.5+7);

\node[] at (-20*0.5-1,-50*0.5+6) {\Huge $i_1$};
\node[] at (-20*0.5-1,-50*0.5+4) {\Huge $i'_1$};
\node[] at (-20*0.5-1,-50*0.5+1) {\Huge $i''_1$};
\node[] at (-20*0.5+2,-50*0.5+8) {\Huge $j_1$};
\node[] at (-20*0.5+6,-50*0.5+8) {\Huge $j'_1$};
\node[] at (-20*0.5+9,-50*0.5+8) {\Huge $j''_1$};

\filldraw [black] (-20*0.5+2,-50*0.5+6) circle (5pt);
\node[] at (-20*0.5+2-0.5,-50*0.5+6+0.5) {\Huge $a_{1}$};
\filldraw [black] (-20*0.5+6,-50*0.5+6) circle (5pt);
\node[] at (-20*0.5+6-0.5,-50*0.5+6+0.5) {\Huge $a_{2}$};
\filldraw [black] (-20*0.5+6,-50*0.5+4) circle (5pt);
\node[] at (-20*0.5+6-0.5,-50*0.5+4+0.5) {\Huge $a_{3}$};
\filldraw [black] (-20*0.5+9,-50*0.5+4) circle (5pt);
\node[] at (-20*0.5+9-0.5,-50*0.5+4+0.5) {\Huge $a_{4}$};
\filldraw [black] (-20*0.5+9,-50*0.5+1) circle (5pt);
\node[] at (-20*0.5+9-0.5,-50*0.5+1+0.5) {\Huge $a_{5}$};
\filldraw [black] (-20*0.5+2,-50*0.5+1) circle (5pt);
\node[] at (-20*0.5+2-0.5,-50*0.5+1+0.5) {\Huge $a_{6}$};

\draw[color=gray!50, very thick] (5*0.5,-50*0.5+3)--(5*0.5+10,-50*0.5+3);
\draw[color=gray!50, very thick] (5*0.5+6,-50*0.5)--(5*0.5+6,-50*0.5+7);

\node[] at (5*0.5-1,-50*0.5+3) {\Huge $i_2$};
\node[] at (5*0.5+6,-50*0.5+8) {\Huge $j_2$};

\filldraw [black] (5*0.5+6,-50*0.5+3) circle (5pt);
\node[] at (5*0.5+6-0.5,-50*0.5+3+0.5) {\Huge $b$};

    \node[] at (-28*0.5,20*0.5) {\Huge $(i_1,j_2)$};
    \node[] at (-28*0.5,15*0.5) {\Huge $(i'_1,j_2)$};
    \node[] at (-28*0.5,8*0.5) {\Huge $(i''_1,j_2)$};

    \node[] at (-28*0.5,-8*0.5) {\Huge $(j_1,i_2)$};
    \node[] at (-28*0.5,-13*0.5) {\Huge $(j'_1,i_2)$};
    \node[] at (-28*0.5,-20*0.5) {\Huge $(j''_1,i_2)$};

    \node[] at (-20*0.5,27*0.5) {\Huge $(j_1,j_2)$};
    \node[] at (-15*0.5,27*0.5) {\Huge $(j'_1,j_2)$};
    \node[] at (-8*0.5,27*0.5) {\Huge $(j''_1,j_2)$};
    
    \node[] at (8*0.5,27*0.5) {\Huge $(i_1,i_2)$};
    \node[] at (13*0.5,27*0.5) {\Huge $(i'_1,i_2)$};
    \node[] at (20*0.5,27*0.5) {\Huge $(i''_1,i_2)$};

\foreach \x in {8}
    \foreach \y in {20}{
        \filldraw [blue] (0.5*\x,0.5*\y) circle (5pt);
        \node[] (ulb11) at (0.5*\x-0.5,0.5*\y+0.5) {\Huge $m-b$};
        \node[] (brb11) at (0.5*\x+0.5,0.5*\y-0.5) {$ $};
        \node[] (b11) at (0.5*\x,0.5*\y) {$ $};
        \filldraw [blue] (-0.5*\y,-0.5*\x) circle (5pt);
        \node[] (ulb21) at (-0.5*\y-0.5,-0.5*\x+0.5) {\Huge $b$};
        \node[] (brb21) at (-0.5*\y+0.5,-0.5*\x-0.5) {$ $};
        \node[] (b21) at (-0.5*\y,-0.5*\x) {$ $};
    }

\foreach \x in {13}
    \foreach \y in {15}{
        \filldraw [blue] (0.5*\x,0.5*\y) circle (5pt);
        \node[] (ulb12) at (0.5*\x-0.5,0.5*\y+0.5) {\Huge $m-b$};
        \node[] (brb12) at (0.5*\x+0.5,0.5*\y-0.5) {$ $};
        \node[] (b12) at (0.5*\x,0.5*\y) {$ $};
        \filldraw [blue] (-0.5*\y,-0.5*\x) circle (5pt);
        \node[] (ulb22) at (-0.5*\y-0.5,-0.5*\x+0.5) {\Huge $b$};
        \node[] (brb22) at (-0.5*\y+0.5,-0.5*\x-0.5) {$ $};
        \node[] (b22) at (-0.5*\y,-0.5*\x) {$ $};
}

\foreach \x in {20}
    \foreach \y in {8}{
        \filldraw [blue] (0.5*\x,0.5*\y) circle (5pt);
        \node[] (ulb13) at (0.5*\x-0.5,0.5*\y+0.5) {\Huge $m-b$};
        \node[] (brb13) at (0.5*\x+0.5,0.5*\y-0.5) {$ $};
        \node[] (b13) at (0.5*\x,0.5*\y) {$ $};
        \filldraw [blue] (-0.5*\y,-0.5*\x) circle (5pt);
        \node[] (ulb23) at (-0.5*\y-0.5,-0.5*\x+0.5) {\Huge $b$};
        \node[] (brb23) at (-0.5*\y+0.5,-0.5*\x-0.5) {$ $};
        \node[] (b23) at (-0.5*\y,-0.5*\x) {$ $};
}

\foreach \x in {-20}
    \foreach \y in {20}{
        \filldraw [red] (0.5*\x,0.5*\y) circle (5pt);
        \node[] (ula11) at (0.5*\x-0.5,0.5*\y+0.5) {\Huge $a_1$};
        \node[] (bra11) at (0.5*\x+0.5,0.5*\y-0.5) {$ $};
        \node[] (a11) at (0.5*\x,0.5*\y) {$ $};
        \filldraw [red] (14+0.5*\x,-14+0.5*\y) circle (5pt);
        \node[] (ula21) at (14+0.5*\x-0.5,-14+0.5*\y+0.5) {\Huge $m-a_1$};
        \node[] (bra21) at (14+0.5*\x+0.5,-14+0.5*\y-0.5) {$ $};
        \node[] (a21) at (14+0.5*\x,-14+0.5*\y) {$ $};
}

\foreach \x in {-15}
    \foreach \y in {20}{
        \filldraw [red] (0.5*\x,0.5*\y) circle (5pt);
        \node[] (ula12) at (0.5*\x-0.5,0.5*\y+0.5) {\Huge $a_2$};
        \node[] (bra12) at (0.5*\x+0.5,0.5*\y-0.5) {$ $};
        \node[] (a12) at (0.5*\x,0.5*\y) {$ $};
        \filldraw [red] (11.5+0.5*\x,-16.5+0.5*\y) circle (5pt);
        \node[] (ula22) at (11.5+0.5*\x-0.5,-16.5+0.5*\y+0.5) {\Huge $m-a_2$};
        \node[] (bra22) at (11.5+0.5*\x+0.5,-16.5+0.5*\y-0.5) {$ $};
        \node[] (a22) at (11.5+0.5*\x,-16.5+0.5*\y) {$ $};
        }

\foreach \x in {-15}
    \foreach \y in {15}{
        \filldraw [red] (0.5*\x,0.5*\y) circle (5pt);
        \node[] (ula13) at (0.5*\x-0.5,0.5*\y+0.5) {\Huge $a_3$};
        \node[] (bra13) at (0.5*\x+0.5,0.5*\y-0.5) {$ $};
        \node[] (a13) at (0.5*\x,0.5*\y) {$ $};
        \filldraw [red] (14+0.5*\x,-14+0.5*\y) circle (5pt);
        \node[] at (14+0.5*\x-0.5+1,-14+0.5*\y+0.5) {\Huge $m-a_3$};
        \node[] (ula23) at (14+0.5*\x-0.5,-14+0.5*\y+0.5) {$ $};
        \node[] (bra23) at (14+0.5*\x+0.5,-14+0.5*\y-0.5) {$ $};
        \node[] (a23) at (14+0.5*\x,-14+0.5*\y) {$ $};
        }

\foreach \x in {-8}
    \foreach \y in {15}{
        \filldraw [red] (0.5*\x,0.5*\y) circle (5pt);
        \node[] (ula14) at (0.5*\x-0.5,0.5*\y+0.5) {\Huge $a_4$};
        \node[] (bra14) at (0.5*\x+0.5,0.5*\y-0.5) {$ $};
        \node[] (a14) at (0.5*\x,0.5*\y) {$ $};
        \filldraw [red] (10.5+0.5*\x,-17.5+0.5*\y) circle (5pt);
        \node[] (ula24) at (10.5+0.5*\x-0.5,-17.5+0.5*\y+0.5) {\Huge $m-a_4$};
        \node[] (bra24) at (10.5+0.5*\x+0.5,-17.5+0.5*\y-0.5) {$ $};
        \node[] (a24) at (10.5+0.5*\x,-17.5+0.5*\y) {$ $};
        }

\foreach \x in {-8}
    \foreach \y in {8}{
        \filldraw [red] (0.5*\x,0.5*\y) circle (5pt);
        \node[] (ula15) at (0.5*\x-0.5,0.5*\y+0.5) {\Huge $a_5$};
        \node[] (bra15) at (0.5*\x+0.5,0.5*\y-0.5) {$ $};
        \node[] (a15) at (0.5*\x,0.5*\y) {$ $};
        \filldraw [red] (14+0.5*\x,-14+0.5*\y) circle (5pt);
        \node[] (ula25) at (14+0.5*\x-0.5,-14+0.5*\y+0.5) {\Huge $m-a_5$};
        \node[] (bra25) at (14+0.5*\x+0.5,-14+0.5*\y-0.5) {$ $};
        \node[] (a25) at (14+0.5*\x,-14+0.5*\y) {$ $};
}

\foreach \x in {-20}
    \foreach \y in {8}{
        \filldraw [red] (0.5*\x,0.5*\y) circle (5pt);
        \node[] (ula16) at (0.5*\x-0.5,0.5*\y+0.5) {\Huge $a_6$};
        \node[] (bra16) at (0.5*\x+0.5,0.5*\y-0.5) {$ $};
        \node[] (a16) at (0.5*\x,0.5*\y) {$ $};
        \filldraw [red] (20+0.5*\x,-8+0.5*\y) circle (5pt);
        \node[] (ula26) at (20+0.5*\x-0.5,-8+0.5*\y+0.5) {\Huge $m-a_6$};
        \node[] (bra26) at (20+0.5*\x+0.5,-8+0.5*\y-0.5) {$ $};
        \node[] (a26) at (20+0.5*\x,-8+0.5*\y) {$ $};
}

\draw[color=red,very thick] (ula11)--(ulb11)--(ula22)--(ula23)--(ulb12)--(ula14)--(ula15)--(ula16)--(ula11);
\draw[color=mygreen,very thick] (bra11)--(bra12)--(bra13)--(brb12)--(bra24)--(brb23)--(bra15)--(bra16)--(bra11);
\draw[color=blue,very thick] (a11)--(a12)--(b22)--(a23)--(a24)--(a25)--(b13)--(a16)--(a11);
\end{tikzpicture}}}
    \caption{
        Examples show how flexible cycle candidates of SC-HGP codes can be decomposed into two different components from $\mathbf{P}_a$ and $\mathbf{P}_b$. The matrices in the bottom panels of (a) and (b) are $\mathbf{P}_a$ and $\mathbf{P}_b$, and those in the top panels are $\mathbf{P}$. Nodes from copies of $\mathbf{A}$ and $\mathbf{B}$ are highlighted by red nodes and blue nodes, respectively. The number at each node is the value in its partitioning matrix $\mathbf{P}_a$ or $\mathbf{P}_b$. The red and blue cycles in (a) shows two flexible cycles-$8$ candidates where each of them is resulting from a pair of cycles-$4$ candidates from $\mathbf{P}_a$ and $\mathbf{P}_b$. Let $s_a=a_1-a_2+a_4-a_3$, $s_b=b_1-b_2+b_4-b_3$, the alternating sums associated with the blue and red cycle candidates are $s_a-s_b$ and $s_a+s_b$, respectively. The orange and green cycles in (a) shows two flexible cycles-$6$ candidates where each of them is resulting from a cycle-$4$ candidate from $\mathbf{P}_a$ and node $b_3$ from $\mathbf{P}_b$. The alternating sums of both of them are $s_a$. (b) shows three flexible cycles-$8$ candidates where each of them is resulting from a cycle-$6$ candidate from $\mathbf{P}_a$ and an entry $b$ from $\mathbf{P}_b$. The alternating sums of all three cases are $a_1-a_2+a_3-a_4+a_5-a_6$.
    }
    \label{fig: cycle 8}
\end{figure*}

\begin{exam}\label{exam: cycle SC-HGP} Fig.~\ref{fig: cycle 8} shows how flexible cycle (candidates) in SC-HGP codes can be decomposed into components in $\mathbf{P}_a$ and $\mathbf{P}_b$.

We calculate alternating sums of the orange and green cycle-$6$ candidates in Fig.~\ref{fig: cycle 8}(a) to obtain 
\begin{equation*}
\begin{split}
&a_1-b_3+(m-a_2)-(m-a_4)+b_3-a_3\\
=&a_1+a_4-a_2-a_3,\textrm{ and }\\
&a_4-a_3+(m-b_3)-(m-a_1)+b_3-a_2\\
=&a_1+a_4-a_2-a_3.
\end{split}
\end{equation*} 
A flexible cycle-$6$ arise from these cycle candidates can be decomposed into a cycle-$4$ candidate in $\mathbf{P}_a$ that satisfies the alternating sum condition $a_1+a_4-a_2-a_3=0\mod L$ along with an entry $b_3$ in $\mathbf{P}_b$.

Let $s_a=a_1-a_2+a_4-a_3$ and $s_b=b_1-b_2+b_4-b_3$. We calculate alternating sums of the red and blue cycle-$8$ candidates in Fig.~\ref{fig: cycle 8}(a) to obtain 
\begin{equation*}
\begin{split}
&a_1-(m-b_1)+(m-b_3)-(m-b_4)+(m-a_3)\\
&-(m-a_4)+(m-b_2)-a_2=s_a+s_b,\textrm{ and } \\
&a_1-(m-b_2)+(m-a_3)-(m-a_4)+(m-b_4)\\
&-(m-b_3))+(m-b_1)-a_2=s_a-s_b.
\end{split}
\end{equation*} 

The condition of at least one of the cycle candidates determines a cycle-$8$ in the SC-HGP code is 
\begin{equation}
    |a_1+a_4-a_2-a_3|=|b_1+b_4-b_2-b_3|=S,
\end{equation}
and if $S=0$, both paths determine cycles-$8$. The associated cycles-$8$ can be decomposed into a pair of cycles-$4$ candidate in $\mathbf{P}_a$ and $\mathbf{P}_b$ with identical or opposite alternating sums.

We calculate the alternating sum of the three cycle-$8$ candidates in Fig.~\ref{fig: cycle 8}(b) to obtain
\begin{equation*}
\begin{split}
   &a_1-a_2+a_3-(m-b)+(m-a_4)-b\\
   &+a_5-a_6=(a_1-a_2+a_3-a_4+a_5-a_6).
\end{split}
\end{equation*}
A flexible cycles-$8$ arise from these cycle candidates can be decomposed into a cycle-$6$ candidate in $\mathbf{P}_a$ that satisfies the alternating sum condition $a_1-a_2+a_3-a_4+a_5-a_6=0\mod L$ along with an entry $b$ in $\mathbf{P}_b$.
\end{exam}

\begin{figure*}[!t]
\centering

\subfigure[$\{\{2,2\},\{0,0\}\}$.]{\resizebox{0.32\textwidth}{!}{\begin{tikzpicture}[darkstyle/.style={circle,draw,fill=gray!10,very thick,minimum size=20}]
\draw[color=black, very thick] (-25*0.5,0)--(30*0.5,0);
\draw[color=black, very thick] (-25*0.5,25*0.5)--(30*0.5,25*0.5);
\draw[color=black, very thick] (-25*0.5,-30*0.5)--(30*0.5,-30*0.5);
\draw[color=black, very thick] (0,-30*0.5)--(0,25*0.5);
\draw[color=black, very thick] (-25*0.5,-30*0.5)--(-25*0.5,25*0.5);
\draw[color=black, very thick] (30*0.5,-30*0.5)--(30*0.5,25*0.5);

\foreach \x in {-20,-15,-8,-3,8,13,20,25}{
    \draw[color=gray!50, very thick] (-25*0.5,-0.5*\x)--(30*0.5,-0.5*\x);
    \draw[color=gray!50, very thick] (0.5*\x,-30*0.5)--(0.5*\x,25*0.5);
}

\foreach \x [count=\xi] in {8,20}{
    \foreach \y [count=\yi] in {20,8}{
        \filldraw [blue] (0.5*\x,0.5*\y) circle (5pt);
        \pgfmathtruncatemacro{\label}{2*\yi+\xi-2}
        \node[] (ulb1\label) at (0.5*\x-0.5,0.5*\y+0.5) {\Huge $m-b_{\label}$};
        \pgfmathtruncatemacro{\label}{2*\yi+\xi-2}
        \node[] (urb1\label) at (0.5*\x+0.5,0.5*\y+0.5) {};
        \pgfmathtruncatemacro{\label}{2*\yi+\xi-2}
        \node[] (blb1\label) at (0.5*\x-0.5,0.5*\y-0.5) {};
        \pgfmathtruncatemacro{\label}{2*\yi+\xi-2}
        \node[] (brb1\label) at (0.5*\x+0.5,0.5*\y-0.5) {};
    }
}

\foreach \x [count=\xi] in {13,25}{
    \foreach \y [count=\yi] in {15,3}{
        \filldraw [blue] (0.5*\x,0.5*\y) circle (5pt);
        \pgfmathtruncatemacro{\label}{2*\yi+\xi-2}
        \node[] (ulb2\label) at (0.5*\x-0.5,0.5*\y+0.5) {\Huge $m-b_{\label}$};
        \pgfmathtruncatemacro{\label}{2*\yi+\xi-2}
        \node[] (urb2\label) at (0.5*\x+0.5,0.5*\y+0.5) {};
        \pgfmathtruncatemacro{\label}{2*\yi+\xi-2}
        \node[] (blb2\label) at (0.5*\x-0.5,0.5*\y-0.5) {};
        \pgfmathtruncatemacro{\label}{2*\yi+\xi-2}
        \node[] (brb2\label) at (0.5*\x+0.5,0.5*\y-0.5) {};
    }
}

\foreach \x [count=\xi] in {-20,-8}{
    \foreach \y [count=\yi] in {-8,-20}{
        \filldraw [blue] (0.5*\x,0.5*\y) circle (5pt);
        \pgfmathtruncatemacro{\label}{2*\xi+\yi-2}
        \node[] (ulb3\label) at (0.5*\x-0.5,0.5*\y+0.5) {\Huge $b_{\label}$};
        \pgfmathtruncatemacro{\label}{2*\xi+\yi-2}
        \node[] (urb3\label) at (0.5*\x+0.5,0.5*\y+0.5) {};
        \pgfmathtruncatemacro{\label}{2*\xi+\yi-2}
        \node[] (blb3\label) at (0.5*\x-0.5,0.5*\y-0.5) {};
        \pgfmathtruncatemacro{\label}{2*\xi+\yi-2}
        \node[] (brb3\label) at (0.5*\x+0.5,0.5*\y-0.5) {};
    }
}

\foreach \x [count=\xi] in {-15,-3}{
    \foreach \y [count=\yi] in {-13,-25}{
        \filldraw [blue] (0.5*\x,0.5*\y) circle (5pt);
        \pgfmathtruncatemacro{\label}{2*\xi+\yi-2}
        \node[] (ulb4\label) at (0.5*\x-0.5,0.5*\y+0.5) {\Huge $b_{\label}$};
        \pgfmathtruncatemacro{\label}{2*\xi+\yi-2}
        \node[] (urb4\label) at (0.5*\x+0.5,0.5*\y+0.5) {};
        \pgfmathtruncatemacro{\label}{2*\xi+\yi-2}
        \node[] (blb4\label) at (0.5*\x-0.5,0.5*\y-0.5) {};
        \pgfmathtruncatemacro{\label}{2*\xi+\yi-2}
        \node[] (brb4\label) at (0.5*\x+0.5,0.5*\y-0.5) {};
    }
}

\foreach \x [count=\xi] in {-20,-15}{
    \foreach \y [count=\yi] in {20,15}{
        \filldraw [red] (0.5*\x,0.5*\y) circle (5pt);
        \pgfmathtruncatemacro{\label}{2*\xi+\yi-2}
        \node[] (ula1\label) at (0.5*\x-0.5,0.5*\y+0.5) {\Huge $a_{\label}$};
        \pgfmathtruncatemacro{\label}{2*\xi+\yi-2}
        \node[] (ura1\label) at (0.5*\x+0.5,0.5*\y+0.5) {};
        \pgfmathtruncatemacro{\label}{2*\xi+\yi-2}
        \node[] (bla1\label) at (0.5*\x-0.5,0.5*\y-0.5) {};
        \pgfmathtruncatemacro{\label}{2*\xi+\yi-2}
        \node[] (bra1\label) at (0.5*\x+0.5,0.5*\y-0.5) {};
    }
}

\foreach \x [count=\xi] in {-8,-3}{
    \foreach \y [count=\yi] in {8,3}{
        \filldraw [red] (0.5*\x,0.5*\y) circle (5pt);
        \pgfmathtruncatemacro{\label}{2*\xi+\yi-2}
        \node[] (ula2\label) at (0.5*\x-0.5,0.5*\y+0.5) {\Huge $a_{\label}$};
        \pgfmathtruncatemacro{\label}{2*\xi+\yi-2}
        \node[] (ura2\label) at (0.5*\x+0.5,0.5*\y+0.5) {};
        \pgfmathtruncatemacro{\label}{2*\xi+\yi-2}
        \node[] (bla2\label) at (0.5*\x-0.5,0.5*\y-0.5) {};
        \pgfmathtruncatemacro{\label}{2*\xi+\yi-2}
        \node[] (bra2\label) at (0.5*\x+0.5,0.5*\y-0.5) {};
    }
}

\foreach \x [count=\xi] in {8,13}{
    \foreach \y [count=\yi] in {-8,-13}{
        \filldraw [red] (0.5*\x,0.5*\y) circle (5pt);
        \pgfmathtruncatemacro{\label}{2*\yi+\xi-2}
        \node[] at (0.5*\x-0.5+\xi-1,0.5*\y+0.5) {\Huge $m-a_{\label}$};
        \pgfmathtruncatemacro{\label}{2*\yi+\xi-2}
        \node[] (ula3\label) at (0.5*\x-0.5,0.5*\y+0.5) {};
        \pgfmathtruncatemacro{\label}{2*\yi+\xi-2}
        \node[] (ura3\label) at (0.5*\x+0.5,0.5*\y+0.5) {};
        \pgfmathtruncatemacro{\label}{2*\yi+\xi-2}
        \node[] (bla3\label) at (0.5*\x-0.5,0.5*\y-0.5) {};
        \pgfmathtruncatemacro{\label}{2*\yi+\xi-2}
        \node[] (bra3\label) at (0.5*\x+0.5,0.5*\y-0.5) {};
    }
}

\foreach \x [count=\xi] in {20,25}{
    \foreach \y [count=\yi] in {-20,-25}{
        \filldraw [red] (0.5*\x,0.5*\y) circle (5pt);
        \pgfmathtruncatemacro{\label}{2*\yi+\xi-2}
        \node[] at (0.5*\x-0.5+\xi-1,0.5*\y+0.5) {\Huge $m-a_{\label}$};
        \pgfmathtruncatemacro{\label}{2*\yi+\xi-2}
        \node[] (ula4\label) at (0.5*\x-0.5,0.5*\y+0.5) {};
        \pgfmathtruncatemacro{\label}{2*\yi+\xi-2}
        \node[] (ura4\label) at (0.5*\x+0.5,0.5*\y+0.5) {};
        \pgfmathtruncatemacro{\label}{2*\yi+\xi-2}
        \node[] (bla4\label) at (0.5*\x-0.5,0.5*\y-0.5) {};
        \pgfmathtruncatemacro{\label}{2*\yi+\xi-2}
        \node[] (bra4\label) at (0.5*\x+0.5,0.5*\y-0.5) {};
    }
}

\draw[color=red,very thick] (ula13)--(ulb12)--(ulb14)--(ula21)--(ula22)--(ulb23)--(ulb21)--(ula14)--(ula13);
\draw[color=blue,very thick] (bra13)--(brb11)--(brb13)--(bra21)--(bra22)--(brb24)--(brb22)--(bra14)--(bra13);

\draw[color=mygreen,very thick] (ula31)--(ula33)--(ulb41)--(ulb42)--(ula44)--(ula42)--(ulb32)--(ulb31)--(ula31);

\end{tikzpicture}}}\hspace{5pt}
\subfigure[$\{\{2,0\},\{1,1\}\}$.]{\resizebox{0.32\textwidth}{!}{\begin{tikzpicture}[darkstyle/.style={circle,draw,fill=gray!10,very thick,minimum size=20}]
\draw[color=black, very thick] (-25*0.5,0)--(30*0.5,0);
\draw[color=black, very thick] (-25*0.5,25*0.5)--(30*0.5,25*0.5);
\draw[color=black, very thick] (-25*0.5,-30*0.5)--(30*0.5,-30*0.5);
\draw[color=black, very thick] (0,-30*0.5)--(0,25*0.5);
\draw[color=black, very thick] (-25*0.5,-30*0.5)--(-25*0.5,25*0.5);
\draw[color=black, very thick] (30*0.5,-30*0.5)--(30*0.5,25*0.5);

\foreach \x in {-20,-15,-8,-3,8,13,20,25}{
    \draw[color=gray!50, very thick] (-25*0.5,-0.5*\x)--(30*0.5,-0.5*\x);
    \draw[color=gray!50, very thick] (0.5*\x,-30*0.5)--(0.5*\x,25*0.5);
}

\foreach \x [count=\xi] in {8,20}{
    \foreach \y [count=\yi] in {20,8}{
        \filldraw [blue] (0.5*\x,0.5*\y) circle (5pt);
        \pgfmathtruncatemacro{\label}{2*\yi+\xi-2}
        \node[] (ulb1\label) at (0.5*\x-0.5,0.5*\y+0.5) {\Huge $m-b_{\label}$};
        \pgfmathtruncatemacro{\label}{2*\yi+\xi-2}
        \node[] (urb1\label) at (0.5*\x+0.5,0.5*\y+0.5) {};
        \pgfmathtruncatemacro{\label}{2*\yi+\xi-2}
        \node[] (blb1\label) at (0.5*\x-0.5,0.5*\y-0.5) {};
        \pgfmathtruncatemacro{\label}{2*\yi+\xi-2}
        \node[] (brb1\label) at (0.5*\x+0.5,0.5*\y-0.5) {};
    }
}

\foreach \x [count=\xi] in {13,25}{
    \foreach \y [count=\yi] in {15,3}{
        \filldraw [blue] (0.5*\x,0.5*\y) circle (5pt);
        \pgfmathtruncatemacro{\label}{2*\yi+\xi-2}
        \node[] (ulb2\label) at (0.5*\x-0.5,0.5*\y+0.5) {\Huge $m-b_{\label}$};
        \pgfmathtruncatemacro{\label}{2*\yi+\xi-2}
        \node[] (urb2\label) at (0.5*\x+0.5,0.5*\y+0.5) {};
        \pgfmathtruncatemacro{\label}{2*\yi+\xi-2}
        \node[] (blb2\label) at (0.5*\x-0.5,0.5*\y-0.5) {};
        \pgfmathtruncatemacro{\label}{2*\yi+\xi-2}
        \node[] (brb2\label) at (0.5*\x+0.5,0.5*\y-0.5) {};
    }
}

\foreach \x [count=\xi] in {-20,-8}{
    \foreach \y [count=\yi] in {-8,-20}{
        \filldraw [blue] (0.5*\x,0.5*\y) circle (5pt);
        \pgfmathtruncatemacro{\label}{2*\xi+\yi-2}
        \node[] (ulb3\label) at (0.5*\x-0.5,0.5*\y+0.5) {\Huge $b_{\label}$};
        \pgfmathtruncatemacro{\label}{2*\xi+\yi-2}
        \node[] (urb3\label) at (0.5*\x+0.5,0.5*\y+0.5) {};
        \pgfmathtruncatemacro{\label}{2*\xi+\yi-2}
        \node[] (blb3\label) at (0.5*\x-0.5,0.5*\y-0.5) {};
        \pgfmathtruncatemacro{\label}{2*\xi+\yi-2}
        \node[] (brb3\label) at (0.5*\x+0.5,0.5*\y-0.5) {};
    }
}

\foreach \x [count=\xi] in {-15,-3}{
    \foreach \y [count=\yi] in {-13,-25}{
        \filldraw [blue] (0.5*\x,0.5*\y) circle (5pt);
        \pgfmathtruncatemacro{\label}{2*\xi+\yi-2}
        \node[] (ulb4\label) at (0.5*\x-0.5,0.5*\y+0.5) {\Huge $b_{\label}$};
        \pgfmathtruncatemacro{\label}{2*\xi+\yi-2}
        \node[] (urb4\label) at (0.5*\x+0.5,0.5*\y+0.5) {};
        \pgfmathtruncatemacro{\label}{2*\xi+\yi-2}
        \node[] (blb4\label) at (0.5*\x-0.5,0.5*\y-0.5) {};
        \pgfmathtruncatemacro{\label}{2*\xi+\yi-2}
        \node[] (brb4\label) at (0.5*\x+0.5,0.5*\y-0.5) {};
    }
}

\foreach \x [count=\xi] in {-20,-15}{
    \foreach \y [count=\yi] in {20,15}{
        \filldraw [red] (0.5*\x,0.5*\y) circle (5pt);
        \pgfmathtruncatemacro{\label}{2*\xi+\yi-2}
        \node[] (ula1\label) at (0.5*\x-0.5,0.5*\y+0.5) {\Huge $a_{\label}$};
        \pgfmathtruncatemacro{\label}{2*\xi+\yi-2}
        \node[] (ura1\label) at (0.5*\x+0.5,0.5*\y+0.5) {};
        \pgfmathtruncatemacro{\label}{2*\xi+\yi-2}
        \node[] (bla1\label) at (0.5*\x-0.5,0.5*\y-0.5) {};
        \pgfmathtruncatemacro{\label}{2*\xi+\yi-2}
        \node[] (bra1\label) at (0.5*\x+0.5,0.5*\y-0.5) {};
    }
}

\foreach \x [count=\xi] in {-8,-3}{
    \foreach \y [count=\yi] in {8,3}{
        \filldraw [red] (0.5*\x,0.5*\y) circle (5pt);
        \pgfmathtruncatemacro{\label}{2*\xi+\yi-2}
        \node[] (ula2\label) at (0.5*\x-0.5,0.5*\y+0.5) {\Huge $a_{\label}$};
        \pgfmathtruncatemacro{\label}{2*\xi+\yi-2}
        \node[] (ura2\label) at (0.5*\x+0.5,0.5*\y+0.5) {};
        \pgfmathtruncatemacro{\label}{2*\xi+\yi-2}
        \node[] (bla2\label) at (0.5*\x-0.5,0.5*\y-0.5) {};
        \pgfmathtruncatemacro{\label}{2*\xi+\yi-2}
        \node[] (bra2\label) at (0.5*\x+0.5,0.5*\y-0.5) {};
    }
}

\foreach \x [count=\xi] in {8,13}{
    \foreach \y [count=\yi] in {-8,-13}{
        \filldraw [red] (0.5*\x,0.5*\y) circle (5pt);
        \pgfmathtruncatemacro{\label}{2*\yi+\xi-2}
        \node[] at (0.5*\x-0.5+\xi-1,0.5*\y+0.5) {\Huge $m-a_{\label}$};
        \pgfmathtruncatemacro{\label}{2*\yi+\xi-2}
        \node[] (ula3\label) at (0.5*\x-0.5,0.5*\y+0.5) {};
        \pgfmathtruncatemacro{\label}{2*\yi+\xi-2}
        \node[] (ura3\label) at (0.5*\x+0.5,0.5*\y+0.5) {};
        \pgfmathtruncatemacro{\label}{2*\yi+\xi-2}
        \node[] (bla3\label) at (0.5*\x-0.5,0.5*\y-0.5) {};
        \pgfmathtruncatemacro{\label}{2*\yi+\xi-2}
        \node[] (bra3\label) at (0.5*\x+0.5,0.5*\y-0.5) {};
    }
}

\foreach \x [count=\xi] in {20,25}{
    \foreach \y [count=\yi] in {-20,-25}{
        \filldraw [red] (0.5*\x,0.5*\y) circle (5pt);
        \pgfmathtruncatemacro{\label}{2*\yi+\xi-2}
        \node[] at (0.5*\x-0.5+\xi-1,0.5*\y+0.5) {\Huge $m-a_{\label}$};
        \pgfmathtruncatemacro{\label}{2*\yi+\xi-2}
        \node[] (ula4\label) at (0.5*\x-0.5,0.5*\y+0.5) {};
        \pgfmathtruncatemacro{\label}{2*\yi+\xi-2}
        \node[] (ura4\label) at (0.5*\x+0.5,0.5*\y+0.5) {};
        \pgfmathtruncatemacro{\label}{2*\yi+\xi-2}
        \node[] (bla4\label) at (0.5*\x-0.5,0.5*\y-0.5) {};
        \pgfmathtruncatemacro{\label}{2*\yi+\xi-2}
        \node[] (bra4\label) at (0.5*\x+0.5,0.5*\y-0.5) {};
    }
}

\draw[color=red,very thick] (ula11)--(ulb11)--(ula33)--(ulb43)--(ulb44)--(ula44)--(ulb22)--(ula12)--(ula11);
\draw[color=blue,very thick] (bra11)--(brb12)--(bra43)--(brb44)--(brb43)--(bra34)--(brb21)--(bra12)--(bra11);

\end{tikzpicture}}}\hspace{5pt}
\subfigure[$\{\{2,1\},\{1,0\}\}$.]{\resizebox{0.32\textwidth}{!}{\begin{tikzpicture}[darkstyle/.style={circle,draw,fill=gray!10,very thick,minimum size=20}]
\draw[color=black, very thick] (-25*0.5,0)--(30*0.5,0);
\draw[color=black, very thick] (-25*0.5,25*0.5)--(30*0.5,25*0.5);
\draw[color=black, very thick] (-25*0.5,-30*0.5)--(30*0.5,-30*0.5);
\draw[color=black, very thick] (0,-30*0.5)--(0,25*0.5);
\draw[color=black, very thick] (-25*0.5,-30*0.5)--(-25*0.5,25*0.5);
\draw[color=black, very thick] (30*0.5,-30*0.5)--(30*0.5,25*0.5);

\foreach \x in {-20,-15,-8,-3,8,13,20,25}{
    \draw[color=gray!50, very thick] (-25*0.5,-0.5*\x)--(30*0.5,-0.5*\x);
    \draw[color=gray!50, very thick] (0.5*\x,-30*0.5)--(0.5*\x,25*0.5);
}

\foreach \x [count=\xi] in {8,20}{
    \foreach \y [count=\yi] in {20,8}{
        \filldraw [blue] (0.5*\x,0.5*\y) circle (5pt);
        \pgfmathtruncatemacro{\label}{2*\yi+\xi-2}
        \node[] (ulb1\label) at (0.5*\x-0.5,0.5*\y+0.5) {\Huge $m-b_{\label}$};
        \pgfmathtruncatemacro{\label}{2*\yi+\xi-2}
        \node[] (urb1\label) at (0.5*\x+0.5,0.5*\y+0.5) {};
        \pgfmathtruncatemacro{\label}{2*\yi+\xi-2}
        \node[] (blb1\label) at (0.5*\x-0.5,0.5*\y-0.5) {};
        \pgfmathtruncatemacro{\label}{2*\yi+\xi-2}
        \node[] (brb1\label) at (0.5*\x+0.5,0.5*\y-0.5) {};
    }
}

\foreach \x [count=\xi] in {13,25}{
    \foreach \y [count=\yi] in {15,3}{
        \filldraw [blue] (0.5*\x,0.5*\y) circle (5pt);
        \pgfmathtruncatemacro{\label}{2*\yi+\xi-2}
        \node[] (ulb2\label) at (0.5*\x-0.5,0.5*\y+0.5) {\Huge $m-b_{\label}$};
        \pgfmathtruncatemacro{\label}{2*\yi+\xi-2}
        \node[] (urb2\label) at (0.5*\x+0.5,0.5*\y+0.5) {};
        \pgfmathtruncatemacro{\label}{2*\yi+\xi-2}
        \node[] (blb2\label) at (0.5*\x-0.5,0.5*\y-0.5) {};
        \pgfmathtruncatemacro{\label}{2*\yi+\xi-2}
        \node[] (brb2\label) at (0.5*\x+0.5,0.5*\y-0.5) {};
    }
}

\foreach \x [count=\xi] in {-20,-8}{
    \foreach \y [count=\yi] in {-8,-20}{
        \filldraw [blue] (0.5*\x,0.5*\y) circle (5pt);
        \pgfmathtruncatemacro{\label}{2*\xi+\yi-2}
        \node[] (ulb3\label) at (0.5*\x-0.5,0.5*\y+0.5) {\Huge $b_{\label}$};
        \pgfmathtruncatemacro{\label}{2*\xi+\yi-2}
        \node[] (urb3\label) at (0.5*\x+0.5,0.5*\y+0.5) {};
        \pgfmathtruncatemacro{\label}{2*\xi+\yi-2}
        \node[] (blb3\label) at (0.5*\x-0.5,0.5*\y-0.5) {};
        \pgfmathtruncatemacro{\label}{2*\xi+\yi-2}
        \node[] (brb3\label) at (0.5*\x+0.5,0.5*\y-0.5) {};
    }
}

\foreach \x [count=\xi] in {-15,-3}{
    \foreach \y [count=\yi] in {-13,-25}{
        \filldraw [blue] (0.5*\x,0.5*\y) circle (5pt);
        \pgfmathtruncatemacro{\label}{2*\xi+\yi-2}
        \node[] (ulb4\label) at (0.5*\x-0.5,0.5*\y+0.5) {\Huge $b_{\label}$};
        \pgfmathtruncatemacro{\label}{2*\xi+\yi-2}
        \node[] (urb4\label) at (0.5*\x+0.5,0.5*\y+0.5) {};
        \pgfmathtruncatemacro{\label}{2*\xi+\yi-2}
        \node[] (blb4\label) at (0.5*\x-0.5,0.5*\y-0.5) {};
        \pgfmathtruncatemacro{\label}{2*\xi+\yi-2}
        \node[] (brb4\label) at (0.5*\x+0.5,0.5*\y-0.5) {};
    }
}

\foreach \x [count=\xi] in {-20,-15}{
    \foreach \y [count=\yi] in {20,15}{
        \filldraw [red] (0.5*\x,0.5*\y) circle (5pt);
        \pgfmathtruncatemacro{\label}{2*\xi+\yi-2}
        \node[] (ula1\label) at (0.5*\x-0.5,0.5*\y+0.5) {\Huge $a_{\label}$};
        \pgfmathtruncatemacro{\label}{2*\xi+\yi-2}
        \node[] (ura1\label) at (0.5*\x+0.5,0.5*\y+0.5) {};
        \pgfmathtruncatemacro{\label}{2*\xi+\yi-2}
        \node[] (bla1\label) at (0.5*\x-0.5,0.5*\y-0.5) {};
        \pgfmathtruncatemacro{\label}{2*\xi+\yi-2}
        \node[] (bra1\label) at (0.5*\x+0.5,0.5*\y-0.5) {};
    }
}

\foreach \x [count=\xi] in {-8,-3}{
    \foreach \y [count=\yi] in {8,3}{
        \filldraw [red] (0.5*\x,0.5*\y) circle (5pt);
        \pgfmathtruncatemacro{\label}{2*\xi+\yi-2}
        \node[] (ula2\label) at (0.5*\x-0.5,0.5*\y+0.5) {\Huge $a_{\label}$};
        \pgfmathtruncatemacro{\label}{2*\xi+\yi-2}
        \node[] (ura2\label) at (0.5*\x+0.5,0.5*\y+0.5) {};
        \pgfmathtruncatemacro{\label}{2*\xi+\yi-2}
        \node[] (bla2\label) at (0.5*\x-0.5,0.5*\y-0.5) {};
        \pgfmathtruncatemacro{\label}{2*\xi+\yi-2}
        \node[] (bra2\label) at (0.5*\x+0.5,0.5*\y-0.5) {};
    }
}

\foreach \x [count=\xi] in {8,13}{
    \foreach \y [count=\yi] in {-8,-13}{
        \filldraw [red] (0.5*\x,0.5*\y) circle (5pt);
        \pgfmathtruncatemacro{\label}{2*\yi+\xi-2}
        \node[] at (0.5*\x-0.5+\xi-1,0.5*\y+0.5) {\Huge $m-a_{\label}$};
        \pgfmathtruncatemacro{\label}{2*\yi+\xi-2}
        \node[] (ula3\label) at (0.5*\x-0.5,0.5*\y+0.5) {};
        \pgfmathtruncatemacro{\label}{2*\yi+\xi-2}
        \node[] (ura3\label) at (0.5*\x+0.5,0.5*\y+0.5) {};
        \pgfmathtruncatemacro{\label}{2*\yi+\xi-2}
        \node[] (bla3\label) at (0.5*\x-0.5,0.5*\y-0.5) {};
        \pgfmathtruncatemacro{\label}{2*\yi+\xi-2}
        \node[] (bra3\label) at (0.5*\x+0.5,0.5*\y-0.5) {};
    }
}

\foreach \x [count=\xi] in {20,25}{
    \foreach \y [count=\yi] in {-20,-25}{
        \filldraw [red] (0.5*\x,0.5*\y) circle (5pt);
        \pgfmathtruncatemacro{\label}{2*\yi+\xi-2}
        \node[] at (0.5*\x-0.5+\xi-1,0.5*\y+0.5) {\Huge $m-a_{\label}$};
        \pgfmathtruncatemacro{\label}{2*\yi+\xi-2}
        \node[] (ula4\label) at (0.5*\x-0.5,0.5*\y+0.5) {};
        \pgfmathtruncatemacro{\label}{2*\yi+\xi-2}
        \node[] (ura4\label) at (0.5*\x+0.5,0.5*\y+0.5) {};
        \pgfmathtruncatemacro{\label}{2*\yi+\xi-2}
        \node[] (bla4\label) at (0.5*\x-0.5,0.5*\y-0.5) {};
        \pgfmathtruncatemacro{\label}{2*\yi+\xi-2}
        \node[] (bra4\label) at (0.5*\x+0.5,0.5*\y-0.5) {};
    }
}

\draw[color=blue,very thick] (bra11)--(brb12)--(bra43)--(brb44)--(bra24)--(brb23)--(brb21)--(bra12)--(bra11);
\draw[color=red,very thick] (ula11)--(ulb11)--(ula33)--(ulb43)--(ula24)--(ulb24)--(ulb22)--(ula12)--(ula11);

\end{tikzpicture}}}

\subfigure[$\{\{1,1\},\{1,1\}\}$.]{\resizebox{0.32\textwidth}{!}{\begin{tikzpicture}[darkstyle/.style={circle,draw,fill=gray!10,very thick,minimum size=20}]
\draw[color=black, very thick] (-25*0.5,0)--(30*0.5,0);
\draw[color=black, very thick] (-25*0.5,25*0.5)--(30*0.5,25*0.5);
\draw[color=black, very thick] (-25*0.5,-30*0.5)--(30*0.5,-30*0.5);
\draw[color=black, very thick] (0,-30*0.5)--(0,25*0.5);
\draw[color=black, very thick] (-25*0.5,-30*0.5)--(-25*0.5,25*0.5);
\draw[color=black, very thick] (30*0.5,-30*0.5)--(30*0.5,25*0.5);

\foreach \x in {-20,-15,-8,-3,8,13,20,25}{
    \draw[color=gray!50, very thick] (-25*0.5,-0.5*\x)--(30*0.5,-0.5*\x);
    \draw[color=gray!50, very thick] (0.5*\x,-30*0.5)--(0.5*\x,25*0.5);
}

\foreach \x [count=\xi] in {8,20}{
    \foreach \y [count=\yi] in {20,8}{
        \filldraw [blue] (0.5*\x,0.5*\y) circle (5pt);
        \pgfmathtruncatemacro{\label}{2*\yi+\xi-2}
        \node[] (ulb1\label) at (0.5*\x-0.5,0.5*\y+0.5) {\Huge $m-b_{\label}$};
        \pgfmathtruncatemacro{\label}{2*\yi+\xi-2}
        \node[] (urb1\label) at (0.5*\x+0.5,0.5*\y+0.5) {};
        \pgfmathtruncatemacro{\label}{2*\yi+\xi-2}
        \node[] (blb1\label) at (0.5*\x-0.5,0.5*\y-0.5) {};
        \pgfmathtruncatemacro{\label}{2*\yi+\xi-2}
        \node[] (brb1\label) at (0.5*\x+0.5,0.5*\y-0.5) {};
    }
}

\foreach \x [count=\xi] in {13,25}{
    \foreach \y [count=\yi] in {15,3}{
        \filldraw [blue] (0.5*\x,0.5*\y) circle (5pt);
        \pgfmathtruncatemacro{\label}{2*\yi+\xi-2}
        \node[] (ulb2\label) at (0.5*\x-0.5,0.5*\y+0.5) {\Huge $m-b_{\label}$};
        \pgfmathtruncatemacro{\label}{2*\yi+\xi-2}
        \node[] (urb2\label) at (0.5*\x+0.5,0.5*\y+0.5) {};
        \pgfmathtruncatemacro{\label}{2*\yi+\xi-2}
        \node[] (blb2\label) at (0.5*\x-0.5,0.5*\y-0.5) {};
        \pgfmathtruncatemacro{\label}{2*\yi+\xi-2}
        \node[] (brb2\label) at (0.5*\x+0.5,0.5*\y-0.5) {};
    }
}

\foreach \x [count=\xi] in {-20,-8}{
    \foreach \y [count=\yi] in {-8,-20}{
        \filldraw [blue] (0.5*\x,0.5*\y) circle (5pt);
        \pgfmathtruncatemacro{\label}{2*\xi+\yi-2}
        \node[] (ulb3\label) at (0.5*\x-0.5,0.5*\y+0.5) {\Huge $b_{\label}$};
        \pgfmathtruncatemacro{\label}{2*\xi+\yi-2}
        \node[] (urb3\label) at (0.5*\x+0.5,0.5*\y+0.5) {};
        \pgfmathtruncatemacro{\label}{2*\xi+\yi-2}
        \node[] (blb3\label) at (0.5*\x-0.5,0.5*\y-0.5) {};
        \pgfmathtruncatemacro{\label}{2*\xi+\yi-2}
        \node[] (brb3\label) at (0.5*\x+0.5,0.5*\y-0.5) {};
    }
}

\foreach \x [count=\xi] in {-15,-3}{
    \foreach \y [count=\yi] in {-13,-25}{
        \filldraw [blue] (0.5*\x,0.5*\y) circle (5pt);
        \pgfmathtruncatemacro{\label}{2*\xi+\yi-2}
        \node[] (ulb4\label) at (0.5*\x-0.5,0.5*\y+0.5) {\Huge $b_{\label}$};
        \pgfmathtruncatemacro{\label}{2*\xi+\yi-2}
        \node[] (urb4\label) at (0.5*\x+0.5,0.5*\y+0.5) {};
        \pgfmathtruncatemacro{\label}{2*\xi+\yi-2}
        \node[] (blb4\label) at (0.5*\x-0.5,0.5*\y-0.5) {};
        \pgfmathtruncatemacro{\label}{2*\xi+\yi-2}
        \node[] (brb4\label) at (0.5*\x+0.5,0.5*\y-0.5) {};
    }
}

\foreach \x [count=\xi] in {-20,-15}{
    \foreach \y [count=\yi] in {20,15}{
        \filldraw [red] (0.5*\x,0.5*\y) circle (5pt);
        \pgfmathtruncatemacro{\label}{2*\xi+\yi-2}
        \node[] (ula1\label) at (0.5*\x-0.5,0.5*\y+0.5) {\Huge $a_{\label}$};
        \pgfmathtruncatemacro{\label}{2*\xi+\yi-2}
        \node[] (ura1\label) at (0.5*\x+0.5,0.5*\y+0.5) {};
        \pgfmathtruncatemacro{\label}{2*\xi+\yi-2}
        \node[] (bla1\label) at (0.5*\x-0.5,0.5*\y-0.5) {};
        \pgfmathtruncatemacro{\label}{2*\xi+\yi-2}
        \node[] (bra1\label) at (0.5*\x+0.5,0.5*\y-0.5) {};
    }
}

\foreach \x [count=\xi] in {-8,-3}{
    \foreach \y [count=\yi] in {8,3}{
        \filldraw [red] (0.5*\x,0.5*\y) circle (5pt);
        \pgfmathtruncatemacro{\label}{2*\xi+\yi-2}
        \node[] (ula2\label) at (0.5*\x-0.5,0.5*\y+0.5) {\Huge $a_{\label}$};
        \pgfmathtruncatemacro{\label}{2*\xi+\yi-2}
        \node[] (ura2\label) at (0.5*\x+0.5,0.5*\y+0.5) {};
        \pgfmathtruncatemacro{\label}{2*\xi+\yi-2}
        \node[] (bla2\label) at (0.5*\x-0.5,0.5*\y-0.5) {};
        \pgfmathtruncatemacro{\label}{2*\xi+\yi-2}
        \node[] (bra2\label) at (0.5*\x+0.5,0.5*\y-0.5) {};
    }
}

\foreach \x [count=\xi] in {8,13}{
    \foreach \y [count=\yi] in {-8,-13}{
        \filldraw [red] (0.5*\x,0.5*\y) circle (5pt);
        \pgfmathtruncatemacro{\label}{2*\yi+\xi-2}
        \node[] at (0.5*\x-0.5+\xi-1,0.5*\y+0.5) {\Huge $m-a_{\label}$};
        \pgfmathtruncatemacro{\label}{2*\yi+\xi-2}
        \node[] (ula3\label) at (0.5*\x-0.5,0.5*\y+0.5) {};
        \pgfmathtruncatemacro{\label}{2*\yi+\xi-2}
        \node[] (ura3\label) at (0.5*\x+0.5,0.5*\y+0.5) {};
        \pgfmathtruncatemacro{\label}{2*\yi+\xi-2}
        \node[] (bla3\label) at (0.5*\x-0.5,0.5*\y-0.5) {};
        \pgfmathtruncatemacro{\label}{2*\yi+\xi-2}
        \node[] (bra3\label) at (0.5*\x+0.5,0.5*\y-0.5) {};
    }
}

\foreach \x [count=\xi] in {20,25}{
    \foreach \y [count=\yi] in {-20,-25}{
        \filldraw [red] (0.5*\x,0.5*\y) circle (5pt);
        \pgfmathtruncatemacro{\label}{2*\yi+\xi-2}
        \node[] at (0.5*\x-0.5+\xi-1,0.5*\y+0.5) {\Huge $m-a_{\label}$};
        \pgfmathtruncatemacro{\label}{2*\yi+\xi-2}
        \node[] (ula4\label) at (0.5*\x-0.5,0.5*\y+0.5) {};
        \pgfmathtruncatemacro{\label}{2*\yi+\xi-2}
        \node[] (ura4\label) at (0.5*\x+0.5,0.5*\y+0.5) {};
        \pgfmathtruncatemacro{\label}{2*\yi+\xi-2}
        \node[] (bla4\label) at (0.5*\x-0.5,0.5*\y-0.5) {};
        \pgfmathtruncatemacro{\label}{2*\yi+\xi-2}
        \node[] (bra4\label) at (0.5*\x+0.5,0.5*\y-0.5) {};
    }
}

\draw[color=red,very thick] (ula11)--(ulb11)--(ula33)--(ulb43)--(ula24)--(ulb24)--(ula42)--(ulb32)--(ula11);
\draw[color=blue,very thick] (bra11)--(brb12)--(bra43)--(brb44)--(bra24)--(brb23)--(bra32)--(brb31)--(bra11);

\end{tikzpicture}}}\hspace{5pt}
\subfigure[$\{\{3,0\},\{1,0\}\}$.]{\resizebox{0.32\textwidth}{!}{\begin{tikzpicture}[darkstyle/.style={circle,draw,fill=gray!10,very thick,minimum size=20}]
\draw[color=black, very thick] (-25*0.5,0)--(30*0.5,0);
\draw[color=black, very thick] (-25*0.5,25*0.5)--(30*0.5,25*0.5);
\draw[color=black, very thick] (-25*0.5,-30*0.5)--(30*0.5,-30*0.5);
\draw[color=black, very thick] (0,-30*0.5)--(0,25*0.5);
\draw[color=black, very thick] (-25*0.5,-30*0.5)--(-25*0.5,25*0.5);
\draw[color=black, very thick] (30*0.5,-30*0.5)--(30*0.5,25*0.5);

\foreach \x in {-20,-15,-8,-3,8,13,20,25}{
    \draw[color=gray!50, very thick] (-25*0.5,-0.5*\x)--(30*0.5,-0.5*\x);
    \draw[color=gray!50, very thick] (0.5*\x,-30*0.5)--(0.5*\x,25*0.5);
}

\foreach \x [count=\xi] in {8,20}{
    \foreach \y [count=\yi] in {20,8}{
        \filldraw [blue] (0.5*\x,0.5*\y) circle (5pt);
        \pgfmathtruncatemacro{\label}{2*\yi+\xi-2}
        \node[] (ulb1\label) at (0.5*\x-0.5,0.5*\y+0.5) {\Huge $m-b_{\label}$};
        \pgfmathtruncatemacro{\label}{2*\yi+\xi-2}
        \node[] (urb1\label) at (0.5*\x+0.5,0.5*\y+0.5) {};
        \pgfmathtruncatemacro{\label}{2*\yi+\xi-2}
        \node[] (blb1\label) at (0.5*\x-0.5,0.5*\y-0.5) {};
        \pgfmathtruncatemacro{\label}{2*\yi+\xi-2}
        \node[] (brb1\label) at (0.5*\x+0.5,0.5*\y-0.5) {};
    }
}

\foreach \x [count=\xi] in {13,25}{
    \foreach \y [count=\yi] in {15,3}{
        \filldraw [blue] (0.5*\x,0.5*\y) circle (5pt);
        \pgfmathtruncatemacro{\label}{2*\yi+\xi-2}
        \node[] (ulb2\label) at (0.5*\x-0.5,0.5*\y+0.5) {\Huge $m-b_{\label}$};
        \pgfmathtruncatemacro{\label}{2*\yi+\xi-2}
        \node[] (urb2\label) at (0.5*\x+0.5,0.5*\y+0.5) {};
        \pgfmathtruncatemacro{\label}{2*\yi+\xi-2}
        \node[] (blb2\label) at (0.5*\x-0.5,0.5*\y-0.5) {};
        \pgfmathtruncatemacro{\label}{2*\yi+\xi-2}
        \node[] (brb2\label) at (0.5*\x+0.5,0.5*\y-0.5) {};
    }
}

\foreach \x [count=\xi] in {-20,-8}{
    \foreach \y [count=\yi] in {-8,-20}{
        \filldraw [blue] (0.5*\x,0.5*\y) circle (5pt);
        \pgfmathtruncatemacro{\label}{2*\xi+\yi-2}
        \node[] (ulb3\label) at (0.5*\x-0.5,0.5*\y+0.5) {\Huge $b_{\label}$};
        \pgfmathtruncatemacro{\label}{2*\xi+\yi-2}
        \node[] (urb3\label) at (0.5*\x+0.5,0.5*\y+0.5) {};
        \pgfmathtruncatemacro{\label}{2*\xi+\yi-2}
        \node[] (blb3\label) at (0.5*\x-0.5,0.5*\y-0.5) {};
        \pgfmathtruncatemacro{\label}{2*\xi+\yi-2}
        \node[] (brb3\label) at (0.5*\x+0.5,0.5*\y-0.5) {};
    }
}

\foreach \x [count=\xi] in {-15,-3}{
    \foreach \y [count=\yi] in {-13,-25}{
        \filldraw [blue] (0.5*\x,0.5*\y) circle (5pt);
        \pgfmathtruncatemacro{\label}{2*\xi+\yi-2}
        \node[] (ulb4\label) at (0.5*\x-0.5,0.5*\y+0.5) {\Huge $b_{\label}$};
        \pgfmathtruncatemacro{\label}{2*\xi+\yi-2}
        \node[] (urb4\label) at (0.5*\x+0.5,0.5*\y+0.5) {};
        \pgfmathtruncatemacro{\label}{2*\xi+\yi-2}
        \node[] (blb4\label) at (0.5*\x-0.5,0.5*\y-0.5) {};
        \pgfmathtruncatemacro{\label}{2*\xi+\yi-2}
        \node[] (brb4\label) at (0.5*\x+0.5,0.5*\y-0.5) {};
    }
}

\foreach \x [count=\xi] in {-20,-15}{
    \foreach \y [count=\yi] in {20,15}{
        \filldraw [red] (0.5*\x,0.5*\y) circle (5pt);
        \pgfmathtruncatemacro{\label}{2*\xi+\yi-2}
        \node[] (ula1\label) at (0.5*\x-0.5,0.5*\y+0.5) {\Huge $a_{\label}$};
        \pgfmathtruncatemacro{\label}{2*\xi+\yi-2}
        \node[] (ura1\label) at (0.5*\x+0.5,0.5*\y+0.5) {};
        \pgfmathtruncatemacro{\label}{2*\xi+\yi-2}
        \node[] (bla1\label) at (0.5*\x-0.5,0.5*\y-0.5) {};
        \pgfmathtruncatemacro{\label}{2*\xi+\yi-2}
        \node[] (bra1\label) at (0.5*\x+0.5,0.5*\y-0.5) {};
    }
}

\foreach \x [count=\xi] in {-8,-3}{
    \foreach \y [count=\yi] in {8,3}{
        \filldraw [red] (0.5*\x,0.5*\y) circle (5pt);
        \pgfmathtruncatemacro{\label}{2*\xi+\yi-2}
        \node[] (ula2\label) at (0.5*\x-0.5,0.5*\y+0.5) {\Huge $a_{\label}$};
        \pgfmathtruncatemacro{\label}{2*\xi+\yi-2}
        \node[] (ura2\label) at (0.5*\x+0.5,0.5*\y+0.5) {};
        \pgfmathtruncatemacro{\label}{2*\xi+\yi-2}
        \node[] (bla2\label) at (0.5*\x-0.5,0.5*\y-0.5) {};
        \pgfmathtruncatemacro{\label}{2*\xi+\yi-2}
        \node[] (bra2\label) at (0.5*\x+0.5,0.5*\y-0.5) {};
    }
}

\foreach \x [count=\xi] in {8,13}{
    \foreach \y [count=\yi] in {-8,-13}{
        \filldraw [red] (0.5*\x,0.5*\y) circle (5pt);
        \pgfmathtruncatemacro{\label}{2*\yi+\xi-2}
        \node[] at (0.5*\x-0.5+\xi-1,0.5*\y+0.5) {\Huge $m-a_{\label}$};
        \pgfmathtruncatemacro{\label}{2*\yi+\xi-2}
        \node[] (ula3\label) at (0.5*\x-0.5,0.5*\y+0.5) {};
        \pgfmathtruncatemacro{\label}{2*\yi+\xi-2}
        \node[] (ura3\label) at (0.5*\x+0.5,0.5*\y+0.5) {};
        \pgfmathtruncatemacro{\label}{2*\yi+\xi-2}
        \node[] (bla3\label) at (0.5*\x-0.5,0.5*\y-0.5) {};
        \pgfmathtruncatemacro{\label}{2*\yi+\xi-2}
        \node[] (bra3\label) at (0.5*\x+0.5,0.5*\y-0.5) {};
    }
}

\foreach \x [count=\xi] in {20,25}{
    \foreach \y [count=\yi] in {-20,-25}{
        \filldraw [red] (0.5*\x,0.5*\y) circle (5pt);
        \pgfmathtruncatemacro{\label}{2*\yi+\xi-2}
        \node[] at (0.5*\x-0.5+\xi-1,0.5*\y+0.5) {\Huge $m-a_{\label}$};
        \pgfmathtruncatemacro{\label}{2*\yi+\xi-2}
        \node[] (ula4\label) at (0.5*\x-0.5,0.5*\y+0.5) {};
        \pgfmathtruncatemacro{\label}{2*\yi+\xi-2}
        \node[] (ura4\label) at (0.5*\x+0.5,0.5*\y+0.5) {};
        \pgfmathtruncatemacro{\label}{2*\yi+\xi-2}
        \node[] (bla4\label) at (0.5*\x-0.5,0.5*\y-0.5) {};
        \pgfmathtruncatemacro{\label}{2*\yi+\xi-2}
        \node[] (bra4\label) at (0.5*\x+0.5,0.5*\y-0.5) {};
    }
}

\draw[color=blue,very thick] (bra12)--(bra14)--(bra13)--(brb11)--(bra31)--(brb33)--(brb34)--(brb32)--(bra12);
\draw[color=red,very thick] (ula12)--(ula14)--(ula13)--(ulb12)--(ulb14)--(ulb13)--(ula31)--(ulb31)--(ula12);

\end{tikzpicture}}}\hspace{5pt}
\subfigure[$\{\{3,1\},\{0,0\}\}$.]{\resizebox{0.32\textwidth}{!}{\begin{tikzpicture}[darkstyle/.style={circle,draw,fill=gray!10,very thick,minimum size=20}]
\draw[color=black, very thick] (-25*0.5,0)--(30*0.5,0);
\draw[color=black, very thick] (-25*0.5,25*0.5)--(30*0.5,25*0.5);
\draw[color=black, very thick] (-25*0.5,-30*0.5)--(30*0.5,-30*0.5);
\draw[color=black, very thick] (0,-30*0.5)--(0,25*0.5);
\draw[color=black, very thick] (-25*0.5,-30*0.5)--(-25*0.5,25*0.5);
\draw[color=black, very thick] (30*0.5,-30*0.5)--(30*0.5,25*0.5);

\foreach \x in {-20,-15,-8,-3,8,13,20,25}{
    \draw[color=gray!50, very thick] (-25*0.5,-0.5*\x)--(30*0.5,-0.5*\x);
    \draw[color=gray!50, very thick] (0.5*\x,-30*0.5)--(0.5*\x,25*0.5);
}

\foreach \x [count=\xi] in {8,20}{
    \foreach \y [count=\yi] in {20,8}{
        \filldraw [blue] (0.5*\x,0.5*\y) circle (5pt);
        \pgfmathtruncatemacro{\label}{2*\yi+\xi-2}
        \node[] (ulb1\label) at (0.5*\x-0.5,0.5*\y+0.5) {\Huge $m-b_{\label}$};
        \pgfmathtruncatemacro{\label}{2*\yi+\xi-2}
        \node[] (urb1\label) at (0.5*\x+0.5,0.5*\y+0.5) {};
        \pgfmathtruncatemacro{\label}{2*\yi+\xi-2}
        \node[] (blb1\label) at (0.5*\x-0.5,0.5*\y-0.5) {};
        \pgfmathtruncatemacro{\label}{2*\yi+\xi-2}
        \node[] (brb1\label) at (0.5*\x+0.5,0.5*\y-0.5) {};
    }
}

\foreach \x [count=\xi] in {13,25}{
    \foreach \y [count=\yi] in {15,3}{
        \filldraw [blue] (0.5*\x,0.5*\y) circle (5pt);
        \pgfmathtruncatemacro{\label}{2*\yi+\xi-2}
        \node[] (ulb2\label) at (0.5*\x-0.5,0.5*\y+0.5) {\Huge $m-b_{\label}$};
        \pgfmathtruncatemacro{\label}{2*\yi+\xi-2}
        \node[] (urb2\label) at (0.5*\x+0.5,0.5*\y+0.5) {};
        \pgfmathtruncatemacro{\label}{2*\yi+\xi-2}
        \node[] (blb2\label) at (0.5*\x-0.5,0.5*\y-0.5) {};
        \pgfmathtruncatemacro{\label}{2*\yi+\xi-2}
        \node[] (brb2\label) at (0.5*\x+0.5,0.5*\y-0.5) {};
    }
}

\foreach \x [count=\xi] in {-20,-8}{
    \foreach \y [count=\yi] in {-8,-20}{
        \filldraw [blue] (0.5*\x,0.5*\y) circle (5pt);
        \pgfmathtruncatemacro{\label}{2*\xi+\yi-2}
        \node[] (ulb3\label) at (0.5*\x-0.5,0.5*\y+0.5) {\Huge $b_{\label}$};
        \pgfmathtruncatemacro{\label}{2*\xi+\yi-2}
        \node[] (urb3\label) at (0.5*\x+0.5,0.5*\y+0.5) {};
        \pgfmathtruncatemacro{\label}{2*\xi+\yi-2}
        \node[] (blb3\label) at (0.5*\x-0.5,0.5*\y-0.5) {};
        \pgfmathtruncatemacro{\label}{2*\xi+\yi-2}
        \node[] (brb3\label) at (0.5*\x+0.5,0.5*\y-0.5) {};
    }
}

\foreach \x [count=\xi] in {-15,-3}{
    \foreach \y [count=\yi] in {-13,-25}{
        \filldraw [blue] (0.5*\x,0.5*\y) circle (5pt);
        \pgfmathtruncatemacro{\label}{2*\xi+\yi-2}
        \node[] (ulb4\label) at (0.5*\x-0.5,0.5*\y+0.5) {\Huge $b_{\label}$};
        \pgfmathtruncatemacro{\label}{2*\xi+\yi-2}
        \node[] (urb4\label) at (0.5*\x+0.5,0.5*\y+0.5) {};
        \pgfmathtruncatemacro{\label}{2*\xi+\yi-2}
        \node[] (blb4\label) at (0.5*\x-0.5,0.5*\y-0.5) {};
        \pgfmathtruncatemacro{\label}{2*\xi+\yi-2}
        \node[] (brb4\label) at (0.5*\x+0.5,0.5*\y-0.5) {};
    }
}

\foreach \x [count=\xi] in {-20,-15}{
    \foreach \y [count=\yi] in {20,15}{
        \filldraw [red] (0.5*\x,0.5*\y) circle (5pt);
        \pgfmathtruncatemacro{\label}{2*\xi+\yi-2}
        \node[] (ula1\label) at (0.5*\x-0.5,0.5*\y+0.5) {\Huge $a_{\label}$};
        \pgfmathtruncatemacro{\label}{2*\xi+\yi-2}
        \node[] (ura1\label) at (0.5*\x+0.5,0.5*\y+0.5) {};
        \pgfmathtruncatemacro{\label}{2*\xi+\yi-2}
        \node[] (bla1\label) at (0.5*\x-0.5,0.5*\y-0.5) {};
        \pgfmathtruncatemacro{\label}{2*\xi+\yi-2}
        \node[] (bra1\label) at (0.5*\x+0.5,0.5*\y-0.5) {};
    }
}

\foreach \x [count=\xi] in {-8,-3}{
    \foreach \y [count=\yi] in {8,3}{
        \filldraw [red] (0.5*\x,0.5*\y) circle (5pt);
        \pgfmathtruncatemacro{\label}{2*\xi+\yi-2}
        \node[] (ula2\label) at (0.5*\x-0.5,0.5*\y+0.5) {\Huge $a_{\label}$};
        \pgfmathtruncatemacro{\label}{2*\xi+\yi-2}
        \node[] (ura2\label) at (0.5*\x+0.5,0.5*\y+0.5) {};
        \pgfmathtruncatemacro{\label}{2*\xi+\yi-2}
        \node[] (bla2\label) at (0.5*\x-0.5,0.5*\y-0.5) {};
        \pgfmathtruncatemacro{\label}{2*\xi+\yi-2}
        \node[] (bra2\label) at (0.5*\x+0.5,0.5*\y-0.5) {};
    }
}

\foreach \x [count=\xi] in {8,13}{
    \foreach \y [count=\yi] in {-8,-13}{
        \filldraw [red] (0.5*\x,0.5*\y) circle (5pt);
        \pgfmathtruncatemacro{\label}{2*\yi+\xi-2}
        \node[] at (0.5*\x-0.5+\xi-1,0.5*\y+0.5) {\Huge $m-a_{\label}$};
        \pgfmathtruncatemacro{\label}{2*\yi+\xi-2}
        \node[] (ula3\label) at (0.5*\x-0.5,0.5*\y+0.5) {};
        \pgfmathtruncatemacro{\label}{2*\yi+\xi-2}
        \node[] (ura3\label) at (0.5*\x+0.5,0.5*\y+0.5) {};
        \pgfmathtruncatemacro{\label}{2*\yi+\xi-2}
        \node[] (bla3\label) at (0.5*\x-0.5,0.5*\y-0.5) {};
        \pgfmathtruncatemacro{\label}{2*\yi+\xi-2}
        \node[] (bra3\label) at (0.5*\x+0.5,0.5*\y-0.5) {};
    }
}

\foreach \x [count=\xi] in {20,25}{
    \foreach \y [count=\yi] in {-20,-25}{
        \filldraw [red] (0.5*\x,0.5*\y) circle (5pt);
        \pgfmathtruncatemacro{\label}{2*\yi+\xi-2}
        \node[] at (0.5*\x-0.5+\xi-1,0.5*\y+0.5) {\Huge $m-a_{\label}$};
        \pgfmathtruncatemacro{\label}{2*\yi+\xi-2}
        \node[] (ula4\label) at (0.5*\x-0.5,0.5*\y+0.5) {};
        \pgfmathtruncatemacro{\label}{2*\yi+\xi-2}
        \node[] (ura4\label) at (0.5*\x+0.5,0.5*\y+0.5) {};
        \pgfmathtruncatemacro{\label}{2*\yi+\xi-2}
        \node[] (bla4\label) at (0.5*\x-0.5,0.5*\y-0.5) {};
        \pgfmathtruncatemacro{\label}{2*\yi+\xi-2}
        \node[] (bra4\label) at (0.5*\x+0.5,0.5*\y-0.5) {};
    }
}

\draw[color=blue,very thick] (bra12)--(bra14)--(bra13)--(brb11)--(brb13)--(bra21)--(brb34)--(brb32)--(bra12);
\draw[color=red,very thick] (ula12)--(ula14)--(ula13)--(ulb12)--(ulb14)--(ula21)--(ulb33)--(ulb31)--(ula12);

\end{tikzpicture}}}
\caption{Different partitioning patterns of cycle-$8$ candidates resulting from a pair of cycle-$4$ candidates in  $\mathbf{P}_a$ and $\mathbf{P}_b$. The \textbf{partitioning pattern} is represented by $\{\{a,b\},\{c,d\}\}$, where each one of $a,b,c,d$ corresponds to the number of nodes in each group of $\{a_1,a_2,a_3,a_4\}$ that form a cycle-$4$ candidate. The outer bracket separates those correspond to nodes in the top left panel (i.e., $\mathbf{I}_{n_2\times n_2}\otimes\mathbf{A}$) and those correspond to  nodes in the bottom right panel (i.e., within $\mathbf{I}_{r_2\times r_2}\otimes\mathbf{A}^{\textrm{T}}$). For example, in (a), all the nodes associated with $a_i$, $i=1,\dots,4$, are in the top left panel, while each group of $\{a_1,a_2,a_3,a_4\}$ in this panel contains $2$ nodes of the cycle-$8$ candidate. Therefore, the partitioning pattern is $\{\{2,2\},\{0,0\}\}$. Similarly, the partitioning pattern of cycles in Fig.~\ref{fig: cycle 8}(a) is $\{\{2,0\},\{2,0\}\}$. The green cycle in (a) is also a cycle-$8$ candidate, but the alternating sum is not dependent on $b_i$, $i=1,\dots,4$, thus these cycles can be easily removed by removing cycle-$4$ candidates in $\mathbf{P}_a$ with an zero alternating sum.}
\label{fig: enum cycle 4-4}
\end{figure*}

Let $n_{2g,i,j}^A$ and $n_{2g,i,j}^B$ denote the number of cycle-$2g$ candidates, $g=2,3,4$, with alternating sums $i$, $j$ (modulo $L$) respectively in the partitioning matrices $\mathbf{P}_a$, $\mathbf{P}_b$. Let $n_{2g}^A$ and $n_{2g}^B$ denote the total number of cycle-$2g$ candidates in $\mathbf{P}_a$ and $\mathbf{P}_b$, $g=2,3,4$, respectively. Let $L_1$, $L_2$ be the coupling lengths of the SC-HGP codes. Each cycle-$2g$ candidate that satisfies the alternating sum condition gives rise to $L=L_1L_2$ flexible cycles-$2g$ in the resulting SC-HGP code. We define $N_{2g}$ to be the total number of flexible cycles-$2g$ divided by $L$. 

\begin{lemma}\label{lemma: count cycle} Let $e_A$, $e_B$ denote the number of non-zero entries in $\mathbf{A}$, $\mathbf{B}$, respectively. Suppose that for any cycle-$4$ candidate in $\mathbf{P}_a$ and $\mathbf{P}_b$, the alternating sum is non-zero. Then,
\begin{equation}\label{eqn: count cycles sc HGP}
    \begin{split}
        %N_4=&n_{4,0,0}^A(n_2+r_2)+n_{4,0,0}^B(n_1+r_1),\\
        &N_6=n_{6,0,0}^A(n_2+r_2)+n_{6,0,0}^B(n_1+r_1),\\
                    %&+(4+8)(n_{4,0,0}^Ae_B+n_{4,0,0}^Be_A),\\
        &N_8\\
        =&n_{8,0,0}^A(n_2+r_2)+n_{8,0,0}^B(n_1+r_1)+30(n_{6,0,0}^Ae_B\\
        &+n_{6,0,0}^Be_A)+124\Bigl(2n_{4,0,0}^An_{4,0,0}^B+\\
        &\sum_{(i,j)\neq(0,0)} (n_{4,i,j}^A+n_{4,-i,-j}^A)(n_{4,i,j}^B+n_{4,-i,-j}^B)\Bigr).
    \end{split}
\end{equation}
\end{lemma}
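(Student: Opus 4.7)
The plan is to enumerate, type by type, the flexible cycle candidates in the Tanner graph of the SC-HGP code of \Cref{cons: HGP} and then apply the alternating sum condition of \Cref{lemma: cycle condition} to determine which candidates actually close up. Recalling the four block types of edges in (\ref{eqn: cyc opt sc hgp}), every closed walk in the protograph projects to a closed walk in $\mathbf{P}_a$ and a closed walk in $\mathbf{P}_b$ that share vertices at exactly the ``corner'' nodes where the walk switches between the two factors. As in \Cref{exam: cycle SC-HGP}, the alternating sum of a candidate in $\mathbf{P}$ is then the sum (or signed sum, depending on the partitioning pattern) of the alternating sums of its two projections. This gives a systematic bookkeeping device.

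For $N_6$, I would first observe that, under the hypothesis that every cycle-$4$ candidate in $\mathbf{P}_a$ and $\mathbf{P}_b$ has non-zero alternating sum, no flexible cycle-$6$ can be ``mixed'' across the two factors (a mixed cycle-$6$ would force a cycle-$4$ projection with zero alternating sum in one of the factors). Hence every flexible cycle-$6$ candidate is either a cycle-$6$ candidate in $\mathbf{P}_a$ supported at a fixed node of the $\mathbf{P}_b$-factor, or symmetrically in $\mathbf{P}_b$. In the first case, the supporting node can be any of the $n_2$ variable-node copies or $r_2$ check-node copies of the factor $\mathbf{P}_a$, producing $n_{6,0,0}^A(n_2+r_2)$ contributions; the $\mathbf{P}_b$ case gives $n_{6,0,0}^B(n_1+r_1)$, matching (\ref{eqn: count cycles sc HGP}).

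For $N_8$, I would split the candidates into four mutually exclusive families according to how the eight edges are distributed between the two factors: $(8,0)$, $(0,8)$, $(6,2)$ or $(2,6)$, and $(4,4)$. The $(8,0)$ and $(0,8)$ families again yield one factor times the number of nodes in the other, giving $n_{8,0,0}^A(n_2+r_2)+n_{8,0,0}^B(n_1+r_1)$. For the $(6,2)$ family the $\mathbf{P}_b$-projection is a single edge of $\mathbf{B}$ (contributing the factor $e_B$), and the cycle-$6$ projection in $\mathbf{P}_a$ must have zero alternating sum; the multiplicity $30$ counts the number of ways an extra edge can be inserted into a cycle-$6$ candidate to form a cycle-$8$ candidate with the prescribed partitioning pattern (accounting for the choice of insertion point and orientation, and for the two dual node-types in the other factor). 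The $(4,4)$ family is the most delicate: here both projections are cycle-$4$ candidates, and the alternating sum of the cycle-$8$ candidate is either the sum or the difference of the two projection sums depending on the partitioning pattern (cf.\ \Cref{fig: enum cycle 4-4}). A case analysis over the six patterns $\{\{2,2\},\{0,0\}\}$, $\{\{2,0\},\{1,1\}\}$, $\{\{2,1\},\{1,0\}\}$, $\{\{1,1\},\{1,1\}\}$, $\{\{3,0\},\{1,0\}\}$, $\{\{3,1\},\{0,0\}\}$ and their $A\leftrightarrow B$ duals, together with the internal symmetries of each pattern, yields the multiplicity $124$ per pair of matching $(i,j)$-sums, and the pairing rule $n_{4,i,j}^A n_{4,i,j}^B$ (same sign) or $n_{4,i,j}^A n_{4,-i,-j}^B$ (opposite sign) gives the expression in (\ref{eqn: count cycles sc HGP}); the factor $2$ in front of $n_{4,0,0}^A n_{4,0,0}^B$ reflects that the $(0,0)$ pairing would otherwise be double-counted when merging the two sign conventions.

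The main obstacle will be the $(4,4)$ enumeration, namely verifying the coefficient $124$. This requires carefully listing every partitioning pattern (several of which are shown in \Cref{fig: enum cycle 4-4}), counting the automorphisms of each pattern that preserve the candidate cycle, and grouping them according to whether the two alternating sums add or cancel. The $(6,2)$ coefficient $30$ is analogous but simpler, as it only involves attaching one edge to a cycle-$6$ skeleton. Once these combinatorial multiplicities are pinned down, the rest of the proof is a direct application of \Cref{lemma: cycle condition} together with the observation that each cycle candidate in the protograph satisfying the zero alternating sum condition gives rise to exactly $L=L_1L_2$ flexible cycles in the full Tanner graph, so that dividing by $L$ produces precisely the $N_{2g}$ quantities in the statement.
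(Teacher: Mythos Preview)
Your approach is exactly the paper's: classify flexible cycle candidates by how the eight edges split between the $\mathbf{A}$- and $\mathbf{B}$-factors, use the hypothesis to discard those whose alternating sum reduces to a cycle-$4$ sum in one factor, and count multiplicities pattern by pattern. There is one concrete gap in the $(4,4)$ enumeration: you list six partitioning patterns, but there are seven. The missing one is $\{\{2,0\},\{2,0\}\}$, shown in \Cref{fig: cycle 8}(a) rather than in \Cref{fig: enum cycle 4-4}; it contributes $16$, and without it your six patterns sum to only $8+16+32+4+32+16=108$, not $124$. The ``$A\leftrightarrow B$ duals'' you invoke do not supply the shortfall: once the distribution of the four $a$-nodes among the four copies of the $\mathbf{A}$-side cycle-$4$ is fixed, the positions of the $b$-nodes are forced, so the seven patterns already exhaust the cases.

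A smaller omission is that your $(4,4)$ edge-count family also contains candidates where one projection degenerates to a single edge traversed twice rather than a genuine cycle-$4$ (the green cycle in \Cref{fig: enum cycle 4-4}(a)); you should note explicitly, as the paper does, that such a candidate's alternating sum equals the non-degenerate cycle-$4$ sum in the other factor and is therefore nonzero by hypothesis, so these contribute nothing.
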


\begin{proof}
    The first term of each equation corresponds to the number of cycles associated with a cycle candidate with zero alternating sum in $\mathbf{P}_a$ or $\mathbf{P}_b$. A cycle candidate in $\mathbf{A}$ appears $n_2$ times in $\mathbf{I}_{n_2\times n_2}\otimes \mathbf{A}$ and $r_2$ times in $\mathbf{I}_{r_2\times r_2}\otimes \mathbf{A}^{\textrm{T}}$, hence $n_2+r_2$ times in total. Similarly, a cycle candidate in $\mathbf{B}$ appears $n_1+r_1$ times in total. 
    
    Our initial assumption excludes cycles-$6$ determined by a cycle-$4$ candidate in $\mathbf{P}_a$ ($\mathbf{P}_b$) with zero alternating sum and a nonzero entry in $\mathbf{P}_b$ ($\mathbf{P}_a$). This assumption also excludes all cycles-$8$ determined by a cycle-$4$ candidate in $\mathbf{P}_a$ ($\mathbf{P}_b$) with zero alternating sum and a node or an edge in $\mathbf{P}_b$ ($\mathbf{P}_a$), since the alternating sum of such a cycle candidate reduces to the alternating sum of the associated cycle-$4$ candidate, which is assumed to be nonzero. The green cycle in Fig.~\ref{fig: enum cycle 4-4}(a) is an example of such a cycle-$8$ candidate, its alternating sum is $b_1-(m-a_1)+(m-a_3)-b_1+b_2-(m-a_4)+(m-a_2)-b_2=a_1+a_4-a_2-a_3$. Notice that parameters associated with the edge $b_1$--$b_2$ are all cancelled out and only the alternating sum of the cycle-$4$ candidate $a_1$--$a_2$--$a_3$--$a_4$ remains in the alternating sum, which is nonzero under the assumption.

    We now count cycles-$8$ determined by a cycle-$6$ candidate in $\mathbf{P}_a$ ($\mathbf{P}_b$) and a node in $\mathbf{P}_b$ ($\mathbf{P}_a$). The alternating sum of the cycle-$6$ candidate is zero. The case where the cycle-$6$ candidate lies in $\mathbf{P}_a$ is illustrated in Fig.~\ref{fig: cycle 8}(b). The different colors of the three cycle-$8$ candidates correspond to different distributions $(i,j)$ of the six nodes associated with $a_i$, $i=1,\dots,6$ among the top left panel and the bottom right panel. Blue corresponds to the distribution $(3,3)$, red to $(4,2)$, and green to $(4,1)$. The distributions $(2,4)$ and $(1,5)$ are not illustrated. Nodes in each panel should be consecutive in the original cycle-$6$ candidate, so there $6$ possibilities for each of the $5$ distributions.
    
    Next we count the number of cycles-$8$ determined by a pair of cycle-$4$ candidates with identical or opposite alternating sums, one from each partitioning matrix. There are seven different types (shown in Fig.~\ref{fig: cycle 8}(a) and Fig.~\ref{fig: enum cycle 4-4}) distinguished by the distribution of nodes associated with $a_1,a_2,a_3,a_4$, among the four copies of the cycle-$4$ candidate $a_1$--$a_2$--$a_4$--$a_3$ in the partitioning matrix. Types are labelled by a \textbf{partitioning pattern} $\{\{a,b\},\{c,d\}\}$ where $a$,$b$ indicate the number of nodes in the two copies of $a_1,a_2,a_3,a_4$ in the top left panel, and $c$, $d$ indicate the number of nodes in the two copies of $a_1,a_2,a_3,a_4$ in the bottom right panel. 

    \begin{enumerate}
        \item Fig.~\ref{fig: cycle 8}(a), $\{\{2,0\},\{2,0\}\}$. There are $2\times 2$ ways to choose the copies, after which there are $4$ ways to separate $a_1,a_2,a_3,a_4$. Once the red nodes are chosen, the blue nodes are determined. Therefore, there are $2\times 2\times 4=16$ such cycles.
        \item Fig.~\ref{fig: enum cycle 4-4}(a), $\{\{2,2\},\{0,0\}\}$. There are $2$ ways to choose the panel, after which there are $4$ ways to separate $a_1,a_2,a_3,a_4$. Therefore, there are $2\times 4=8$ such cycles. 
        \item Fig.~\ref{fig: enum cycle 4-4}(b), $\{\{2,0\},\{1,1\}\}$. Two nodes are clustered in a group from one panel, while each group in the other panel contains one node. There are $4$ ways to choose the group containing $2$ red nodes and $4$ ways to choose the $2$ nodes in this group, after which the other two red nodes and all blue nodes are determined (note that only one of the red cycle and the blue cycle can lead to a cycle-$8$). Therefore, there are $4\times 4=16$ such cycles.
        \item Fig.~\ref{fig: enum cycle 4-4}(c), $\{\{2,1\},\{1,0\}\}$. Two nodes are clustered in a group from one panel, while two groups from different panels contain one node. There are $4$ ways to choose the group containing two red nodes, and $4$ ways to choose the $2$ nodes in this group, after which there are $2$ ways to choose the red node in the other group from the current panel. After these choices, the remaining red node and all blue nodes are determined (again, only one of the red cycle and the blue cycle can lead to a cycle-$8$). Therefore, there are $4\times 4\times 2=32$ such cycles.
        \item Fig.~\ref{fig: enum cycle 4-4}(d), $\{\{1,1\},\{1,1\}\}$. Each group has exactly one red node. There are $4$ ways to choose the red node in the top left group, after which all other nodes are determined. Therefore, there are $4$ such cycles.
        \item Fig.~\ref{fig: enum cycle 4-4}(e), $\{\{3,0\},\{1,0\}\}$. Two groups from different panels have $3$ red nodes and $1$ red node, respectively. There are $4$ ways to choose the group containing $3$ red nodes and $4$ ways to choose the three nodes, after which there are $2$ ways to choose the remaining red node. Therefore, there are $4\times 4\times 3=32$ such cycles.
        \item Fig.~\ref{fig: enum cycle 4-4}(f), $\{\{3,1\},\{0,0\}\}$. Two groups from the same panel have $3$ and $1$ red nodes, respectively. There are $4$ ways to choose the group containing the $3$ red nodes, and $4$ ways to choose the three nodes, after which the remaining nodes are determined. Therefore, there are $4\times 4=16$ such cycles.
    \end{enumerate}
    
Each pair of cycle-$4$ candidates from different partitioning matrices ($\mathbf{P}_a$ and $\mathbf{P}_b$) with identical or opposite alternating sums contributes $16+8+16+32+4+32+16=124$ ($248$ if both alternating sums are zero since each red/blue cycles pair contributes to $2$ cycles-$8$ in this case) cycles-$8$, thus completes the derivation of $N_8$. 
\end{proof}

We now optimize the partitioning matrix by extending the probabilistic framework developed in \cite{Yang2023breaking} to SC-HGP codes. We suppose that each entry in the partitioning matrix $\mathbf{P}_a$ ($\mathbf{P}_b$) is a random variable with probability distribution $\mathbb{P}\left[X^A=(i,j)\right]=p^A_{i,j}$ ($\mathbb{P}\left[X^B=(i,j)\right]=p^B_{i,j}$). We associate the probability distributions for entries of $\mathbf{P}_a$ and $\mathbf{P}_b$ with the \textbf{characteristic polynomials}
\begin{equation}
\begin{split}
    &f^A(S,T)=\sum_{i,j=0}^{m_1,m_2} p^A_{i,j} S^iT^j\textrm{ and }\\
    &f^B(S,T)=\sum_{i,j=0}^{m_1,m_2} p^B_{i,j} S^iT^j.
\end{split}
\end{equation}

Starting from \Cref{theo: expected cycle number}, we now derive \Cref{theo: expected cycle number} which expresses the expected number of $N_6$ ($N_8$) of flexible cycles-$6$ (cycles-$8$) in terms of the characteristic polynomials $f^A(S,T)$ and $f^B(S,T)$. This makes it possible to apply gradient descent (\textbf{the gradient descent (GRADE)-algorithmic optimization (AO) method}) to jointly optimize the probability distributions $p^A$ and $p^B$. The initial distribution is obtained by randomly assigning $\lfloor e_A p^A_{i,j} \rfloor$ or $\lceil e_A p^A_{i,j} \rceil$ ( $\lfloor e_B p^B_{i,j} \rfloor$ or $\lceil e_B p^B_{i,j} \rceil$) edges in $\mathbf{A}$ ($\mathbf{B}$) to the $(i,j)$-th component matrix, where $e_A$, $e_B$ denote the number of non-zero entries in $\mathbf{A}$, $\mathbf{B}$, respectively. 

\begin{theo}\label{theo: expected cycle number} Given a polynomial $f(S,T)$, let $\left[\cdot\right]_{i,j}$ denote the coefficient of $S^iT^j$. Then,
    \begin{widetext}
    \begin{equation}\label{eqn: grade obj}
        \begin{split}
        \mathbb{E}_{p^A,p^B}\left[N_6\right]=&n_{6}^A (n_2+r_2)\Bigl[\bigl(f^A(S,T)f^A(S^{-1},T^{-1})\bigr)^{3}\Bigr]_{0,0}+n_{6}^B (n_1+r_1)\Bigl[\bigl(f^B(S,T)f^B(S^{-1},T^{-1})\bigr)^{3}\Bigr]_{0,0},\\
        \mathbb{E}_{p^A,p^B}\left[N_8\right]=&n_{8}^A (n_2+r_2)\Bigl[\bigl(f^A(S,T)f^A(S^{-1},T^{-1})\bigr)^{4}\Bigr]_{0,0}+n_{8}^B (n_1+r_1)\Bigl[\bigl(f^B(S,T)f^B(S^{-1},T^{-})\bigr)^{4}\Bigr]_{0,0}\\
        &+30\Biggl(n_{6}^A e_B\Bigl[\bigl(f^A(S,T)f^A(S^{-1},T^{-1})\bigr)^{3}\Bigr]_{0,0}+n_{6}^B e_A\Bigl[\bigl(f^B(S,T)f^B(S^{-1},T^{-1})\bigr)^{3}\Bigr]_{0,0}\Biggr)\\
        &+248n_{4}^A n_{4}^B\Bigl[\bigl(f^A(S,T)f^A(S^{-1},T^{-1})\bigr)^{2}\bigl(f^B(S,T)f^B(S^{-1},T^{-1})\bigr)^{2}\Bigr]_{0,0}.
    \end{split}
    \end{equation}
    \end{widetext}
\end{theo}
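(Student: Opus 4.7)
The plan is to compute the expectations on the right-hand side of Lemma~\ref{lemma: count cycle} using linearity of expectation, the independence of $\mathbf{P}_a$ and $\mathbf{P}_b$, and the observation that the probability of an alternating sum taking a specified value is naturally a coefficient of the characteristic polynomial raised to a power.

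The first step is a single-candidate probability calculation. Fix any cycle-$2g$ candidate $C$ in $\mathbf{A}$ and label its $2g$ entries in the traversal order so that the ``alternating sum'' has the form $\sum_{k=1}^{g}(X_k-Y_k) \pmod{(L_1,L_2)}$, where $X_1,Y_1,\dots,X_g,Y_g$ are the partition values placed at the corresponding entries. By construction each entry of $\mathbf{P}_a$ is drawn independently from $p^A$, so the joint PGF of $(X_k,-Y_k)$ is $f^A(S,T)\,f^A(S^{-1},T^{-1})$, and the PGF of the alternating sum is $\bigl(f^A(S,T)f^A(S^{-1},T^{-1})\bigr)^{g}$. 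Extracting the $S^iT^j$ coefficient therefore gives
\begin{equation*}
\Pr[\text{alt.\ sum of }C\text{ equals }(i,j)] \;=\; \Bigl[\bigl(f^A(S,T)f^A(S^{-1},T^{-1})\bigr)^g\Bigr]_{i,j},
\end{equation*}
and summing over the $n_{2g}^A$ candidates in $\mathbf{A}$ yields $\mathbb{E}[n_{2g,i,j}^A] = n_{2g}^A \bigl[(f^Af^{A*})^g\bigr]_{i,j}$, where $f^{A*}(S,T):=f^A(S^{-1},T^{-1})$, and analogously for $\mathbf{B}$.

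The second step is to substitute these single-candidate expectations into the Lemma~\ref{lemma: count cycle} formulas. The expression for $\mathbb{E}[N_6]$ and the first two ``$(n_2+r_2)$/$(n_1+r_1)$'' terms of $\mathbb{E}[N_8]$ follow immediately with $g=3$ and $g=4$ respectively. The ``cycle-6/entry'' cross terms, proportional to $30(n_{6,0,0}^Ae_B+n_{6,0,0}^Be_A)$, involve only a cycle-$6$ candidate in one partitioning matrix and a fixed entry in the other, so they reduce directly to the $g=3$ coefficients after taking expectation and using $\mathbb{E}[n_{6,0,0}^Ae_B]=e_B\,\mathbb{E}[n_{6,0,0}^A]$ (and symmetrically for the other term).

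The main obstacle is the final cycle-$8$ cross term built from pairs of cycle-$4$ candidates, namely the $124$-coefficient bracket in~\eqref{eqn: count cycles sc HGP}. The approach is: by independence of $\mathbf{P}_a$ and $\mathbf{P}_b$, $\mathbb{E}[n_{4,i,j}^A n_{4,k,l}^B]=n_4^An_4^B\,q^A_{i,j}q^B_{k,l}$ with $q^A_{i,j}:=[(f^Af^{A*})^2]_{i,j}$ and similarly $q^B$. One then checks the symmetry $q^A_{i,j}=q^A_{-i,-j}$ (and the same for $q^B$), which holds because the alternating sum's distribution is invariant under the swap $X_k\leftrightarrow Y_k$. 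Expanding the product $(n_{4,i,j}^A+n_{4,-i,-j}^A)(n_{4,i,j}^B+n_{4,-i,-j}^B)$, summing over $(i,j)\neq(0,0)$, adding the $2n_{4,0,0}^An_{4,0,0}^B$ boundary term, and using the symmetry to collapse the four resulting sums, the bracketed quantity reduces (after taking expectation) to
\begin{equation*}
2\,n_4^An_4^B\sum_{(i,j)} q^A_{i,j}\,q^B_{-i,-j} \;=\; 2\,n_4^An_4^B\,\bigl[(f^Af^{A*})^2\,(f^Bf^{B*})^2\bigr]_{0,0},
\end{equation*}
where the last equality is just the definition of the constant coefficient of a product of Laurent polynomials (the convolution of their coefficient arrays evaluated at the origin). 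Multiplying by $124$ produces the $248\,n_4^An_4^B$ factor asserted in the theorem. Combining all four contributions finishes the proof.
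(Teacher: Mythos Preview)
Your proposal is correct and follows essentially the same route as the paper's proof: both compute $\Pr[\text{alt.\ sum}=(i,j)]$ for a single cycle-$2g$ candidate as $[(f\bar f)^g]_{i,j}$, invoke linearity and independence to get $\mathbb{E}[n_{2g,i,j}]=n_{2g}\,[(f\bar f)^g]_{i,j}$, and then substitute into Lemma~\ref{lemma: count cycle}, handling the cycle-$8$ cross term by the symmetry $q_{i,j}=q_{-i,-j}$ and the convolution identity $\sum_{i,j}q^A_{i,j}q^B_{-i,-j}=[(f^A\bar f^A)^2(f^B\bar f^B)^2]_{0,0}$. The only cosmetic difference is that the paper cites \cite{Yang2023breaking} for the single-candidate probability formula while you derive it directly via the PGF argument.
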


\begin{proof} Let $p^A_{2g,i,j}$ ($p^B_{2g,i,j}$) represent the probability that a cycle-$2g$ candidate in $\mathbf{P}_a$ ($\mathbf{P}_b$) has alternating sum $(i,j)$. It follows from \cite{Yang2023breaking} that
   \begin{widetext}
   \begin{equation}\label{eqn: proof probability SC-QLDPC}
   p_{2g,i,j}^A=p_{2g,-i,-j}^A=\Bigl[\bigl(f^A(S,T)f^A(S^{-1},T^{-1})\bigr)^{g}\Bigr]_{i,j},\ 
   p_{2g,i,j}^B=p_{2g,-i,-j}^B=\Bigl[\bigl(f^B(S,T)f^B(S^{-1},T^{-1})\bigr)^{g}\Bigr]_{i,j}.
    \end{equation}
    \end{widetext}

Since entries in the partitioning matrix are independent, it follows from linearity of expectation that $\mathbb{E}_{p^A}\bigl[n_{2g,i,j}^A\bigr]=n_{2g}^A p_{2g,i,j}^A$ and $\mathbb{E}_{p^B}\bigl[n_{2g,i,j}^B\bigr]=n_{2g}^B p_{2g,i,j}^B$. Therefore,
\begin{widetext}
    \begin{equation} \label{eqn: proof probability SC-QLDPC 2}
        \begin{split}
        \mathbb{E}_{p^A,p^B}\bigl[N_6\bigr]=&n_{6}^A p_{6,0,0}^A(n_2+r_2)+n_{6}^B p_{6,0,0}^B(n_1+r_1),\\
        \mathbb{E}_{p^A,p^B}\bigl[N_8\bigr]=&n_{8}^A p_{8,0,0}^A(n_2+r_2)+n_{8}^B p_{8,0,0}^B(n_1+r_1)+30(n_{6}^A p_{6,0}^A e_B+n_{6}^B p_{6,0,0}^B e_A)\\
        &+124n_{4}^A n_{4}^B\Bigl(2p_{4,0,0}^Ap_{4,0,0}^B+\sum_{(i,j)\neq(0,0)}(p_{4,i,j}^A+p_{4,-i,-j}^A)(p_{4,i,j}^B+p_{4,-i,-j}^B)\Bigr)\\
        =&n_{8}^A p_{8,0}^A(n_2+r_2)+n_{8}^B p_{8,0,0}^B(n_1+r_1)+30(n_{6}^A p_{6,0}^A e_B+n_{6}^B p_{6,0,0}^B e_A)\\
        &+248n_{4}^A n_{4}^B\sum_{(i,j)\neq(0,0)}p_{4,i,j}^Ap_{4,-i,-j}^B.\\
    \end{split}
    \end{equation}
    \end{widetext}
    
The result follows from substituting (\ref{eqn: proof probability SC-QLDPC}) and (\ref{eqn: proof probability SC-QLDPC 2}) in \Cref{lemma: count cycle}.
\end{proof}

\begin{exam} Let $\mathbf{1}_{r\times n}$ be the $r\times n$ matrix with every entry equal to $1$. We now describe the GRADE algorithm for the special case where $\mathbf{A}=\mathbf{1}_{r_1\times n_1}$ and $\mathbf{B}=\mathbf{1}_{r_2\times n_2}$. It follows from \cite{Yang2023breaking} that for $g=2,3,4$, the number $n_{2g}$ of closed paths of length $2g$ in $\mathbf{1}_{r\times n}$ is given by
\begin{equation*}
\begin{split}
        n_4=&\binom{r}{2}\binom{n}{2},\ n_6=6\binom{r}{3}\binom{n}{3},\\
        n_8=&\frac{1}{2}\binom{r}{2}\binom{n}{2}+18\binom{r}{3}\binom{n}{3}+72\binom{r}{4}\binom{n}{4}\\
\end{split}
\end{equation*}
\begin{equation}
    \begin{split}
        &+3\Biggl(\binom{r}{2}\binom{n}{3}+\binom{r}{3}\binom{n}{2}\Biggr)+6\Biggl(\binom{r}{2}\binom{n}{4}\\
        &+\binom{r}{4}\binom{n}{2}\Biggr)+36\Biggl(\binom{r}{3}\binom{n}{4}+\binom{r}{4}\binom{n}{3}\Biggr).\\
    \end{split}
\end{equation}
Hence
\begin{equation}\label{eqn: obj coeff}
    \begin{split}
        w_1&=\frac{n_6}{n_4}=\frac{2}{3}(r-2)(n-2),\\
        w_2&=\frac{n_8}{n_4}=\frac{1}{2}(r^2-3r+3)(n^2-3n+3),\\
        w_3&=\frac{n+r}{n_4}=\frac{4(n+r)}{r(r-1)n(n-1)},\\
        w_4&=\frac{e}{n_4}=\frac{4}{(r-1)(n-1)}.
    \end{split}
\end{equation}

We recall the expressions for $N_6$, $N_8$ provided in \Cref{lemma: count cycle}, and we minimize the objective function $N=wN_6+N_8$.
\begin{widetext}
    \begin{equation}\label{eqn: obj function}
        \begin{split}
        \frac{\mathbb{E}_{p^A,p^B}\left[N\right]}{n_4^An_4^B}=&w\Biggl(w_1^Aw_3^B\Bigl[\bigl(f^A(S,T)f^A(S^{-1},T^{-1})\bigr)^{3}\Bigr]_{0,0}+w_1^Bw_3^A\Bigl[\bigl(f^B(S,T)f^B(S^{-1},T^{-1})\bigr)^{3}\Bigr]_{0,0}\Biggr)\\
        &+\Biggl(w_2^Aw_3^B\Bigl[\bigl(f^A(S,T)f^A(S^{-1},T^{-1})\bigr)^{4}\Bigr]_{0,0}+w_2^Bw_3^A\Bigl[\bigl(f^B(S,T)f^B(S^{-1},T^{-})\bigr)^{4}\Bigr]_{0,0}\Biggr)\\
        &+30\Biggl(w_1^Aw_4^B\Bigl[\bigl(f^A(S,T)f^A(S^{-1},T^{-1})\bigr)^{3}\Bigr]_{0,0}+w_1^Bw_4^A\Bigl[\bigl(f^B(S,T)f^B(S^{-1},T^{-1})\bigr)^{3}\Bigr]_{0,0}\Biggr)\\
        &+248\Bigl[\bigl(f^A(S,T)f^A(S^{-1},T^{-1})\bigr)^{2}\bigl(f^B(S,T)f^B(S^{-1},T^{-1})\bigr)^{2}\Bigr]_{0,0},
        \end{split}
    \end{equation}
    \end{widetext}
where $w_i^A$ and $w_i^B$ are obtained by replacing $r,n$ in (\ref{eqn: obj coeff}) with $r_1,n_1$ and $r_2,n_2$, respectively.
\end{exam}

We consider the objective function defined below, and apply gradient descent algorithm to find a locally optimal pair  $p^A$, $p^B$.
 \begin{equation}\label{eqn: obj grade}
    \begin{split}
    F(p^A,p^B)=&c_3^A\Bigl[\bigl(f^A\bar{f}^A\bigr)^{3}\Bigr]_{0,0}+c_3^B\Bigl[\bigl(f^B\bar{f}^B\bigr)^{3}\Bigr]_{0,0}\\
    &+c_4^A\Bigl[\bigl(f^A\bar{f}^A\bigr)^{4}\Bigr]_{0,0}+c_4^B\Bigl[\bigl(f^B\bar{f}^B\bigr)^{4}\Bigr]_{0,0}\\
    &+c\Bigl[\bigl(f^A\bar{f}^A\bigr)^{2}\bigl(f^B\bar{f}^B\bigr)^{2}\Bigr]_{0,0},
    \end{split}
 \end{equation}
where $c_3^A=n_6^A(wn_2+wr_2+30e_B)/(n_4^An_4^B)$, $c_3^B=n_6^B(wn_1+wr_1+30e_A)/(n_4^An_4^B)$, $c_4^A=n_8^A(n_2+r_2)/(n_4^An_4^B)$, $c_4^B=n_8^B(n_1+r_1)/(n_4^An_4^B)$, and $c=248$\footnote{In the special case where every entry of $\mathbf{A}$ and $\mathbf{B}$ is equal to $1$, $c_3^A=ww_1^Aw_3^B+30w_1^Aw_4^B$, $c_3^B=ww_1^Bw_3^A+30w_1^Bw_4^A$, $c_4^A=w_2^Aw_3^B$, $c_4^B=w_2^Bw_3^A$, and $c=248$.}. We abbreviate polynomials $f^A(S,T)$, $f^A(S^{-1},T^{-1})$ by $f^A$, $\bar{f}^A$, respectively, and $f^B(S,T)$, $f^B(S^{-1},T^{-1})$ by $f^B$, $\bar{f}^B$, respectively.

Given that densities sum up to $0$, we define the Lagrangian by
 \begin{equation}\label{eqn: lagrangian}
    \begin{split}
        &L(p^A,p^B)\\
        =&F(p^A,p^B)+c_A\Bigl(1-\sum\nolimits_{i=0}^{m_1}\sum\nolimits_{j=0}^{m_2}p^A_{i,j}\Bigr)\\
        &+c_B\Bigl(1-\sum\nolimits_{i=0}^{m_1}\sum\nolimits_{j=0}^{m_2}p^B_{i,j}\Bigr).
    \end{split}
 \end{equation}

 Then, the gradient of (\ref{eqn: lagrangian}) is obtained by
\begin{equation}\label{eqn: gradient L}
\begin{split}
&\Bigl(\nabla_{p^A}L(p^A,p^B)\Bigr)_{i,j}=\Bigl(\nabla_{p^A}F(p^A,p^B)\Bigr)_{i,j}-c_A,\\
&\Bigl(\nabla_{p^B}L(p^A,p^B)\Bigr)_{i,j}=\Bigl(\nabla_{p^B}F(p^A,p^B)\Bigr)_{i,j}-c_B,
\end{split}
\end{equation}
where
 \begin{equation}\label{eqn: gradient F}
    \begin{split}
    &\Bigl(\nabla_{p^A}F(p^A,p^B)\Bigr)_{i,j}\\
    =&6c_3^A\Bigl[\bigl(f^A\bar{f}^A\bigr)^{2}f^A\Bigr]_{i,j}+8c_4^A\Bigl[\bigl(f^A\bar{f}^A\bigr)^{3}f^A\Bigr]_{i,j}\\
    &+4c\Bigl[\bigl(f^A\bar{f}^A\bigr)f^A\bigl(f^B\bar{f}^B\bigr)^{2}\Bigr]_{i,j},\textrm{ and }\\
    &\Bigl(\nabla_{p^B}F(p^A,p^B)\Bigr)_{i,j}\\
    =&6c_3^B\Bigl[\bigl(f^B\bar{f}^B\bigr)^{2}f^B\Bigr]_{i,j}+8c_4^B\Bigl[\bigl(f^B\bar{f}^B\bigr)^{3}f^B\Bigr]_{i,j}\\
    &+4c\Bigl[\bigl(f^A\bar{f}^A\bigr)^2\bigl(f^B\bar{f}^B\bigr)f^B\Bigr]_{i,j}.
    \end{split}
 \end{equation}

 The updating law for $p^A$ is $(p^A)^{(\ell+1)}_{i,j}=(p^A)^{(\ell)}_{i,j}+\alpha \Bigl(\Bigl(\nabla_{p^A}F(p^A,p^B)\Bigr)_{i,j}-c_A\Bigr)$, for some step size $\alpha\in\mathbb{R}$. To preserve the constraints on probability densities, we let $c_A$ be the average of all elements in $\nabla_{p^A}F(p^A,p^B)$ at each step. We refer to this process as the \textbf{GRADE} step and details are shown in Algorithm~\ref{algo: GRADE}. 

\begin{algorithm}
\caption{Gradient-Descent Distributor (GRADE) for Cycle Optimization} 
\label{algo: GRADE}
\KwIn{\\
    $\mathbf{A}$, $\mathbf{B}$: component matrices of the HGP base matrix\;
    $w$: weight of each cycle-$6$ candidate (assuming that of a cycle-$8$ is $1$)\;
    $\epsilon,\alpha$: accuracy and step size of gradient descent\;
}
\KwOut{\\
    $p^A$, $p^B$: locally optimal edge distributions over $\{0,1,\dots,m_1\}\times\{0,1,\dots,m_2\}$\;
    $v_{prev}$, $v_{cur}$: the value of the objective function in (\ref{eqn: obj grade}) at the previous and the current iterations\;
    $g_A$, $g_B$: the gradient of the objective function with respect to $p^A$ and $p^B$, respectively\;
    $r_1$, $n_1$, $r_2$, $n_2$: the sizes of $\mathbf{A}$ and $\mathbf{B}$\;
    $e_A$, $e_B$: the number of nonzero elements in $\mathbf{A}$ and $\mathbf{B}$ \;
    $n_6^A$, $n_6^B$ ($n_8^A$, $n_8^B$): the number of length $6$ ($8$) closed paths in $\mathbf{A}$ and $\mathbf{B}$\;
}
\SetKwProg{Fn}{Function}{}{}
\SetKwFunction{F}{initialize}
\Fn{\F{}}{
    $c_3^A\gets n_6^A(wn_2+wr_2+30e_B)/(n_4^An_4^B)$\;
    $c_3^B\gets n_6^B(wn_1+wr_1+30e_A)/(n_4^An_4^B)$\;
    $c_4^A\gets n_8^A(n_2+r_2)/(n_4^An_4^B)$\;
    $c_4^B\gets n_8^B(n_1+r_1)/(n_4^An_4^B)$\;
    $c\gets 248$\;
    $v_{prev},v_{cur}\gets 1$ \;
    $p^A,p^B\gets \frac{1}{(m_1+1)(m_2+1)} \mathbf{1}_{(m_1+1)\times (m_2+1)}$\;
    $g^A,g^B,f^A,\bar{f}^A,f^B,\bar{f}^B\gets \mathbf{0}_{(m_1+1)\times (m_2+1)}$\;
}

\SetKwProg{While}{While}{do}{end}
\While{$|v_{prev}-v_{cur}|>\epsilon$}{
    obtain $F(p^A,p^B)$ by (\ref{eqn: obj grade})\;
    $\nabla_{p^A}F(p^A,p^B)$, $\nabla_{p^B}F(p^A,p^B)$ by (\ref{eqn: gradient F})\;
    $v_{prev}\gets v_{cur}$\;
    $v_{cur}\gets F(p^A,p^B)$\;
    $g^A\gets \nabla_{p^A}F(p^A,p^B)$\;
    $g^B\gets \nabla_{p^B}F(p^A,p^B)$\;
    $g^A\gets g^A-\textrm{mean}(g^A)$\;
    $g^B\gets g^B-\textrm{mean}(g^B)$\;
    
    \If{$v_{prev}>v_{cur}$}{ 
        $\alpha\gets \alpha/2$\;
    }
    $p^A\gets p^A-\alpha g^A/||g^A||$\;
    $p^B\gets p^B-\alpha g^B/||g^B||$\;
}
\KwRet{$p^A$, $p^B$}
\end{algorithm}

\begin{algorithm}
\caption{Cycle-Based GRADE-A Optimizer (AO)}\label{algo: AO_cycle}
\KwIn{\\
    $\mathbf{A}$, $\mathbf{B}$: Component matrices of the HGP base matrix\; \\
    $r_1$, $n_1$, $r_2$, $n_2$, $e_A$, $e_B$: Parameters of $\mathbf{A}$ and $\mathbf{B}$ in Algorithm~\ref{algo: GRADE}\; \\
    $m_1$, $m_2$, $L_1$, $L_2$: Memories and coupling lengths of the 2D-SC ensemble\; \\
    $\mathbf{E}^A$, $\mathbf{E}^B$: Edge lists of $\mathbf{A}$ and $\mathbf{B}$\; \\
    $p^A$, $p^B$: Edge distributions from Algorithm~\ref{algo: GRADE}\; \\
    $w$: Weight of each cycle-$6$ assuming cycle-$8$ has weight $1$\; \\
    $d_1$, $d_2$: Bounds on search space size\;
}
\KwOut{\\
    $\mathbf{P}_a$, $\mathbf{P}_b$: Locally optimal partitioning matrices for $\mathbf{A}$ and $\mathbf{B}$\; \\
    $\mathbf{d}$: Counting vector indicating distance to the initial matrix\;
}
\BlankLine
Obtain the lists $\mathcal{L}^A_{2g}$, $\mathcal{L}^B_{2g}$, $g=2,3,4$, of cycle-$2g$ candidates in the base matrix\;
Find $(\mathbf{u}^A, \mathbf{u}^B) = \arg \min_{\mathbf{x}, \mathbf{y}} ||\frac{1}{e_A}\mathbf{x} - p^A||_2 + ||\frac{1}{e_B}\mathbf{y} - p^B||_2$ where $\|\mathbf{x}\|_1 = e_A$, $\|\mathbf{y}\|_1 = e_B$\;
\ForEach{$0\leq i\leq m_1$, $0\leq j\leq  m_2$}{
    Randomly assign $\mathbf{u}^A[i][j]$ ($\mathbf{u}^B[i][j]$) $(i,j)$'s to $\mathbf{P}_a$ ($\mathbf{P}_b$)\;
}
Initialize $\mathbf{d} \gets \mathbf{0}$ and $\texttt{noptimal} \gets \texttt{False}$\;
\ForEach{$1\leq e_a\leq e_A$, $1\leq e_b\leq e_B$}{
    Extract $(i_a, j_a)$ and $(i_b, j_b)$ from $\mathbf{E}^A[e_a]$ and $\mathbf{E}^B[e_b]$\;
    Compute alternating sums of cycles in $\mathcal{L}_{2g}$ and count cycle statistics\;
    Compute $n_4$, $n_6$, and $n_8$ using \eqref{eqn: count cycles sc HGP}\;
    \ForEach{$(v_a, v_b) \in \{0, 1, \dots, m_1\} \times \{0, 1, \dots, m_2\}$}{
        $\mathbf{d}' \gets \mathbf{d}$; $\mathbf{d}'[v_a] \gets \mathbf{d}'[v_a] + 1$\;
        \If{$\|\mathbf{d}'\|_1 \leq d_1$ and $\|\mathbf{d}'\|_\infty \leq d_2$}{
            Temporarily update $\mathbf{P}_a$ and $\mathbf{P}_b$\;
            Recompute cycle statistics $n'_4$, $n'_6$, $n'_8$, and $n'=wn'_6 + n'_8 $\;
            \If{$n'_4 < n_4$ or $n'< n$}{
                $\texttt{noptimal} \gets \texttt{True}$\;
                Update $n_4$, $n$, $\mathbf{d}$, $\mathbf{P}_a$, $\mathbf{P}_b$\;
            }
        }
    }
}
\If{$\texttt{noptimal}$}{
    \textbf{goto} step 6\;
}
\Return $\mathbf{P}_a, \mathbf{P}_b$\;
\end{algorithm}
We initialize partitioning matrices $\mathbf{P}_a$ and $\mathbf{P}_b$ with empirical distributions close to the optimal pair $p^A$ and $p^B$. We then apply a greedy algorithm (the AO step described in Algorithm~\ref{algo: AO_cycle}) that removes short cycles through replacing some entry with a value that reduces the weighted number of short cycles, and terminating when no such improvement is possible. Note that to reduce complexity, we fix $\mathbf{P}_a$ ($\mathbf{P}_b$) and locally optimize $\mathbf{P}_b$ ($\mathbf{P}_a$) alternatively.

\section{Simulation Results}
\label{sec: simulation results}

In \Cref{subsec: construction}, we construct SC-HGP codes using the methods described in \Cref{subsec: optimization of SC-QLDPC}, and in \Cref{subsec: BP decoders} we describe a BP decoder for these QLDPC codes. In \Cref{subsec: simulation} we present simulation results which demonstrate the value of optimizing the number of short cycles. 
 
\subsection{Constructions}
\label{subsec: construction}

\Cref{subsec: general characteristic } described how an SC-HGP code is specified by a base matrix $\mathbf{A}$ of size $r_1\times n_1$, a base matrix $\mathbf{B}$ of size $r_2\times n_2$, and by the coupling parameters $m_1$, $m_2$, $L_1$, and $L_2$. The parity check matrix of the HGP base code has $r=r_1n_2 +r_2n_1$ rows and $n=r_1r_2 +n_1n_2$ columns, thus it encodes at least $k = n-r = (n_1-r_1)(n_2-r_2)$ logical qubits. Thus, we obtain a SC-HGP code that encodes at least $K= kL_1L_2 = (n_1-r_1)(n_2-r_2) L_1L_2$ logical qubits as $N= (r_1r_2+n_1n_2) L_1L_2$ physical qubits. 

We use Algorithm~\ref{algo: GRADE} and Algorithm~\ref{algo: AO_cycle} to generate two families of SC-HGP codes. The first family contains seven $\left[\left[7300, 2500\right]\right]$ codes with $L_1 = L_2 = 10$ and  $(r_1,r_2,n_1,n_2)=(3,3,8,8)$, and the second family contains seven $\left[\left[5800, 1600\right]\right]$ codes with $L_1=L_2=10$ and $(r_1,r_2,n_1,n_2)=(3,3,7,7)$. \Cref{table: count number 3 8} and \Cref{table: count number 3 7} list the number of short cycles (divided by $L_1L_2$).

Within each family, Code $2$ results from drawing partitioning matrices from the uniform distribution, and Code $4$ results from drawing partitioning matrices from the locally optimal distribution generated by the GRADE algorithm (Algorithm~\ref{algo: GRADE}), Codes $1$ and $3$, respectively, result from applying the AO algorithm (Algorithm~\ref{algo: AO_cycle}) to the partitioning matrices that define Codes $2$ and $4$. Codes $1$-$4$ all have memory $m_1=m_2=2$, Code $5$ has memory $m_1=m_2=1$, Code $6$ has memory $(m_1,m_2)=(1,2)$, and Code $7$ has memory $m_1=m_2=3$. The partitioning matrices that define Codes $5$-$7$ are generated by GRADE-AO (Algorithms $1$ and $2$ in combination). Partitioning matrices are specified by equations in the text, and it is these equation numbers that appear in \Cref{table: count number 3 8} and \Cref{table: count number 3 7}\footnote{Note that in a SC-QLDPC code with memories $(m_1,m_2)$, any $d$ in $\mathbf{P}_a$ and $\mathbf{P}_b$ is a value from $\{0,1,\dots,(m_1+1)(m_2+1)-1\}$, where $d$ refers to $(i,j)=(\lfloor d/(m_2+1) \rfloor, d\mod (m_2+1))$.}. Note that $>$ exists in the table since the counting in (\ref{eqn: count cycles sc HGP}) assumes no existence of cycles-$4$ after partitioning, while in Codes $2$ and $4$, there are cycles-$4$. Therefore, the actual numbers of cycles-$8$ in Codes $2$ and $4$ are larger than numbers counted by (\ref{eqn: count cycles sc HGP}).

\begin{widetext}
\begin{table}[h]
\centering
\caption{Information of $\left[\left[7300,\geq 2500\right]\right]$ codes}
%\resizebox{\textwidth}{!}{%
\begin{tabular}{c|c|c|c|c|c|c|c}
\toprule
Common parameters & \multicolumn{7}{c}{$(r_1,n_1,r_2,n_2,L_1,L_2)=(3,8,3,8,10,10)$} \\ \midrule
Codes & Code 1 & Code 2 & Code 3 & Code 4 & Code 5 & Code 6 & Code 7 \\ \midrule
Minimum Distances ($\leq$) & $10$ & $6$ & $12$ & $8$ & $6$ & $8$ & $20$ \\ \midrule
Partitioning Matrices & (\ref{eqn: 3_8_uni_opt}) & (\ref{eqn: 3_8_uni_ini}) & (\ref{eqn: 3_8_gd_opt}) & (\ref{eqn: 3_8_gd_ini}) & (\ref{eqn: 3_8_gd_m1}) & (\ref{eqn: 3_8_gd_m12}) & (\ref{eqn: 3_8_gd_m3})\\ \midrule
Distributions & \multicolumn{2}{c|}{Uniform} & \multicolumn{5}{c}{GRADE}\\ \midrule
Optimized by AO & $\checkmark$ & $\times$ & $\checkmark$ & $\times$ & \multicolumn{3}{c}{$\checkmark$}\\ \midrule
$(m_1,m_2)$ & \multicolumn{4}{c|}{$(2,2)$} & $(1,1)$ & $(1,2)$ & $(3,3)$ \\ \midrule
Number of cycles 4 & 0 & 110 & 0 & 66 & 0 & 0 & 0 \\
Number of cycles 6 & 11 & 264 & 11 & 143 & 583 & 198 & 0 \\
Number of cycles 8 & 5113 & $>$48142 & 5131 & $>$23120 & 64585 & 23266 & 1144 \\ 
\bottomrule
\end{tabular}%
%}
\label{table: count number 3 8}
\end{table}
\end{widetext}

\begin{widetext}
\begin{table}[h]
\centering
\caption{Information of $\left[\left[5800,\geq 1600\right]\right]$ codes}
%\resizebox{\textwidth}{!}{%
\begin{tabular}{c|c|c|c|c|c|c|c}
\toprule
Common parameters & \multicolumn{7}{c}{$(r_1,n_1,r_2,n_2,L_1,L_2)=(3,7,3,7,10,10)$}  \\ \midrule
Codes & Code 1 & Code 2 & Code 3 & Code 4 & Code 5 & Code 6 & Code 7 \\ \midrule
Minimum Distances ($\leq$) & $10$ & $10$ & $14$ & $10$ & $8$ & $6$ & $16$ \\ \midrule
Partitioning Matrices & (\ref{eqn: 3_7_uni_opt}) & (\ref{eqn: 3_7_uni_ini}) & (\ref{eqn: 3_7_gd_opt}) & (\ref{eqn: 3_7_gd_ini})  & (\ref{eqn: 3_7_gd_m1}) & (\ref{eqn: 3_7_gd_m12}) & (\ref{eqn: 3_7_gd_m3}) \\ \midrule
Distributions & \multicolumn{2}{c|}{Uniform} & \multicolumn{5}{c}{GRADE}\\ \midrule
Optimized by AO & $\checkmark$ & $\times$ & $\checkmark$ & $\times$ & \multicolumn{3}{c}{$\checkmark$}\\ \midrule
$(m_1,m_2)$ & \multicolumn{4}{c|}{$(2,2)$} & $(1,1)$ & $(1,2)$ & $(3,3)$ \\ \midrule
Number of cycles 4 & 0 & 30 & 0 & 40 & 0 & 0 & 0 \\
Number of cycles 6 & 0 & 150 & 0 & 140 & 320 & 70 & 0 \\
Number of cycles 8 & 2316 & $>$17804 & 2466 & $>$18710 & 30424 & 8594 & 380 \\ 
\bottomrule
\end{tabular}%
%}
\label{table: count number 3 7}
\end{table}
\end{widetext}

\subsection{BP Decoders}
\label{subsec: BP decoders}

The belief propagation (BP) decoder is a probabilistic decoder widely used for decoding classical LDPC codes. In algebraic decoders, the decoder fails if the number of errors is at least the minimum distance of the code. However, if different error patterns lead to identical syndromes and their probabilities are different, then it is possible to do better. A maximum likelihood (ML) decoder chooses the error pattern with the highest probability given the obtained syndromes. ML decoder performance is optimal, but can be of high complexity in large codes. The BP decoder is a low complexity approximation of the ML decoder that efficiently estimates the marginal distribution of each symbol through iterations of local updates of messages between VNs and CNs. BP decoders are known to be optimal on acyclic graphs. In a cyclic graph with large girth (the minimum number of nodes on a cycle), the local computing graph of the BP algorithm is close to a tree and can thus still perform quite well. 

The BP decoder used for QLDPC codes is shown in \Cref{fig: BP update law}. Likelihoods of binary symbols in the classical LDPC decoder are replaced by likelihoods of Pauli matrices, so that CN to VN messages become vectors
\begin{widetext}
\begin{equation}
    \label{eqn: CN-2-VN mes}
    L_{c_j\to v_i}=\Biggl(\log\frac{\mathbb{P}\left[v_i=X|c_j\right]}{\mathbb{P}\left[v_i=I|c_j\right]},\log\frac{\mathbb{P}\left[v_i=Y|c_j\right]}{\mathbb{P}\left[v_i=I|c_j\right]},\log\frac{\mathbb{P}\left[v_i=Z|c_j\right]}{\mathbb{P}\left[v_i=I|c_j\right]}\Biggr).
\end{equation}
\end{widetext}
The VN-to-CN messages are values denoted by
\begin{equation}
    \label{eqn: VN-2-CN mes}
    L_{v_i\to c_j}=\log\frac{\mathbb{P}\left[\langle v_i,S_{i,j}\rangle_s=0\right]}{\mathbb{P}\left[\langle v_i,S_{i,j}\rangle_s=1\right]}.
\end{equation}
The channel log-likelihood-ratios (LLRs) are defined by
\begin{widetext}
\begin{equation*}
    L_i^{ch}=\Biggl(\log\frac{\mathbb{P}\left[v_i=X\right]}{\mathbb{P}\left[v_i=I\right]},\log\frac{\mathbb{P}\left[v_i=Y\right]}{\mathbb{P}\left[v_i=I\right]},\log\frac{\mathbb{P}\left[v_i=Z\right]}{\mathbb{P}\left[v_i=I\right]}\Biggr).
\end{equation*}
\end{widetext}

\begin{figure}
\centering
\subfigure[CN-to-VN update.]{\resizebox{0.23\textwidth}{!}{\begin{tikzpicture}[node distance=1.5cm,
    vn/.style={draw, circle, minimum size=0.5cm, thick},
    cn/.style={draw, rectangle, minimum size=0.5cm, thick}]

% Draw CN with label
\node[cn,label=right:$c_j$] (cn) at (0,0) {};

% Draw VNs with labels
\node[vn, below of=cn, label=below:$v_i$,font=\small] (vn1) {};
\node[vn, above of=cn, xshift=-1.5cm, yshift=0.5cm, label=above:$v_k$,font=\small] (vn2) {};
\node[vn, above of=cn, xshift=-0.5cm, yshift=0.5cm, label=above:$v_l$,font=\small] (vn3) {};
\node[vn, above of=cn, xshift=0.5cm, yshift=0.5cm,label=above:$v_m$,font=\small] (vn4) {};
\node[vn, above of=cn, xshift=1.5cm, yshift=0.5cm,label=above:$v_n$,font=\small] (vn5) {};

% Draw edges with labels
\draw[->] (cn) -- node[below,yshift=-0.1cm,label=left:$L_{c_j\to v_i}$,font=\small] {} (vn1);%L_{c_j\to v_i}=g(L'_{c_j\to v_i};S_{i,j})$
\draw[->] (vn2) -- node[above,xshift=0.2cm,yshift=-0.4cm,label=left:$L_{v_k\to c_j}$,font=\small] {} (cn);
\draw[->] (vn3) -- node[above,xshift=0.3cm,yshift=0.1cm] {} (cn);
\draw[->] (vn4) -- node[above,xshift=-0.3cm,yshift=0.1cm] {} (cn);
\draw[->] (vn5) -- node[above,xshift=-0.2cm,yshift=-0.4cm,label=right:$L_{v_t\to c_j}$,font=\small] {} (cn);
\end{tikzpicture}}}
\subfigure[VN-to-CN update.]{\resizebox{0.23\textwidth}{!}{\begin{tikzpicture}[node distance=1.5cm,    vn/.style={draw, circle, minimum size=0.5cm, thick},    cn/.style={draw, rectangle, minimum size=0.5cm, thick}]

% Draw VN with label
\node[vn,label=left:$v_i$] (vn) at (0,0) {};

% Draw CNs with labels
\node[cn, below of=vn, label=below:$c_j$,font=\small] (cn1) {};
\node[cn, above of=vn, xshift=-1.5cm, yshift=0.5cm, label=above:$c_k$,font=\small] (cn2) {};
\node[cn, above of=vn, xshift=-0.5cm, yshift=0.5cm, label=above:$c_l$,font=\small] (cn3) {};
\node[cn, above of=vn, xshift=0.5cm, yshift=0.5cm,label=above:$c_m$,font=\small] (cn4) {};
\node[cn, above of=vn, xshift=1.5cm, yshift=0.5cm,label=above:$c_t$,font=\small] (cn5) {};
\node[right of=vn, xshift=0.25cm,font=\small] (ch) {$L_{ch}$};

% Draw edges with labels
\draw[->] (vn) -- node[below,yshift=-0.1cm,label=right:$L_{v_i\to c_j}$,font=\small] {} (cn1);%L_{v_i\to c_j}=g(L'_{v_i\to c_j};S_{i,j})
\draw[->] (cn2) -- node[above,xshift=0.2cm,yshift=-0.4cm,label=left:$L_{c_k\to v_i}$,font=\small] {} (vn);
\draw[->] (cn3) -- node[above,xshift=0.3cm,yshift=0.1cm] {} (vn);
\draw[->] (cn4) -- node[above,xshift=-0.3cm,yshift=0.1cm] {} (vn);
\draw[->] (cn5) -- node[above,xshift=-0.2cm,yshift=-0.4cm,label=right:$L_{c_t\to v_i}$,font=\small] {} (vn);
\draw[->] (ch) -- (vn);
\end{tikzpicture}}}
\caption{Update laws in the QLDPC decoder.}
\label{fig: BP update law}
\end{figure}

We introduce functions $f:\mathbb{R}^3\times\mathcal{P}\to\mathbb{R}$ and $g:\mathbb{R}\times\mathcal{P}\to\mathbb{R}^3$ that convert between likelihoods on binary symbols and likelihoods on Pauli matrices:
\begin{equation*}
    f(l;S)=\Biggl\{\begin{array}{lr} 
    \log (1+e^{l_1})-\log(e^{l_2}+e^{l_3}), & S=X,\\ 
    \log (1+e^{l_2})-\log(e^{l_3}+e^{l_1}), & S=Y,\\ 
    \log (1+e^{l_3})-\log(e^{l_1}+e^{l_2}), & S=Z,
    \end{array}
\end{equation*}
and 
\begin{equation*}
    g(l;S)=\Biggl\{\begin{array}{lr} 
    (0,-l,-l), & S=X,\\ 
    (0,-l,0), & S=Y,\\ 
    (-l,-l,0), & S=Z.
    \end{array}
\end{equation*}

The update laws are specified by
\begin{equation*}
L_{v_i\to c_j}=f(L'_{v_i\to c_j};S_{i,j}),
\end{equation*}
where $L'_{v_i\to c_j}=L_i^{ch}+\sum\limits_{k\in\mathcal{N}_{i}\setminus\{j\}} L_{c_k\to v_i}$;
%\begin{equation*}
%\end{equation*}
and
\begin{equation*}
L_{c_j\to v_i}=g(L'_{c_j\to v_i};S_{i,j}),
\end{equation*}
where $L_{c_j\to v_i}=g(L'_{c_j\to v_i};S_{i,j})$, 
%\begin{equation*}
%\end{equation*}
and
\begin{equation*}
\begin{split}
&L'_{c_j\to v_i}\\
=&\prod\limits_{k\in\mathcal{N}_{j}\setminus\{i\}} \textrm{sgn}(L_{v_k\to c_j})\phi\Bigl(\sum\limits_{k\in\mathcal{N}_{j}\setminus\{i\}} \phi(|L_{v_k\to c_j}|)\Bigr),
\end{split}
\end{equation*}
and $\phi(x)=-\log(\tanh(x/2))$.

\subsection{Degenerate Errors}\label{subsec: degenerate errors}

Assume the information sent over the channel is $\mathbf{x}$, and the decoder receives a noisy version $\mathbf{y}$, making an estimation of $\mathbf{x}$ as $\tilde{\mathbf{x}}$. One major difference between quantum error correction (QEC) codes and classical error correction codes (ECC) is as follows: in classical ECC, any $\tilde{\mathbf{x}} \neq \mathbf{x}$ leads to a decoding error; whereas in quantum ECC, $\tilde{\mathbf{x}}$ can still be a correct estimate as long as $\mathbf{x}$ and $\tilde{\mathbf{x}}$ lie in the same coset of the stabilizer group of the QEC code. Thus, in quantum ECC, there is always a potential gain from adjusting $\tilde{\mathbf{x}}$ towards a codeword.

When a BP decoder fails, the estimated vector $\tilde{\mathbf{x}}$ is typically a near-codeword, close to the original $\mathbf{x}$. Due to the presence of degenerate errors in QEC codes, it is often advantageous to push $\tilde{\mathbf{x}}$ toward a nearby codeword $\mathbf{x}'$. Decoding remains successful as long as the difference $(\mathbf{x}' - \mathbf{x})$ lies within the stabilizer group. In highly degenerate codes, algorithms that efficiently push the BP decoder’s estimation back into the codespace can significantly reduce the error floor.

Several works have explored post-processing steps to achieve this goal. The Ordered Statistics Decoder (OSD) has been the most widely applied approach \cite{roffe_decoding_2020,panteleev2021degenerate}. OSD is a post-processing algorithm that runs after BP, and its core idea is to identify a subset $S$ of variable nodes (VNs) that contains a sufficient number of linearly independent columns. This subset enables the estimate to be calculated by inverting a matrix using the syndrome. The subset $S$ is typically chosen from the least reliable VNs, with reliability measured by the log-likelihood ratios (LLRs) produced by the BP decoder.

The Stabilizer Inactivation (SI) decoder \cite{du2022stabilizer} addresses cases where an $X$ ($Z$) stabilizer generator $\mathbf{g}$ has exactly half of its variable nodes (VNs) contained in an $X$ ($Z$) error $\tilde{\mathbf{x}}$. In such scenarios, $(\tilde{\mathbf{x}} + \mathbf{g})$ and $\tilde{\mathbf{x}}$ represent essentially the same error, but $(\tilde{\mathbf{x}} + \mathbf{g})$ may be a better representative of the coset than $\tilde{\mathbf{x}}$, even though BP chooses $\tilde{\mathbf{x}}$. This situation is more likely when the probabilities of $(\tilde{\mathbf{x}} + \mathbf{g})$ and $\tilde{\mathbf{x}}$ are very close. The SI decoder measures the “closeness” between $(\tilde{\mathbf{x}} + \mathbf{g})$ and $\tilde{\mathbf{x}}$ by summing the absolute values of the log-likelihood ratios (LLRs) of all VNs in the support of $\mathbf{g}$. SI recursively spans $\mathbf{g}$ and identifies the one with the smallest measure, such that flipping the support of $\mathbf{g}$ and rerunning BP may lead to a valid codeword.

Belief Propagation with Guided Decimation (BPGD) takes a different approach to sampling error patterns \cite{yao2024belief}. Rather than decoding toward the most probable codeword, BPGD aims to sample an error pattern based on the posterior probabilities. This is achieved by sequentially fixing the values of error bits according to their marginals, as estimated by BP. BPGD has demonstrated competitive performance compared to the widely used BP-OSD under both bit-flip noise and depolarizing noise.

In this manuscript, we employ the BP-OSD decoder \cite{Roffe_LDPC_Python_tools_2022}. However, running the BP-OSD decoder directly on QLDPC codes with lengths exceeding $5000$ becomes computationally prohibitive due to its non-linear complexity, making it too slow to gather data points from the error floor region. To address this, instead of simulating with BP-OSD directly, we adopt a two-step approach as follows:

\begin{enumerate}
\item We first run the original BP small-set-flip (SSF) decoder—which does not require additional BP iterations \cite{grospellier2021combining}. The reason for choosing BP-SSF decoder will be explained later. These steps yield performance curves that are only $0.3$-$0.6$ orders of magnitude away from the BP-OSD curves. We then collect a sufficient number of errors (typically $200$-$400$) that result in decoding failures from the BP-SSF decoder, allowing us to obtain an initial FER $p^{\textrm{SSF}}_e$.
\item We then pass only these errors to BP-OSD to further evaluate the number of unresolved decoding failures. From this, we obtain the ratio of unresolved decoding failures out of the total number of errors, denoted as $\alpha^{\textrm{OSD}}_e$. Consequently, the real logical error rate under BP-OSD is given by $p^{\textrm{SSF}}_e \alpha^{\textrm{OSD}}_e$.
\end{enumerate}

To illustrate the reason for applying BP-SSF decoder, we analyzed the error vectors collected from the non-binary BP decoder. We observed that these error vectors can be divided into three major types, as shown in Fig.~\ref{fig: error vectors}:

\begin{description}
\item[\bfseries Type 1:] All $X$ ($Z$) errors where some variable nodes (VNs) are connected to more unsatisfied check nodes (CNs) than satisfied CNs. These errors are likely to occur due to slow convergence, often resulting from the existence of very short cycles. By increasing the number of iterations to a very large number—though this approach is inefficient—the errors are likely to decrease. Bit flip (BF) can easily remove these errors.
\item[\bfseries Type 2:] All $X$ ($Z$) errors where a subset of VNs is completely contained in or has a large overlap with the support of a $Z$ stabilizer, with the size of this subset exceeding half the weight of the stabilizer. Flipping the stabilizer is an efficient method to address these errors.
\item[\bfseries Type 3:] All $X$ ($Z$) errors that are neither Type I nor Type II, which correspond to classical absorbing sets. These errors represent true absorbing sets or trapping sets that are unsolvable by the BP decoder and require more advanced post-processing steps, as proposed in \cite{panteleev2021degenerate,du2022stabilizer,yao2024belief}.
\end{description}

\begin{figure}
\centering 
\subfigure[Type $1$ errors.]{\includegraphics[width=0.15\textwidth]{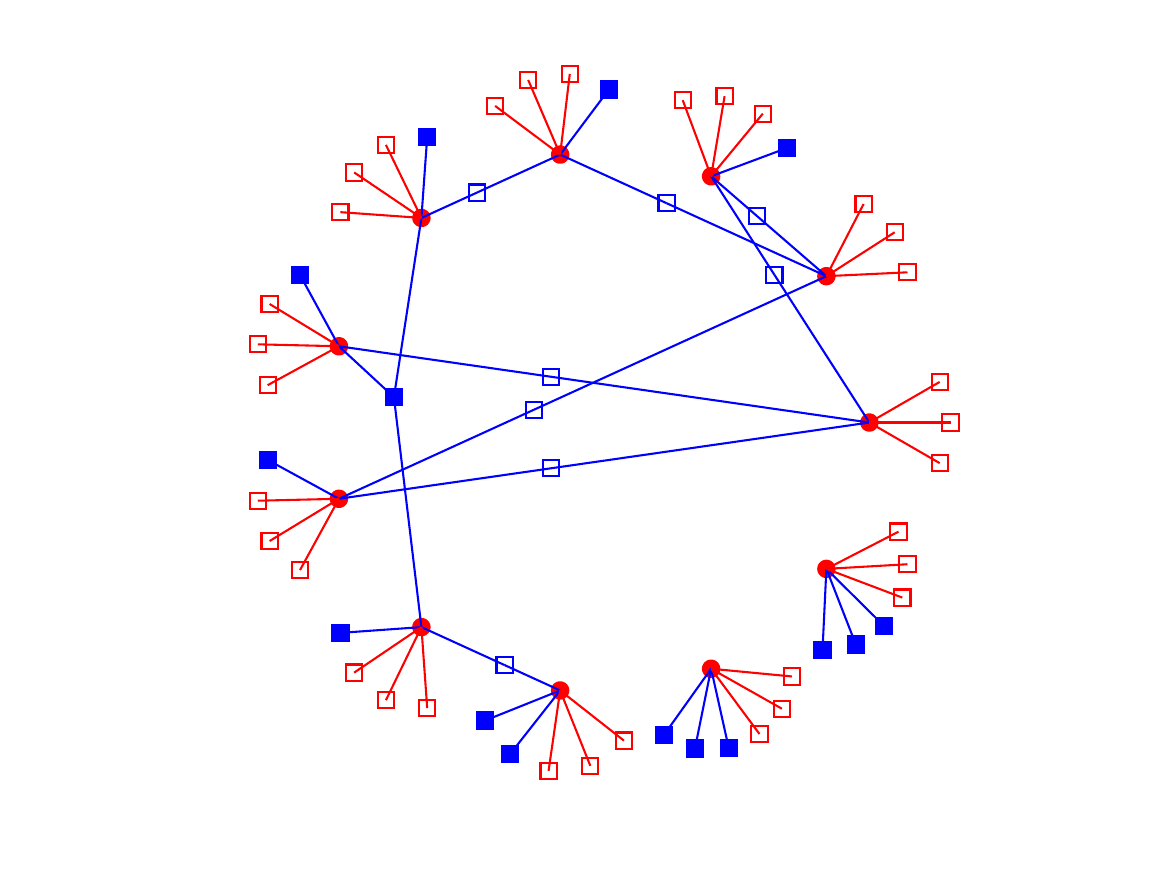}} 
\subfigure[Type $2$ errors.]{\includegraphics[width=0.15\textwidth]{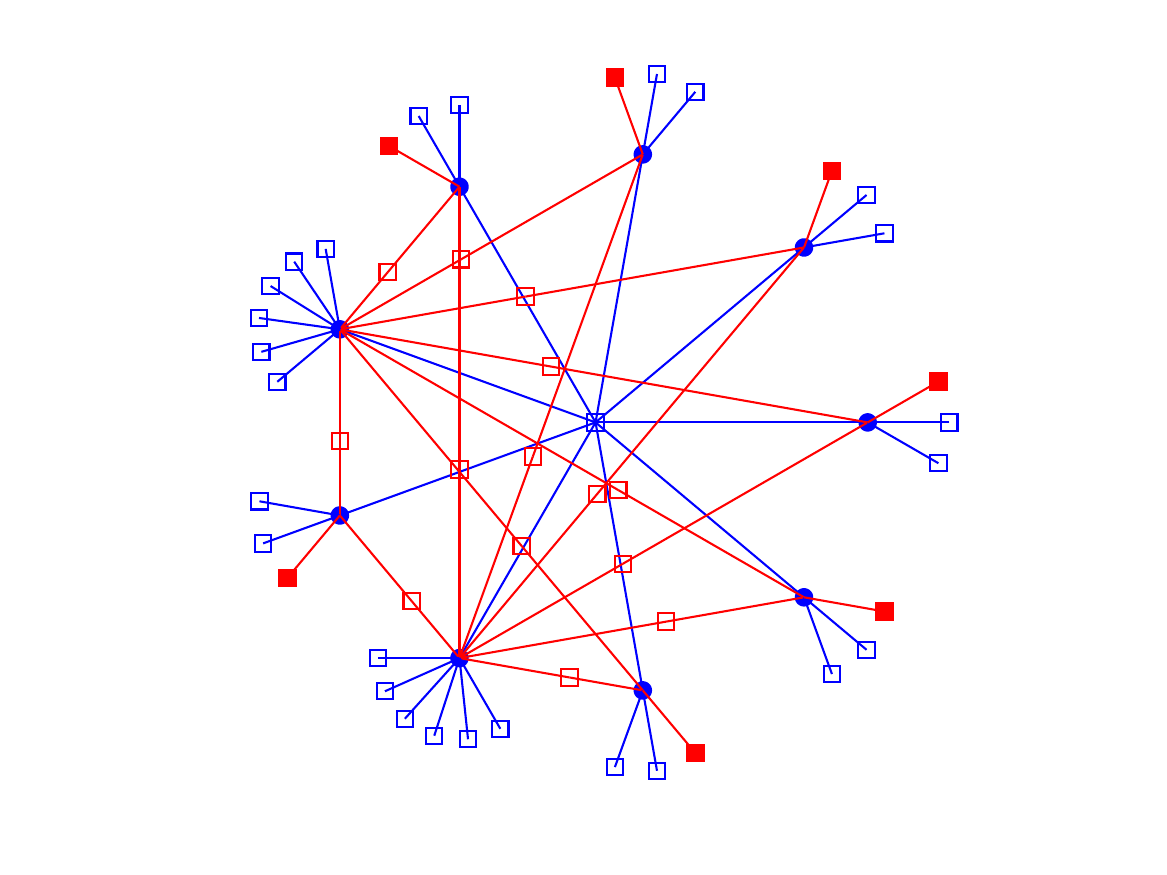}} 
\subfigure[Type $3$ errors.]{\includegraphics[width=0.15\textwidth]{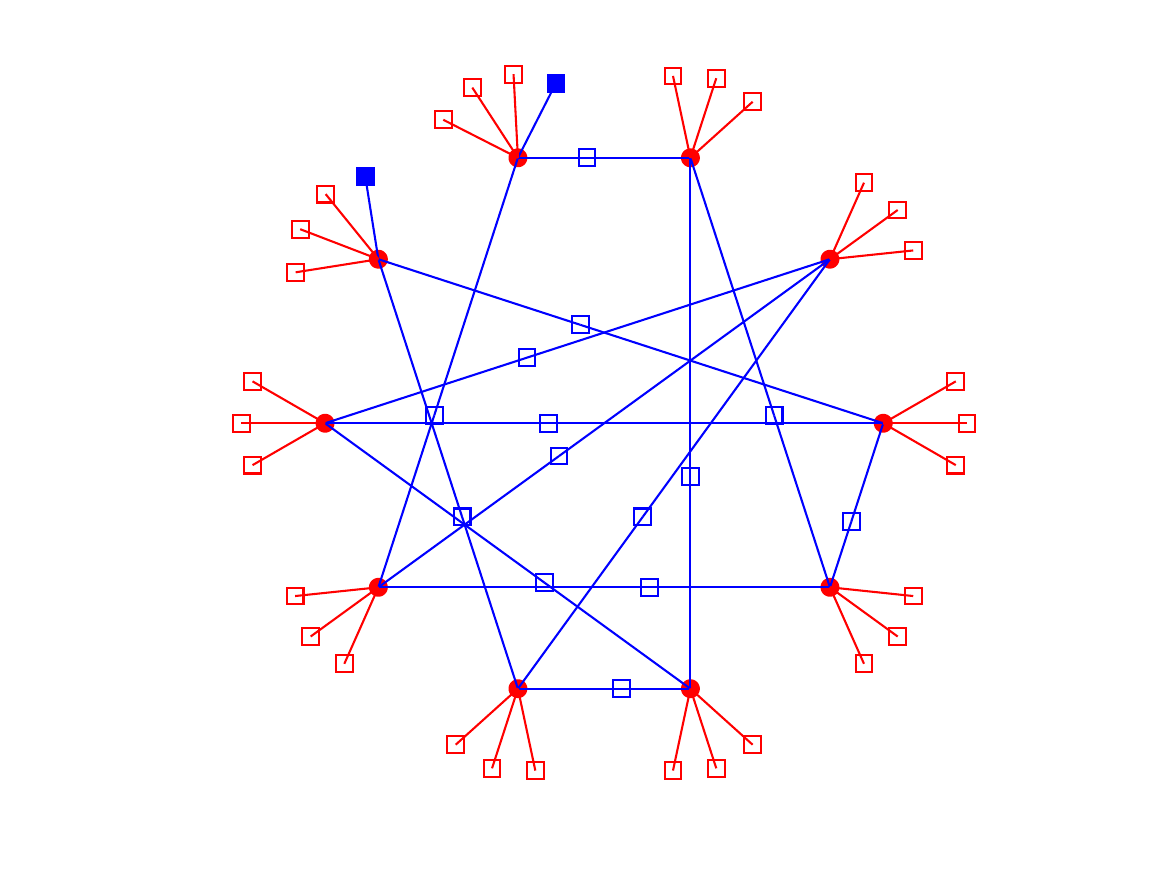}} 
\caption{Typical errors obtained through non-binary BP decoder.}
\label{fig: error vectors}
\end{figure}

Based on the observations above, running the following post-processing steps recursively effectively improves the outcomes of the BP decoder with low complexity, which is exactly the BP-SSF decoder described in \cite{grospellier2021combining}: 

\begin{description}
  \item[\bfseries Step 1:] We perform bit flip (BF) on the estimated error. In particular, we check VNs more than half of their neighbors are unsatisfied CNs, and flip one VN that is adjacent to the largest number of unsatisfied CNs. We perform this step recursively until no VNs could be flipped.
  \item[\bfseries Step 2:] We perform stabilizer flip if no VNs is flipped in Step $1$. Specifically, if there still exist a set $S$ of unsatisfied $X$($Z$) stabilizers (CNs), we check if there also exist any $Z$($X$) stabilizer that more than half of its neighbors are contained in $S$. This is motivated by the stabilizer inactivation decoder proposed in \cite{du2022stabilizer}.
  \item[\bfseries Step 3:] We perform a shortest path search on the errors if no variable nodes (VNs) are flipped in Steps 1 and 2. Specifically, we identify a shortest path that connects at least two unsatisfied check nodes (CNs) and flip all the VNs along this path. To prevent getting trapped in periodic states, we restrict the SP search to include only those VNs that have been visited at most once.
\end{description}

With the aforementioned SSF post-processing steps, we collect all the errors that lead to decoding failures. We then pass those errors through the BP-OSD software \cite{Roffe_LDPC_Python_tools_2022} and record the portion of errors that still remains in uncorrectable under BP-OSD.

\subsection{Numerical Results}
\label{subsec: simulation}
\color{black}

We consider a depolarization channel where the probability of a physical qubit being correct is $1-p$, and the probability of an $X$, $Y$, or $Z$ error is equal to $p/3$. Fig.~\ref{fig: SC-QLDPC_3_8} and Fig.~\ref{fig: SC-QLDPC_3_7} show the result of applying the decoder described in \Cref{subsec: BP decoders} to the Codes $1$ through $7$ specified in \Cref{subsec: construction}. 

\begin{figure}
  \centering
  \includegraphics[width=0.95\linewidth]{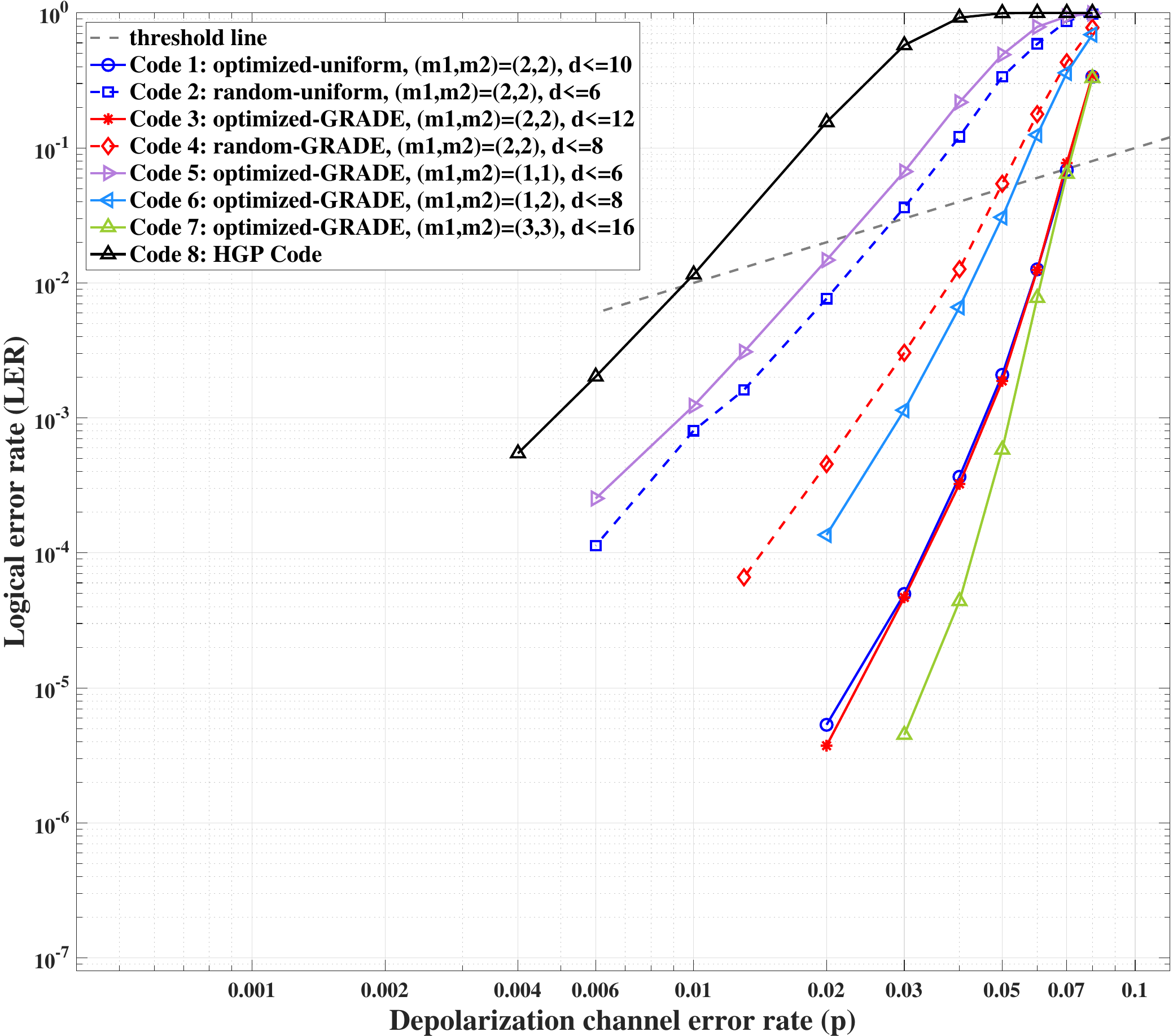}
  \caption{The Logical Error Rate (LER) for $\left[\left[7300, \geq 2500\right]\right]$ quantum low-density parity-check (QLDPC) codes, using the belief propagation with ordered statistics decoding (BP-OSD) decoder, is presented. Codes 1-7 are spatially coupled (SC) QLDPC codes, while Code 8 is a hypergraph product (HGP) code constructed from the hypergraph of a $\left[30, 80\right]$ quasi-cyclic LDPC code with itself.}
  \label{fig: SC-QLDPC_3_8}
\end{figure}

\begin{figure}
  \centering
  \includegraphics[width=0.95\linewidth]{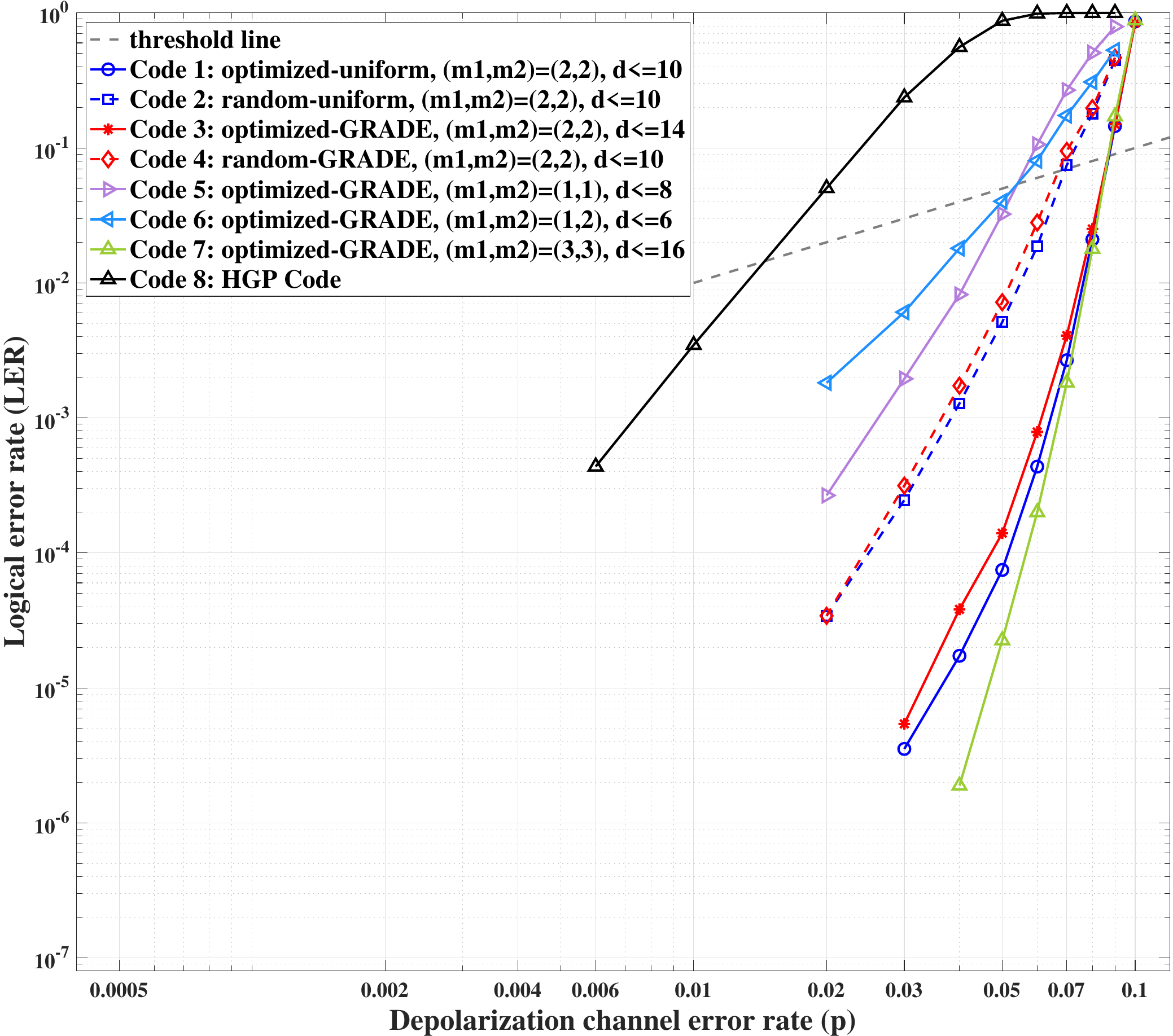}
  \caption{The Logical Error Rate (LER) for $\left[\left[5800, \geq 1600\right]\right]$ quantum low-density parity-check (QLDPC) codes, using the belief propagation with ordered statistics decoding (BP-OSD) decoder, is presented. Codes 1-7 are spatially coupled (SC) QLDPC codes, while Code 8 is a hypergraph product (HGP) code constructed from the hypergraph of a $\left[30, 70\right]$ quasi-cyclic LDPC code with itself.}
  \label{fig: SC-QLDPC_3_7}
\end{figure}

We conducted simulations in parallel on a high-performance computing cluster. The job indices served as seeds for the standard Mersenne Twister random number generator (mt19937 in C++), while Pauli errors were generated using the built-in discrete distribution function. The BP decoder estimates the error vector at the end of each iteration, checking the check node (CN) conditions, and halting when all CN conditions are satisfied or the maximum number of iterations reaches $500$.

If, after $500$ iterations, not all CN conditions are satisfied, the estimated errors undergo the SSF postprocessing step outlined in \Cref{subsec: degenerate errors}. Decoding is considered failed if the CN conditions remain unsatisfied or if they are satisfied but the output does not belong to the stabilizer group. Since SC-HGP codes are CSS codes, checking containment within the stabilizer group is straightforward after applying offline Gaussian elimination to reduce the \(X\)- and \(Z\)-components of the parity-check matrix to systematic form. We collect a sufficient number of errors that result in decoding failures in the BP-SSF decoder, and record the error rate as \(p^{\textrm{SSF}}_e\).

We then pass only the aforementioned errors to the BP-OSD algorithm to further assess the number of unresolved decoding failures \cite{roffe_decoding_2020, panteleev2021degenerate}. The ratio of unresolved decoding failures to the total number of errors is denoted as \(\alpha^{\textrm{OSD}}_e\). Consequently, the actual logical error rate under BP-OSD is given by \(p^{\textrm{SSF}}_e \alpha^{\textrm{OSD}}_e\). The BP-OSD software was obtained from \cite{Roffe_LDPC_Python_tools_2022}, with \(w=2\) and the number of iterations also set to 500.

Fig.~\ref{fig: SC-QLDPC_3_8} shows FER plots for the $\left[\left[7300,2500\right]\right]$ codes specified in \Cref{table: count number 3 8}, and Fig.~\ref{fig: SC-QLDPC_3_7} shows FER plots for the $\left[\left[5800,1600\right]\right]$ codes specified in \Cref{table: count number 3 7}. Dashed lines represent codes ($2$ and $4$) constructed from random partitioning matrices, solid lines represent codes optimized by AO, and the color of the line represents a choice of random/optimized code. In addition, we compare our constructed codes with the $\left[\left[7300,2500\right]\right]$ and $\left[\left[5800,1600\right]\right]$ hypergraph product (HGP) codes. These HGP codes are derived from two identical $\left[30,80\right]$ and $\left[30,70\right]$ binary random LDPC codes, respectively. The parity-check matrices of these random LDPC codes are modified from array-based quasi-cyclic codes, lifted from all-one matrices of sizes $3\times 7$ and $3\times 8$, ensuring the elimination of cycles-$4$ and the minimization of cycles-$6$ in the lifted codes. The lifting matrices are specified in (\ref{eqn: HGP_lifting}). The performance curves for these codes are shown in Fig.~\ref{fig: SC-QLDPC_3_8} and Fig.~\ref{fig: SC-QLDPC_3_7}, where they are labeled as Code 8.

We make the following observations:

\subsubsection{Reducing the number of short cycles leads to significant gains in the performance of SC-HGP codes} We attribute the gain of Code $1$ over Code $2$, and the gain of Code $3$ over Code $4$, to the reduction in the number of short cycles documented in \Cref{table: count number 3 8} and \Cref{table: count number 3 7}.

\begin{rem} (Detrimental objects) There is more to improving finite-length performance than reducing the number of cycles. First, the influence of a cycle depends on its neighborhood, so some cycles are more detrimental than others. Second, objects such as absorbing sets and stopping sets may be more detrimental than cycles. We attribute the performance gains reported in this paper to the fact that short cycles are the basic building blocks of these more general detrimental objects. In future work we might learn error vectors from decoding failures, and expurgate them from the Tanner graph.  
\end{rem}

\subsubsection{High performance is possible with limited memory} Figs.~\ref{fig: SC-QLDPC_3_8} and \ref{fig: SC-QLDPC_3_7} illustrate that increasing memory from $(1,1)$ / $(1, 2)$ to $(2,2)$ / $(3, 3)$ improves the FER by an order of magnitude. However, the gain associated with increasing memory from $(2, 2)$ to $(3, 3)$ is marginal (the FER curves cross in Fig. \ref{fig: SC-QLDPC_3_8} and they overlap in Fig. \ref{fig: SC-QLDPC_3_7}). As the memory increases, the number of rigid cycles dominates the number of flexible cycles, hence reducing the number of flexible cycles has a marginal impact on performance.

\begin{rem} In Fig.~\ref{fig: SC-QLDPC_3_7}, the FER curve of the memory $(1, 2)$ code crosses that of the memory $(1, 1)$ code, despite the fact that neither is close to the rigid cycle limit. It is rare that cycle optimization produces a code with small minimum distance, but that is the case here, since at least $2/3$ of the errors are convergent errors.
\end{rem}

\subsubsection{Rigid cycle limit}
We also observe that while GRADE does consistently lead to better random partitioning matrices than matrices sampled from uniform distributions (i.e., Code $4$ always perform better than Code $2$), the AO-optimized codes (Code $1$ and $3$) almost overlap in both figures, which is also consistent with the fact that their number of cycles are also close. The reason is still due to the rigid cycle limit.
 
\begin{rem} (Rigid Cycle Limit) Consider to what extent we are able to reduce the number of rigid cycles by optimizing the matrices $\mathbf{A}$ and $\mathbf{B}$ in the SC-HGP code construction. Suppose that the $i$-th VN is connected to $d_i=d^X_i+d^Y_i+d^Z_i$ CNs, where $d^X_i$, $d^Y_i$, and $d^Z_i$ respectively denote the number of $X$-type, $Y$-type and $Z$-type connections. Since short cycles have the greatest impact on performance, we focus on cycles-$4$ to show how the distribution of degrees $d_i$ limits the extent to which the number of rigid cycles can be reduced. Commutativity of the stabilizer group implies that the $i'$th VN is contained in at least $(d^X_id^Y_i+d^Y_id^Z_i+d^Z_id^X_i)$ CNs. Hence the total number of cycles-4 is                                              
\begin{equation}\label{eqn: lower bound cycle 4}
    \frac{1}{2}\sum\limits_{i=1}^N (d^X_id^Y_i+d^Y_id^Z_i+d^Z_id^X_i),
\end{equation}
where $N$ is the number of VNs. Observe that for CSS codes, (\ref{eqn: lower bound cycle 4}) reduces to $\frac{1}{2}\sum\nolimits_{i=1}^N d^Z_id^X_i$. It is necessary to optimize degree distributions in order to reduce the number (and influence) of rigid cycles. 
\end{rem} 

\subsubsection{Rigid cycles limit the extent to which GRADE and AO can improve performance} GRADE consistently provides random partitioning matrices that improve upon those sampled from a uniform distribution (for example, Code $4$ is better than Code $2$). Further optimization (AO) is less consequential when the number of cycles is close to the rigid cycle limit (for example, Codes $1$ and $3$ almost overlap). However, note that $\mathbf{A}$ and $\mathbf{B}$ we used here are pretty small and the HGP base matrix is a sparse one. Higher memories might be needed for larger/denser base matrices.

\subsubsection{Performance is not a simple function of minimum distances} Figs.~\ref{fig: SC-QLDPC_3_8} and \ref{fig: SC-QLDPC_3_7} illustrate that SC-HGP codes with smaller minimum distances do not necessarily perform worse than those with larger minimum distances. The fact that Code $1$ in Fig.~\ref{fig: SC-QLDPC_3_7}, with a minimum distance upper bounded by $10$, shows close and even slightly better performance compared to Code $3$, which has a minimum distance upper bounded by $14$, demonstrates that the minimum distance is not the only factor influencing QLDPC code performance. There are two reasons for this:

\begin{enumerate}
\item Even when using a maximum likelihood decoder, the error rate depends on the weight distribution of the codewords. Specifically, the multiplicities of the minimum weight codewords are significant, rather than just the minimum distance alone. 
\item A codeword is also an absorbing set and contains a large number of concentrated cycles. Reducing cycles can strategically lower the number of small-weight codewords.  
\end{enumerate} 

\subsubsection{Comparisons with Prior Work} Consider the memory $(2, 2)$ and $(3, 3)$ codes featured in Figs.~\ref{fig: SC-QLDPC_3_8} and \ref{fig: SC-QLDPC_3_7} with rates as high as $0.3425$ ($2500/7800$) and $0.2759$ ($1600/5800$) respectively. The rate $0.3425$ codes have thresholds around $7.0\%$, and the rate $0.2759$ codes have thresholds around $8.7\%$. The HGP codes appearing in \cite{panteleev2021degenerate} have much lower rates (less than $1/10$) and inferior thresholds. We also compared SC-HGP codes with HGP codes of the same size, as demonstrated by the performance curves of Code $8$ in each figure. It is important to note that we removed all cycles-$4$ and minimized the number of cycles-$6$ in the component matrices using a greedy algorithm to prevent potential performance degradation and ensure a fair comparison. We observe that the HGP codes exhibit inferior performance compared to the SC-HGP codes.

\begin{rem} (Generalized Bicycle (GB) Codes) The $\left[\left[126,28\right]\right]$ GB code appearing in \cite{panteleev2021degenerate} is a short code with high rate and excellent performance (slightly worse than our rate $\left[\left[7300,2500\right]\right]$ codes). It is possible that increasing the length will improve performance. Nevertheless, we have noted that GB codes can be categorized as special classes of 1D-SC-QLDPC codes with high memory. This example illustrates the importance of a well-chosen base matrix. 
\end{rem}

\section{Conclusion and Future Work}
\label{sec:conclusion}

We have described toric codes as quantum counterparts of classical two-dimensional spatially coupled (2D-SC) codes, and introduced spatially-coupled (SC) QLDPC codes as a generalization. We have used the framework of two-dimensional convolution to represent the parity check matrix of a 2D-SC code as a polynomial in two indeterminates. Our representation leads to a simple algebraic condition that is necessary and sufficient for a 2D-SC code to be a stabilizer code. It also leads to formulae relating the number of short cycles of different lengths that arise from short cycles in either component code. We have described how to use these formulae to optimize SC-hypergraph product (HGP) codes by extending methods used to optimize classical 2D-SC codes. Our simulation results show that reducing the number of short cycles leads to significant gains in the performance of SC-HGP codes. We have shown that high performance is possible with limited memory.

Future work includes implementation of windowed decoders, modification of the decoder to efficiently address the bottleneck that results from rigid cycles and development of new SC-QLDPC codes that employ more general base.

\section*{Acknowledgement}
We thank Dr. Victor Albert and Dr. Anthony Leverrier for making us aware of related literature. We thank Dr. Pavel Panteleev for specifying the relation between quasi-abelian lifted product codes and SC-HGP codes, and for bringing the more general lifted product codes defined over non-abelian groups to our attention. We thank Dr. Nithin Raveendran for providing us with estimations of minimum distances of the codes. We also thank Xinyu (Norah) Tan and Vahid Nourozi for their suggestions on improving the presentation of the paper. This work was supported in part by the National Science Foundation through QLCI-CI: Institute for Robust Quantum Simulation, and through Grant CCF-2106213.

\section*{Data Availability}
The optimization software can be found at \cite{SpatiallyCoupledQLDPCCodes}. 

\bibliography{ref}

\begin{thebibliography}{10}

\bibitem{calderbank1997quantum}
A.~Robert Calderbank, Eric~M. Rains, Peter~W. Shor, and Neil J.~A. Sloane.
\newblock ``Quantum error correction and orthogonal geometry''.
\newblock \href{https://dx.doi.org/10.1103/PhysRevLett.78.405}{Physical Review
  Letters {\bf 78}, 405--409}~(1997).

\bibitem{calderbank1998quantum}
A.~Robert Calderbank, Eric~M. Rains, Peter~W. Shor, and Neil J.~A. Sloane.
\newblock ``Quantum error correction via codes over {G}{F}(4)''.
\newblock \href{https://dx.doi.org/10.1109/ISIT.1997.613213}{IEEE Transactions
  on Information Theory {\bf 44}, 1369--1387}~(1998).

\bibitem{gottesman1997stabilizer}
Daniel Gottesman.
\newblock ``Stabilizer codes and quantum error correction''.
\newblock \href{https://dx.doi.org/10.7907/rzr7-dt72}{PhD thesis}.
\newblock California Institute of Technology.
\newblock ~(1997).

\bibitem{kitaev1997quantum}
A~Yu Kitaev.
\newblock ``Quantum computations: algorithms and error correction''.
\newblock \href{https://dx.doi.org/10.1070/RM1997v052n06ABEH002155}{Russian
  Mathematical Surveys {\bf 52}, 1191}~(1997).

\bibitem{kitaev2003fault}
A~Yu Kitaev.
\newblock ``Fault-tolerant quantum computation by anyons''.
\newblock
  \href{https://dx.doi.org/https://doi.org/10.1016/S0003-4916(02)00018-0}{Annals
  of physics {\bf 303}, 2--30}~(2003).

\bibitem{bravyi1998quantum}
Sergey~B Bravyi and A~Yu Kitaev.
\newblock ``Quantum codes on a lattice with boundary''~(1998).
\newblock  \href{http://arxiv.org/abs/9811052}{arXiv:9811052}.

\bibitem{dennis2002topological}
Eric Dennis, Alexei Kitaev, Andrew Landahl, and John Preskill.
\newblock ``Topological quantum memory''.
\newblock \href{https://dx.doi.org/10.1063/1.1499754}{Journal of Mathematical
  Physics {\bf 43}, 4452--4505}~(2002).

\bibitem{freedman2002z2}
Michael~H Freedman, David~A Meyer, and Feng Luo.
\newblock ``Z2-systolic freedom and quantum codes''.
\newblock
  \href{https://dx.doi.org/https://doi.org/10.1201/9781420035377}{Chapter~12,
  pages 303--338}.
\newblock Chapman and Hall/CRC. ~(2002).

\bibitem{bravyi2010tradeoffs}
Sergey Bravyi, David Poulin, and Barbara Terhal.
\newblock ``Tradeoffs for reliable quantum information storage in 2d systems''.
\newblock \href{https://dx.doi.org/10.1103/PhysRevLett.104.050503}{Physical
  review letters {\bf 104}, 050503}~(2010).

\bibitem{fowler2012surface}
Austin~G Fowler, Matteo Mariantoni, John~M Martinis, and Andrew~N Cleland.
\newblock ``Surface codes: Towards practical large-scale quantum computation''.
\newblock \href{https://dx.doi.org/10.1103/PhysRevA.86.032324}{Physical Review
  A {\bf 86}, 032324}~(2012).

\bibitem{horsman2012surface}
Clare Horsman, Austin~G Fowler, Simon Devitt, and Rodney Van~Meter.
\newblock ``Surface code quantum computing by lattice surgery''.
\newblock \href{https://dx.doi.org/10.22331/q-2019-03-05-128}{New Journal of
  Physics {\bf 14}, 123011}~(2012).

\bibitem{delfosse2010quantum}
Nicolas Delfosse and Gilles Z{\'e}mor.
\newblock ``Quantum erasure-correcting codes and percolation on regular tilings
  of the hyperbolic plane''.
\newblock In 2010 IEEE Information Theory Workshop.
\newblock \href{https://dx.doi.org/10.1109/CIG.2010.5592863}{Pages 1--5}.
\newblock IEEE~(2010).

\bibitem{terhal2015quantum}
Barbara~M Terhal.
\newblock ``Quantum error correction for quantum memories''.
\newblock \href{https://dx.doi.org/10.1103/RevModPhys.87.307}{Reviews of Modern
  Physics {\bf 87}, 307}~(2015).

\bibitem{delfosse2016generalized}
Nicolas Delfosse, Pavithran Iyer, and David Poulin.
\newblock ``Generalized surface codes and packing of logical qubits''~(2016).
\newblock  \href{http://arxiv.org/abs/1606.07116}{arXiv:1606.07116}.

\bibitem{breuckmann2017hyperbolic}
Nikolas~P Breuckmann, Christophe Vuillot, Earl Campbell, Anirudh Krishna, and
  Barbara~M Terhal.
\newblock ``Hyperbolic and semi-hyperbolic surface codes for quantum storage''.
\newblock \href{https://dx.doi.org/10.1088/2058-9565/aa7d3b}{Quantum Science
  and Technology {\bf 2}, 035007}~(2017).

\bibitem{tillich2014hypergraph}
Jean-Pierre Tillich and Gilles Zémor.
\newblock ``Quantum {L}{D}{P}{C} codes with positive rate and minimum distance
  proportional to the square root of the blocklength''.
\newblock \href{https://dx.doi.org/10.1109/TIT.2013.2292061}{IEEE Transactions
  on Information Theory {\bf 60}, 1193--1202}~(2014).

\bibitem{leverrier2022xyz}
Anthony Leverrier, Simon Apers, and Christophe Vuillot.
\newblock ``Quantum {X}{Y}{Z} product codes''.
\newblock \href{https://dx.doi.org/10.22331/q-2022-07-14-766}{Quantum {\bf 6},
  766}~(2022).

\bibitem{evra2022decodable}
Shai Evra, Tali Kaufman, and Gilles Z{\'e}mor.
\newblock ``Decodable quantum {L}{D}{P}{C} codes beyond the n distance barrier
  using high-dimensional expanders''.
\newblock \href{https://dx.doi.org/10.1137/20M1383689}{SIAM Journal on
  ComputingPages 276--316}~(2022).

\bibitem{sipser1996expander}
M.~Sipser and D.A. Spielman.
\newblock ``Expander codes''.
\newblock \href{https://dx.doi.org/10.1109/18.556667}{IEEE Transactions on
  Information Theory {\bf 42}, 1710--1722}~(1996).

\bibitem{zemor2009cayley}
Gilles Z{\'e}mor.
\newblock ``On cayley graphs, surface codes, and the limits of homological
  coding for quantum error correction''.
\newblock In Coding and Cryptology: Second International Workshop, IWCC 2009,
  Zhangjiajie, China, June 1-5, 2009. Proceedings 2.
\newblock
  \href{https://dx.doi.org/https://doi.org/10.1007/978-3-642-01877-0_21}{Pages
  259--273}.
\newblock Springer~(2009).

\bibitem{hastings2021fiber}
Matthew~B Hastings, Jeongwan Haah, and Ryan O'Donnell.
\newblock ``Fiber bundle codes: breaking the n 1/2 polylog (n) barrier for
  quantum {L}{D}{P}{C} codes''.
\newblock In Proceedings of the 53rd Annual ACM SIGACT Symposium on Theory of
  Computing.
\newblock \href{https://dx.doi.org/10.1145/3406325.3451005}{Pages 1276--1288}.
\newblock ~(2021).

\bibitem{panteleev2022liftedproduct}
Pavel Panteleev and Gleb Kalachev.
\newblock ``Quantum {L}{D}{P}{C} codes with almost linear minimum distance''.
\newblock \href{https://dx.doi.org/10.1109/TIT.2021.3119384}{IEEE Transactions
  on Information Theory {\bf 68}, 213--229}~(2022).

\bibitem{breuckmann2021balanced}
Nikolas~P Breuckmann and Jens~N Eberhardt.
\newblock ``Balanced product quantum codes''.
\newblock \href{https://dx.doi.org/10.1109/TIT.2021.3097347}{IEEE Transactions
  on Information Theory {\bf 67}, 6653--6674}~(2021).

\bibitem{panteleev2022asymptotically}
Pavel Panteleev and Gleb Kalachev.
\newblock ``Asymptotically good quantum and locally testable classical
  {L}{D}{P}{C} codes''.
\newblock In Proceedings of the 54th Annual ACM SIGACT Symposium on Theory of
  Computing.
\newblock \href{https://dx.doi.org/10.1145/3519935.3520017}{Pages 375–--388}.
\newblock ~(2022).

\bibitem{zemor2022quantumtanner}
A.~Leverrier and G.~Zemor.
\newblock ``Quantum tanner codes''.
\newblock In 2022 IEEE 63rd Annual Symposium on Foundations of Computer Science
  (FOCS).
\newblock \href{https://dx.doi.org/10.1109/FOCS54457.2022.00117}{Pages
  872--883}.
\newblock ~(2022).

\bibitem{gallager1962low}
Robert Gallager.
\newblock ``Low-density parity-check codes''.
\newblock \href{https://dx.doi.org/10.1109/TIT.1962.1057683}{IRE Transactions
  on information theory {\bf 8}, 21--28}~(1962).

\bibitem{richardson2001design}
Thomas~J Richardson, Mohammad~Amin Shokrollahi, and R{\"u}diger~L Urbanke.
\newblock ``Design of capacity-approaching irregular low-density parity-check
  codes''.
\newblock \href{https://dx.doi.org/10.1109/18.910578}{IEEE transactions on
  information theory {\bf 47}, 619--637}~(2001).

\bibitem{richardson2001capacity}
Thomas~J Richardson and R{\"u}diger~L Urbanke.
\newblock ``The capacity of low-density parity-check codes under
  message-passing decoding''.
\newblock \href{https://dx.doi.org/10.1109/18.910577}{IEEE Transactions on
  information theory {\bf 47}, 599--618}~(2001).

\bibitem{poulin2008iterative}
David Poulin and Yeojin Chung.
\newblock ``On the iterative decoding of sparse quantum codes''.
\newblock \href{https://dx.doi.org/10.26421/QIC8.10-8}{Quantum Information \&
  Computation {\bf 8}, 987--1000}~(2008).

\bibitem{lidar2013quantum}
Daniel~A Lidar and Todd~A Brun.
\newblock ``Quantum error correction''.
\newblock \href{https://dx.doi.org/10.1017/CBO9781139034807}{Cambridge
  university press}. ~(2013).

\bibitem{mitchell2015spatially}
D.~G.~M. {Mitchell}, M.~{Lentmaier}, and D.~J. {Costello}.
\newblock ``Spatially coupled {L}{D}{P}{C} codes constructed from
  protographs''.
\newblock \href{https://dx.doi.org/10.1109/TIT.2015.2453267}{IEEE {T}rans.
  {I}nformation {T}heory {\bf 61}, 4866--4889}~(2015).

\bibitem{mitchell2017edge}
David G.~M. Mitchell and Eirik Rosnes.
\newblock ``Edge spreading design of high rate array-based {S}{C}-{L}{D}{P}{C}
  codes''.
\newblock In 2017 IEEE International Symposium on Information Theory (ISIT).
\newblock \href{https://dx.doi.org/10.1109/ISIT.2017.8007068}{Pages
  2940--2944}.
\newblock ~(2017).

\bibitem{beemer2017generalized}
Allison Beemer, Salman Habib, Christine~A. Kelley, and Joerg Kliewer.
\newblock ``A generalized algebraic approach to optimizing {S}{C}-{L}{D}{P}{C}
  codes''.
\newblock In 2017 55th Annual Allerton Conference on Communication, Control,
  and Computing (Allerton).
\newblock \href{https://dx.doi.org/10.1109/ALLERTON.2017.8262802}{Pages
  672--679}.
\newblock ~(2017).

\bibitem{beemer2018design}
Allison Beemer.
\newblock ``Design and analysis of graph-based codes using algebraic lifts and
  decoding networks''.
\newblock PhD thesis.
\newblock The University of Nebraska-Lincoln.
\newblock ~(2018).
\newblock  url:~\url{https://digitalcommons.unl.edu/mathstudent/87/}.

\bibitem{hareedy2016general}
Ahmed Hareedy, Chinmayi Lanka, and Lara Dolecek.
\newblock ``A general non-binary {L}{D}{P}{C} code optimization framework
  suitable for dense flash memory and magnetic storage''.
\newblock \href{https://dx.doi.org/10.1109/JSAC.2016.2603719}{IEEE Journal on
  Selected Areas in Communications {\bf 34}, 2402--2415}~(2016).

\bibitem{hareedy2017high}
A.~{Hareedy}, H.~{Esfahanizadeh}, and L.~{Dolecek}.
\newblock ``High performance non-binary spatially-coupled codes for flash
  memories''.
\newblock In 2017 IEEE Information Theory Workshop (ITW).
\newblock \href{https://dx.doi.org/10.1109/ITW.2017.8277940}{Pages 229--233}.
\newblock ~(2017).

\bibitem{hareedy2020channel}
Ahmed Hareedy, Ruiyi Wu, and Lara Dolecek.
\newblock ``A channel-aware combinatorial approach to design high performance
  spatially-coupled codes''.
\newblock \href{https://dx.doi.org/10.1109/TIT.2020.2979981}{IEEE {T}rans.
  {I}nformation {T}heory {\bf 66}, 4834--4852}~(2020).

\bibitem{Yang2023breaking}
Siyi Yang, Ahmed Hareedy, Robert Calderbank, and Lara Dolecek.
\newblock ``Breaking the computational bottleneck: Probabilistic optimization
  of high-memory spatially-coupled codes''.
\newblock \href{https://dx.doi.org/10.1109/TIT.2022.3207321}{IEEE Transactions
  on Information Theory {\bf 69}, 886--909}~(2023).

\bibitem{hagiwara2011spatially}
Manabu Hagiwara, Kenta Kasai, Hideki Imai, and Kohichi Sakaniwa.
\newblock ``Spatially coupled quasi-cyclic quantum {L}{D}{P}{C} codes''.
\newblock In 2011 IEEE International Symposium on Information Theory
  Proceedings.
\newblock \href{https://dx.doi.org/10.1109/ISIT.2011.6034208}{Pages 638--642}.
\newblock ~(2011).

\bibitem{lentmaier2010iterative}
M.~{Lentmaier}, A.~{Sridharan}, D.~J. {Costello}, and K.~S. {Zigangirov}.
\newblock ``Iterative decoding threshold analysis for {L}{D}{P}{C}
  convolutional codes''.
\newblock \href{https://dx.doi.org/10.1109/TIT.2010.2059490}{IEEE {T}rans.
  {I}nformation {T}heory {\bf 56}, 5274--5289}~(2010).

\bibitem{iyengar2011windowed}
Aravind~R Iyengar, Marco Papaleo, Paul~H Siegel, Jack~Keil Wolf, Alessandro
  Vanelli-Coralli, and Giovanni~E Corazza.
\newblock ``Windowed decoding of protograph-based {L}{D}{P}{C} convolutional
  codes over erasure channels''.
\newblock \href{https://dx.doi.org/10.1109/TIT.2011.2177439}{IEEE Transactions
  on Information Theory {\bf 58}, 2303--2320}~(2011).

\bibitem{iyengar2012windowed}
Aravind~R Iyengar, Paul~H Siegel, R{\"u}diger~L Urbanke, and Jack~Keil Wolf.
\newblock ``Windowed decoding of spatially coupled codes''.
\newblock \href{https://dx.doi.org/10.1109/TIT.2012.2231465}{IEEE transactions
  on Information Theory {\bf 59}, 2277--2292}~(2012).

\bibitem{hassan2016non}
Najeeb~Ul Hassan, Ali~E Pusane, Michael Lentmaier, Gerhard~P Fettweis, and
  Daniel~J Costello.
\newblock ``Non-uniform window decoding schedules for spatially coupled
  {L}{D}{P}{C} codes''.
\newblock \href{https://dx.doi.org/10.1049/cmu2.12456}{IEEE Transactions on
  Communications {\bf 65}, 501--510}~(2016).

\bibitem{wei2016design}
Lai Wei, David~GM Mitchell, Thomas~E Fuja, and Daniel~J Costello.
\newblock ``Design of spatially coupled {L}{D}{P}{C} codes over {G}{F}$(q) $
  for windowed decoding''.
\newblock \href{https://dx.doi.org/10.1109/TIT.2016.2567638}{IEEE Transactions
  on Information Theory {\bf 62}, 4781--4800}~(2016).

\bibitem{zhu2017braided}
Min Zhu, David~GM Mitchell, Michael Lentmaier, Daniel~J Costello, and Baoming
  Bai.
\newblock ``Braided convolutional codes with sliding window decoding''.
\newblock \href{https://dx.doi.org/10.1109/TCOMM.2017.2707073}{IEEE
  Transactions on Communications {\bf 65}, 3645--3658}~(2017).

\bibitem{klaiber2018avoiding}
Kevin Klaiber, Sebastian Cammerer, Laurent Schmalen, and Stephan ten Brink.
\newblock ``Avoiding burst-like error patterns in windowed decoding of
  spatially coupled {L}{D}{P}{C} codes''.
\newblock In 2018 IEEE 10th International Symposium on Turbo Codes \& Iterative
  Information Processing (ISTC).
\newblock \href{https://dx.doi.org/10.1109/ITW46852.2021.9457651}{Pages 1--5}.
\newblock IEEE~(2018).

\bibitem{esfahanizadeh2020multi}
Homa Esfahanizadeh, Lev Tauz, and Lara Dolecek.
\newblock ``Multi-dimensional spatially-coupled code design: Enhancing the
  cycle properties''.
\newblock \href{https://dx.doi.org/10.1109/TCOMM.2020.2971694}{IEEE
  Transactions on Communications {\bf 68}, 2653--2666}~(2020).

\bibitem{tauz2020non}
Lev Tauz, Homa Esfahanizadeh, and Lara Dolecek.
\newblock ``Non-uniform windowed decoding for multi-dimensional
  spatially-coupled {L}{D}{P}{C} codes''.
\newblock In 2020 IEEE International Symposium on Information Theory (ISIT).
\newblock \href{https://dx.doi.org/10.1109/ISIT44484.2020.9174511}{Pages
  485--490}.
\newblock IEEE~(2020).

\bibitem{ram2022decoding}
Eshed Ram and Yuval Cassuto.
\newblock ``On the decoding performance of spatially coupled {L}{D}{P}{C} codes
  with sub-block access''.
\newblock \href{https://dx.doi.org/10.1109/TIT.2022.3152104}{IEEE Transactions
  on Information Theory {\bf 68}, 3700--3718}~(2022).

\bibitem{haah2013commuting}
Jeongwan Haah.
\newblock ``Commuting pauli hamiltonians as maps between free modules''.
\newblock \href{https://dx.doi.org/10.1007/s00220-013-1810-2}{Communications in
  Mathematical Physics {\bf 324}, 351--399}~(2013).

\bibitem{haah2016algebraic}
Jeongwan Haah.
\newblock ``Algebraic methods for quantum codes on lattices''.
\newblock \href{https://dx.doi.org/10.15446/recolma.v50n2.62214}{Revista
  colombiana de matematicas {\bf 50}, 299--349}~(2016).

\bibitem{ollivier2003description}
Harold Ollivier and Jean-Pierre Tillich.
\newblock ``Description of a quantum convolutional code''.
\newblock \href{https://dx.doi.org/10.1103/PhysRevLett.91.177902}{Physical
  Review Letters {\bf 91}, 177902}~(2003).

\bibitem{5695130}
S.~{Kudekar}, T.~J. {Richardson}, and R.~L. {Urbanke}.
\newblock ``Threshold saturation via spatial coupling: Why convolutional
  {L}{D}{P}{C} ensembles perform so well over the {B}{E}{C}''.
\newblock \href{https://dx.doi.org/10.1109/TIT.2010.2095072}{IEEE {T}rans.
  {I}nformation {T}heory {\bf 57}, 803--834}~(2011).

\bibitem{kumar2014threshold}
S.~{Kumar}, A.~J. {Young}, N.~{Macris}, and H.~D. {Pfister}.
\newblock ``Threshold saturation for spatially coupled {L}{D}{P}{C} and
  {L}{D}{G}{M} codes on {B}{M}{S} channels''.
\newblock \href{https://dx.doi.org/10.1109/TIT.2014.2360692}{IEEE {T}rans.
  {I}nformation {T}heory {\bf 60}, 7389--7415}~(2014).

\bibitem{bravyi2024high}
Sergey Bravyi, Andrew~W Cross, Jay~M Gambetta, Dmitri Maslov, Patrick Rall, and
  Theodore~J Yoder.
\newblock ``High-threshold and low-overhead fault-tolerant quantum memory''.
\newblock \href{https://dx.doi.org/10.1038/s41586-024-07107-7}{Nature {\bf
  627}, 778--782}~(2024).

\bibitem{xu2024constant}
Qian Xu, J~Pablo Bonilla~Ataides, Christopher~A Pattison, Nithin Raveendran,
  Dolev Bluvstein, Jonathan Wurtz, Bane Vasi{\'c}, Mikhail~D Lukin, Liang
  Jiang, and Hengyun Zhou.
\newblock ``Constant-overhead fault-tolerant quantum computation with
  reconfigurable atom arrays''.
\newblock \href{https://dx.doi.org/10.1038/s41567-024-02479-z}{Nature
  PhysicsPages 1--7}~(2024).

\bibitem{wang2024atomique}
Hanrui Wang, Pengyu Liu, Daniel~Bochen Tan, Yilian Liu, Jiaqi Gu, David~Z Pan,
  Jason Cong, Umut~A Acar, and Song Han.
\newblock ``Atomique: A quantum compiler for reconfigurable neutral atom
  arrays''.
\newblock In 2024 ACM/IEEE 51st Annual International Symposium on Computer
  Architecture (ISCA).
\newblock \href{https://dx.doi.org/10.1109/ISCA59077.2024.00030}{Pages
  293--309}.
\newblock ~(2024).

\bibitem{panteleev2021degenerate}
Pavel Panteleev and Gleb Kalachev.
\newblock ``Degenerate quantum {L}{D}{P}{C} codes with good finite length
  performance''.
\newblock \href{https://dx.doi.org/10.22331/q-2021-11-22-585}{Quantum {\bf 5},
  585}~(2021).

\bibitem{battaglioni2017design}
M.~{Battaglioni}, A.~{Tasdighi}, G.~{Cancellieri}, F.~{Chiaraluce}, and
  M.~{Baldi}.
\newblock ``Design and analysis of time-invariant {S}{C}-{L}{D}{P}{C}
  convolutional codes with small constraint length''.
\newblock \href{https://dx.doi.org/10.1109/TCOMM.2017.2774821}{IEEE {T}rans.
  {C}ommunications {\bf 66}, 918--931}~(2018).

\bibitem{fossorier2004quasicyclic}
M.~P.~C. {Fossorier}.
\newblock ``Quasicyclic low-density parity-check codes from circulant
  permutation matrices''.
\newblock \href{https://dx.doi.org/10.1109/TIT.2004.831841}{IEEE {T}rans.
  {I}nformation {T}heory {\bf 50}, 1788--1793}~(2004).

\bibitem{roffe_decoding_2020}
Joschka Roffe, David~R. White, Simon Burton, and Earl Campbell.
\newblock ``Decoding across the quantum low-density parity-check code
  landscape''.
\newblock \href{https://dx.doi.org/10.1103/physrevresearch.2.043423}{Physical
  Review Research{\bf 2}}~(2020).

\bibitem{du2022stabilizer}
Julien Du~Crest, Mehdi Mhalla, and Valentin Savin.
\newblock ``Stabilizer inactivation for message-passing decoding of quantum
  {L}{D}{P}{C} codes''.
\newblock In 2022 IEEE Information Theory Workshop (ITW).
\newblock \href{https://dx.doi.org/10.1109/ITW54588.2022.9965902}{Pages
  488--493}.
\newblock IEEE~(2022).

\bibitem{yao2024belief}
Hanwen Yao, Waleed~Abu Laban, Christian H{\"a}ger, Alexandre~Graell i~Amat, and
  Henry~D Pfister.
\newblock ``Belief propagation decoding of quantum ldpc codes with guided
  decimation''.
\newblock In 2024 IEEE International Symposium on Information Theory (ISIT).
\newblock \href{https://dx.doi.org/10.1109/ISIT57864.2024.10619083}{Pages
  2478--2483}.
\newblock IEEE~(2024).

\bibitem{Roffe_LDPC_Python_tools_2022}
Joschka Roffe~(2022).
\newblock
  code:~\href{https://pypi.org/project/ldpc/}{https://pypi.org/project/ldpc/}.

\bibitem{grospellier2021combining}
Antoine Grospellier, Lucien Grou{\`e}s, Anirudh Krishna, and Anthony Leverrier.
\newblock ``Combining hard and soft decoders for hypergraph product codes''.
\newblock \href{https://dx.doi.org/10.22331/q-2021-04-15-432}{Quantum {\bf 5},
  432}~(2021).

\bibitem{SpatiallyCoupledQLDPCCodes}
Siyi Yang~(2025).
\newblock
  code:~\href{https://github.com/SiyiPriscaYang/Spatially-Coupled-QLDPC-Codes}{SiyiPriscaYang/Spatially-Coupled-QLDPC-Codes}.

\end{thebibliography}

\clearpage
\onecolumn
\appendix

\section{Proof of \Cref{exam toric code theo: 1}}
\onecolumn
\begin{proof}
\begin{equation*}
\begin{split}
&\langle \mathbf{A},\mathbf{D}\rangle=\Bigl\langle \left[\begin{array}{cc}
X & I\\
I & I
\end{array}\right], \left[\begin{array}{cc}
I & I\\
I & Z
\end{array}\right]\Bigl\rangle=\left[\begin{array}{cc}
0 & 0\\
0 & 0
\end{array}\right];\\ 
&\langle \mathbf{B},\mathbf{C}\rangle=\Bigl\langle \left[\begin{array}{cc}
X & X\\
Z & I
\end{array}\right], \left[\begin{array}{cc}
I & X\\
Z & Z
\end{array}\right]\Bigl\rangle=\left[\begin{array}{cc}
0 & 0\\
0 & 0
\end{array}\right];
\end{split}
\end{equation*}

\begin{equation*}
\begin{split}
\langle \mathbf{A},\mathbf{C}\rangle+\langle \mathbf{B},\mathbf{D}\rangle
&=\Bigl\langle \left[\begin{array}{cc}
X & I\\
I & I
\end{array}\right], \left[\begin{array}{cc}
I & X\\
Z & Z
\end{array}\right]\Bigl\rangle+\Bigl\langle \left[\begin{array}{cc}
X & X\\
Z & I
\end{array}\right], \left[\begin{array}{cc}
I & I\\
I & Z
\end{array}\right]\Bigl\rangle\\
&=\left[\begin{array}{cc}
0 & 1\\
0 & 0
\end{array}\right]+\left[\begin{array}{cc}
0 & 1\\
0 & 0
\end{array}\right]=\left[\begin{array}{cc}
0 & 0\\
0 & 0
\end{array}\right];
\end{split}
\end{equation*}
\begin{equation*}
\begin{split}
\langle \mathbf{A},\mathbf{B}\rangle+\langle \mathbf{C},\mathbf{D}\rangle
=&\Bigl\langle \left[\begin{array}{cc}
X & I\\
I & I
\end{array}\right], \left[\begin{array}{cc}
X & X\\
Z & I
\end{array}\right]\Bigl\rangle+\Bigl\langle \left[\begin{array}{cc}
I & X\\
Z & Z
\end{array}\right], \left[\begin{array}{cc}
I & I\\
I & Z
\end{array}\right]\Bigl\rangle\\
=&\left[\begin{array}{cc}
0 & 1\\
0 & 0
\end{array}\right]+\left[\begin{array}{cc}
0 & 1\\
0 & 0
\end{array}\right]
=\left[\begin{array}{cc}
0 & 0\\
0 & 0
\end{array}\right];
\end{split}
\end{equation*}
\begin{equation*}
\begin{split}
&\langle \mathbf{A},\mathbf{A}\rangle+\langle \mathbf{B},\mathbf{B}\rangle+\langle \mathbf{C},\mathbf{C}\rangle+\langle \mathbf{D},\mathbf{D}\rangle
=\langle \mathbf{B},\mathbf{B}\rangle+\langle \mathbf{C},\mathbf{C}\rangle\\
=&\Bigl\langle \left[\begin{array}{cc}
X & X\\
Z & I
\end{array}\right], \left[\begin{array}{cc}
X & X\\
Z & I
\end{array}\right]\Bigl\rangle+\Bigl\langle \left[\begin{array}{cc}
I & X\\
Z & Z
\end{array}\right], \left[\begin{array}{cc}
I & X\\
Z & Z
\end{array}\right]\Bigl\rangle\\
=&\left[\begin{array}{cc}
0 & 1\\
1 & 0
\end{array}\right]+\left[\begin{array}{cc}
0 & 1\\
1 & 0
\end{array}\right]=\left[\begin{array}{cc}
0 & 0\\
0 & 0
\end{array}\right].
\end{split}
\end{equation*}

\end{proof}

\section{Constructions}
\onecolumn
\begin{equation}\label{eqn: 3_8_uni_opt}
\mathbf{P}_a = \left[\begin{array}{cccccccc}
  2 & 1 & 3 & 8 & 4 & 8 & 3 & 3 \\
  2 & 0 & 6 & 1 & 6 & 6 & 2 & 5 \\
  6 & 8 & 2 & 0 & 4 & 1 & 5 & 7
\end{array}\right],\ 
\mathbf{P}_b = \left[\begin{array}{cccccccc}
2 & 2 & 6 & 5 & 6 & 3 & 1 & 0 \\
7 & 6 & 2 & 0 & 0 & 4 & 3 & 8 \\
6 & 0 & 0 & 7 & 5 & 8 & 5 & 3 \\
\end{array}\right].
\end{equation}

\begin{equation}\label{eqn: 3_8_uni_ini}
\mathbf{P}_a = \left[\begin{array}{cccccccc}
2 & 3 & 5 & 3 & 4 & 0 & 7 & 0 \\
3 & 6 & 1 & 6 & 6 & 0 & 2 & 8 \\
4 & 8 & 2 & 7 & 8 & 5 & 5 & 1
\end{array}\right],\ 
\mathbf{P}_b = \left[\begin{array}{cccccccc}
3 & 3 & 6 & 6 & 8 & 3 & 2 & 5 \\
1 & 4 & 2 & 0 & 0 & 4 & 7 & 8 \\
5 & 1 & 0 & 7 & 5 & 8 & 6 & 2 \\
\end{array}\right].
\end{equation}

\begin{equation}\label{eqn: 3_8_gd_opt}
\mathbf{P}_a =\left[\begin{array}{cccccccc}
8 & 1 & 2 & 2 & 6 & 8 & 6 & 7 \\
0 & 2 & 4 & 5 & 3 & 7 & 1 & 5 \\
8 & 6 & 8 & 6 & 2 & 2 & 1 & 0
\end{array}\right],\ 
\mathbf{P}_b = \left[\begin{array}{cccccccc}
8 & 8 & 2 & 6 & 6 & 0 & 3 & 0 \\
7 & 6 & 6 & 2 & 1 & 8 & 2 & 5 \\
3 & 1 & 8 & 5 & 2 & 3 & 8 & 7
\end{array}\right].
\end{equation}

\begin{equation}\label{eqn: 3_8_gd_ini}
\mathbf{P}_a = \left[\begin{array}{cccccccc}
0 & 1 & 7 & 2 & 5 & 8 & 6 & 8 \\
0 & 2 & 4 & 8 & 3 & 6 & 0 & 5 \\
3 & 6 & 8 & 6 & 2 & 2 & 1 & 0
\end{array}\right],\ 
\mathbf{P}_b = \left[\begin{array}{cccccccc}
0 & 5 & 1 & 6 & 3 & 3 & 2 & 0 \\
6 & 6 & 8 & 0 & 1 & 8 & 2 & 5 \\
0 & 4 & 8 & 8 & 2 & 6 & 2 & 7
\end{array}\right].
\end{equation}

\begin{equation}\label{eqn: 3_8_gd_m1}
\mathbf{P}_a = \left[\begin{array}{cccccccc}
1 & 3 & 0 & 2 & 0 & 0 & 1 & 2 \\
0 & 0 & 3 & 1 & 1 & 2 & 2 & 2 \\
2 & 0 & 0 & 0 & 2 & 3 & 0 & 1 \\
\end{array}\right],\ 
\mathbf{P}_b = \left[\begin{array}{cccccccc}
1 & 2 & 1 & 3 & 0 & 2 & 1 & 0 \\
1 & 1 & 0 & 0 & 1 & 0 & 2 & 2 \\
2 & 0 & 3 & 0 & 3 & 1 & 0 & 1 \\
\end{array}\right].
\end{equation}

\begin{equation}\label{eqn: 3_8_gd_m12}
\mathbf{P}_a = \left[\begin{array}{cccccccc}
2 & 4 & 3 & 1 & 2 & 3 & 2 & 3 \\
3 & 3 & 0 & 5 & 4 & 5 & 0 & 1 \\
2 & 0 & 2 & 3 & 0 & 0 & 3 & 5
\end{array}\right],\ 
\mathbf{P}_b = \left[\begin{array}{cccccccc}
5 & 1 & 3 & 0 & 2 & 0 & 3 & 5 \\
0 & 0 & 1 & 5 & 3 & 2 & 2 & 3 \\
1 & 5 & 5 & 1 & 3 & 5 & 3 & 2
\end{array}\right].
\end{equation}

\begin{equation}\label{eqn: 3_8_gd_m3}
\mathbf{P}_a = \left[\begin{array}{cccccccc}
    14 & 12 & 3 & 3 & 9 & 11 & 0 & 2 \\
    15 & 6 & 8 & 12 & 1 & 14 & 0 & 7 \\
    4 & 0 & 15 & 10 & 15 & 3 & 11 & 12
\end{array}\right],\ 
\mathbf{P}_b = \left[\begin{array}{cccccccc}
    14 & 9 & 5 & 5 & 12 & 0 & 7 & 3 \\
    12 & 3 & 6 & 2 & 1 & 15 & 0 & 8 \\
    0 & 1 & 15 & 14 & 7 & 2 & 4 & 15
\end{array}\right].
\end{equation}

\begin{equation}\label{eqn: 3_7_uni_opt}
\mathbf{P}_a = \left[\begin{array}{ccccccc}
2 & 5 & 6 & 8 & 0 & 6 & 5 \\
6 & 7 & 2 & 6 & 5 & 0 & 6 \\
7 & 0 & 2 & 1 & 7 & 8 & 5
\end{array}\right],\ 
\mathbf{P}_b = \left[\begin{array}{ccccccc}
2 & 1 & 3 & 2 & 0 & 7 & 6 \\
1 & 8 & 8 & 6 & 8 & 0 & 5 \\
6 & 6 & 2 & 3 & 0 & 5 & 1
\end{array}\right].
\end{equation}

\begin{equation}\label{eqn: 3_7_uni_ini}
\mathbf{P}_a = \left[\begin{array}{ccccccc}
8 & 3 & 1 & 7 & 7 & 2 & 1 \\
2 & 4 & 8 & 1 & 6 & 0 & 4 \\
7 & 5 & 5 & 4 & 3 & 6 & 0 \\
%2 & 0 & 7 & 1 & 3 & 4 & 1 \\
%6 & 4 & 2 & 5 & 1 & 0 & 6 \\
%7 & 3 & 5 & 4 & 7 & 8 & 8 \\
\end{array}\right],\ 
\mathbf{P}_b = \left[\begin{array}{ccccccc}
5 & 0 & 4 & 8 & 1 & 6 & 1 \\
6 & 8 & 1 & 4 & 5 & 4 & 3 \\
3 & 7 & 2 & 7 & 7 & 0 & 2 \\
%6 & 2 & 8 & 1 & 0 & 7 & 4 \\
%0 & 8 & 6 & 7 & 4 & 4 & 5 \\
%3 & 7 & 2 & 3 & 1 & 5 & 1 \\
\end{array}\right].
\end{equation}

\begin{equation}\label{eqn: 3_7_gd_opt}
\mathbf{P}_a = \left[\begin{array}{ccccccc}
6 & 2 & 5 & 1 & 6 & 0 & 8 \\
4 & 3 & 8 & 0 & 1 & 2 & 2 \\
2 & 6 & 0 & 8 & 1 & 4 & 1 
\end{array}\right],\ 
\mathbf{P}_b = \left[\begin{array}{ccccccc}
6 & 1 & 0 & 6 & 7 & 5 & 2 \\
8 & 3 & 0 & 2 & 2 & 4 & 6 \\
1 & 8 & 3 & 6 & 8 & 6 & 0
\end{array}\right]. 
\end{equation}

\begin{equation}\label{eqn: 3_7_gd_ini}
\mathbf{P}_a = \left[\begin{array}{ccccccc}
6 & 2 & 3 & 0 & 6 & 7 & 0 \\
2 & 6 & 8 & 5 & 6 & 8 & 7 \\
2 & 0 & 0 & 8 & 1 & 4 & 1
\end{array}\right],\ 
\mathbf{P}_b = \left[\begin{array}{ccccccc}
2 & 5 & 7 & 6 & 7 & 2 & 2 \\
8 & 0 & 0 & 1 & 6 & 4 & 6 \\
1 & 8 & 3 & 6 & 8 & 0 & 0
\end{array}\right].  
\end{equation}

\begin{equation}\label{eqn: 3_7_gd_m1}
\mathbf{P}_a = \left[\begin{array}{ccccccc}
0 & 3 & 1 & 3 & 2 & 1 & 2 \\
1 & 0 & 2 & 3 & 1 & 3 & 0 \\
2 & 3 & 0 & 0 & 3 & 2 & 1 \\
\end{array}\right],\ 
\mathbf{P}_b = \left[\begin{array}{ccccccc}
1 & 3 & 3 & 0 & 0 & 2 & 1 \\
2 & 1 & 0 & 2 & 3 & 3 & 1 \\
3 & 0 & 2 & 1 & 3 & 1 & 2 \\
\end{array}\right].  
\end{equation}

\begin{equation}\label{eqn: 3_7_gd_m12}
\mathbf{P}_a = \left[\begin{array}{ccccccc}
0 & 2 & 0 & 3 & 2 & 5 & 3 \\
5 & 0 & 3 & 4 & 1 & 2 & 5 \\
0 & 4 & 2 & 2 & 3 & 3 & 1 \\
\end{array}\right],\ 
\mathbf{P}_b = \left[\begin{array}{ccccccc}
2 & 4 & 5 & 2 & 0 & 5 & 0 \\
3 & 2 & 2 & 4 & 2 & 3 & 5 \\
4 & 3 & 0 & 0 & 5 & 2 & 4
\end{array}\right].  
\end{equation}

\begin{equation}\label{eqn: 3_7_gd_m3}
\mathbf{P}_a = \left[\begin{array}{ccccccc}
4 & 10 & 3 & 12 & 13 & 1 & 5 \\
7 & 8 & 8 & 11 & 7 & 11 & 3 \\
1 & 6 & 12 & 3 & 2 & 0 & 15 \\
\end{array}\right],\ 
\mathbf{P}_b = \left[\begin{array}{ccccccc}
4 & 11 & 13 & 12 & 7 & 3 & 12 \\
3 & 13 & 15 & 10 & 6 & 1 & 0 \\
4 & 1 & 5 & 7 & 14 & 15 & 3 \\
\end{array}\right].  
\end{equation}

\begin{equation}\label{eqn: HGP_lifting}
\mathbf{L}_1 = \left[\begin{array}{ccccccc}
0 & 0 & 5 & 0 & 0 & 0 & 0 \\
0 & 1 & 2 & 3 & 4 & 5 & 6 \\
0 & 2 & 4 & 6 & 8 & 1 & 3 \\
\end{array}\right], \mathbf{L}_2 = \left[\begin{array}{cccccccc}
0 & 0 & 0 & 0 & 0 & 0 & 0 & 0 \\
0 & 1 & 2 & 3 & 4 & 5 & 6 & 7 \\
0 & 2 & 4 & 6 & 9 & 1 & 3 & 5 \\
\end{array}\right].
\end{equation}

\end{document}